\let\cal\mathscr
\newtheorem{thm}{Theorem}[section]
\newtheorem{lemma}[thm]{Lemma}
\newtheorem{prop}[thm]{Proposition}
\newtheorem{cor}[thm]{Corollary}
\newtheorem{defin}[thm]{Definition}
\newtheorem{rem}[thm]{Remark}
\newtheorem{exam}[thm]{Example}
\newtheorem{problem}[thm]{Problem}
\newcommand{\PPP}{{\mathbb{P}}}
\newcommand{\R}{{\mathbb{R}}}
\newcommand{\Z}{{\mathbb{Z}}}
\newcommand{\N}{{\mathbb{N}}}
\newcommand{\C}{{\mathbb{C}}}
\newcommand{\cA}{{\mathcal{A}}}
\newcommand{\cB}{{\mathcal{B}}}
\newcommand{\cD}{{\mathcal{D}}}
\newcommand{\cC}{{\mathcal{C}}}
\newcommand{\cE}{{\mathcal{E}}}
\newcommand{\cH}{{\mathcal{H}}}
\newcommand{\cL}{{\mathcal{L}}}
\newcommand{\cN}{{\mathcal{N}}}
\newcommand{\cP}{{\mathcal{P}}}
\newcommand{\cS}{{\mathcal{S}}}
\newcommand{\cV}{{\mathcal{V}}}
\def\id{{1\hskip-2.5pt{\rm l}}}
\newcommand{\supp}{{\it supp\,}}
\newcommand{\om}{{\omega}}
\newcommand{\Om}{\Omega}
\newcommand{\bigo}{\mathcal{O}}
\newcommand{\Hilb}{\mathcal{H}}
\newcommand{\tr}{{{\rm tr}}}
\renewcommand \leq {\leqslant}
\renewcommand \geq {\geqslant}
\renewcommand \tilde[1] {\stackrel{\sim}{\smash{#1}\rule{0pt}{1.1ex}}}
\DeclareMathOperator{\Vol}{Vol}
\DeclareMathOperator{\End}{End}
\DeclareMathOperator{\Tr}{Tr}
\DeclareMathOperator{\Ker}{Ker}
\DeclareMathOperator{\Spec}{Spec}
\DeclareMathOperator{\diag}{diag}
\newcommand \< {\mathcal{h}}
\renewcommand \> {\mathcal{i}}
\newcommand \cinf {\CC^\infty}
\renewcommand \epsilon {\varepsilon}
\newcommand \CC {{\cal C}}
\newcommand \HH {{\cal H}}
\newcommand \TT {{\cal T}}
 \def\cC{\mathscr{C}}
 \def\cD{\mathscr{D}}
\def\cL{\mathscr{L}}
\def\Im{{\rm Im}}
\newcommand{\Prod}{{\mathcal Prod}}
\newcommand \fl {\rightarrow}
\newcommand \ignore[1] {}
\crefname{equation}{}{}
\crefname{lemma}{Lemma}{Lemmas}
\crefname{thm}{Theorem}{Theorems}
\crefname{cor}{Corollary}{Corollaries}
\crefname{ex}{Example}{Examples}
\crefname{defi}{Definition}{Definitions}
\crefname{prop}{Proposition}{Propositions}
\crefname{section}{Section}{Sections}
\crefname{subsection}{Section}{Sections}
\crefname{rmk}{Remark}{Remarks}
\crefname{nota}{Notation}{Notations}
\begin{document}

\title{Spectral aspects of the Berezin transform}

\renewcommand{\thefootnote}{\alph{footnote}}

\author{\textsc Louis Ioos$^{a}$, Victoria Kaminker$^{b}$, Leonid Polterovich$^{b}$ and Dor Shmoish$^{b}$ }

\footnotetext[1]{Partially supported by the European Research Council Starting grant 757585}
\footnotetext[2]{Partially supported by the European Research Council Advanced grant 338809 }

\date{\today}

\maketitle

\begin{abstract}
We discuss the Berezin transform, a Markov operator associated
to positive operator valued measures (POVMs),
in a number of contexts including the Berezin-Toeplitz quantization,
Donaldson's dynamical system on the space of Hermitian inner products on
a complex vector space, representations of finite groups, and quantum noise.
In particular, we calculate the spectral gap for quantization in terms of the
fundamental tone of the phase space. Our results confirm a prediction of Donaldson for the spectrum of the $Q$-operator on K\"{a}hler manifolds with constant scalar curvature,
and yield exponential convergence of Donaldson's iterations to the fixed point.
Furthermore, viewing POVMs as data clouds, we study their spectral features via geometry of
measure metric spaces and the diffusion distance.
\end{abstract}

\tableofcontents

\section{Introduction} Given a function $f$ on a classical phase space $X$, let us first quantize it and then dequantize. This operation on functions, $f \mapsto \cB f$, is called {\it the Berezin transform}. As a result of this operation, the function $f$ blurs on the phase space. The intuition behind this is as follows \footnote{We thank S.~Nonnenmacher for this explanation.}: assume that $f$ is the Dirac delta-function at a point $x \in X$. Its quantization is a coherent state at $x$, whose dequantization is approximately a Gaussian centered at $x$. In the framework of the Berezin-Toeplitz quantization of closed K\"{a}hler
manifolds, $\cB$ is known to be a Markov operator with finite-dimensional image, and is closely related to the Laplace-Beltrami operator $\Delta$ of the K\"ahler manifold. In fact, the Berezin transform has the following asymptotic expansion as $\hbar\to 0$, due to Karabegov and Schlichenmaier \cite{KS01}
\footnote{Note that after renormalization, there is a missing
factor of $1/2$ in front of
the second term of the analogous formula in \cite[(1.2)]{KS01}.}:
\begin{equation}\label{eq-asymp} \cB_\hbar(f)= f-\frac{\hbar}{4\pi}\Delta f + \bigo(\hbar^2)\;,
\end{equation}
for every smooth function $f$ on $X$, with remainder depending on
$f$ and where $\hbar$ stands for the Planck constant
(see \cref{subsec-BT} for notations and conventions).

We focus on the spectral properties of $\cB$. For fixed $\hbar$,
this operator factors through a finite-dimensional space and hence its spectrum consists of a finite collection of points lying in the interval $[0,1]$. Moreover, multiplicities of positive eigenvalues are finite, and $1$ is the maximal eigenvalue corresponding to the constant function. Write its spectrum (with multiplicities) in the form $$1=\gamma_0
\geq \gamma_1 \geq \gamma_2\geq \dots \geq \gamma_k \geq \dots \geq 0\;.$$
The quantity $\gamma:= 1-\gamma_1$ is called {\it the spectral gap}, a fundamental characteristic of a Markov chain responsible for the rate of convergence to the stationary distribution. Our first result, Theorem \ref{thm-quant}, implies that in the context of the Berezin-Toeplitz quantization, the spectral gap $\gamma$ of the Berezin transform equals
\begin{equation}\label{eq-gap-intro}
\gamma = \frac{\hbar}{4\pi}\lambda_1 + \bigo(\hbar^2)\;,
\end{equation}
where $\lambda_1$ stands for the first eigenvalue of $\Delta$.
Note that the upper bound on the gap readily follows from
\eqref{eq-asymp}. The proof
follows a work of Lebeau and Michel \cite{LM} on semiclassical random walks on manifolds with extra ingredients such as an asymptotic expansion for the Bergman kernel due to Dai, Liu and Ma \cite{DLM06},
a comparison between the Berezin transform and the heat operator motivated by the work of Liu and Ma
\cite{LM07},
and a refined version of the above-mentioned Karabegov-Schlichenmaier asymptotic expansion \cite{KS01}. In fact, Theorem \ref{thm-quant} shows much more than \eqref{eq-gap-intro}, namely that one can approximate the
full spectrum of $\Delta$, as well as the associated eigenfunctions, with those of $\cB$.

Let us point out that the proof of Theorem \ref{thm-quant}
can be extended
to Berezin-Toeplitz quantization of closed
symplectic manifolds, using the quantum spaces given by the
eigenstates corresponding to the small eigenvalues of
the renormalized Bochner Laplacian. This uses the
associated generalized Bergman kernel of Ma and Marinescu
\cite{MM08} and asymptotic estimates refining those
of Ma, Marinescu and Lu \cite{LMM16}
(see the discussion at the end of Section \ref{subsec-BT}).

\medskip

The Berezin transform is defined in the more general context of positive operator valued measures (POVMs).
In fact, the Berezin-Toeplitz quantization is nothing else but the integration over a certain POVM on the phase space $M$
with values in the space of quantum observables, and the dequantization is the dual operation \cite{La,CP}. In addition to quantization, POVMs appear in quantum mechanics in another setting: they model quantum measurements \cite{Busch-2}. Interestingly enough, within this model the spectral gap of the Berezin transform corresponding to a POVM admits two different interpretations: it measures the minimal magnitude of quantum noise production, and it equals the spectral gap of the Markov chain corresponding to repeated quantum measurements (see Section \ref{sec-noise} for details).

\medskip

Another theme of this paper is related to Donaldson's program \cite{D} of developing approximate
methods for detecting canonical metrics on K\"{a}hler manifolds. Interestingly enough, our study
of the Berezin transform yields the asymptotic behaviour of the spectrum and of the eigenfunctions of the $Q$-operator, a geometric operator arising in this program, for   K\"ahler metrics of constant
scalar curvature. This behaviour, which  was predicted by Donaldson in \cite{D},
is stated in Theorem \ref{cor-quant} below.

Additionally, Donaldson discovered in \cite{D} a remarkable
class of dynamical systems on the space of all Hermitian products on a given complex vector space.
Section \ref{sec-don} deals with the  spectrum of the linearization of such a system  at a fixed point.
We show that it can be identified with the quantum channel associated
to a certain POVM. Using the positivity of the associated spectral
gap and under certain natural assumptions, we prove that
this linearization is contracting, which confirms Donaldson's
prediction via numerical computations in \cite[\S\,3]{D}.
By the Grobman-Hartman theorem
and earlier results of Donaldson,
this implies in particular
that the iterations of this system converge exponentially fast to its
fixed point (see Theorem \ref{thm-expcvcor}), and not only for "almost
all initial conditions", as predicted in \cite[\S\,4.1]{D}.
The use of Hartman's theorem in
a related context has been suggested by Fine in
\cite{Fin12}.

\medskip

This naturally brings us, in Section \ref{sec-bestfit}, to a geometric viewpoint at POVMs.
Following Oreshkov and Calsamiglia \cite[VII.C]{OC}, we encode them as probability measures in the space of quantum states $\cS$ equipped with the Hilbert-Schmidt metric. It turns out that the spectral gap admits a transparent description in terms of the geometry of such metric measure spaces and exhibits a robust behaviour under perturbations of POVMs in the Wasserstein metric. In a similar spirit, one can consider a POVM as a data cloud in $\cS$, which leads us to a link between the spectral gap and the diffusion distance, a notion coming from geometric data analysis.

Section \ref{sec-groups} contains a case study of POVMs associated to irreducible unitary representations of finite groups. In this case the spectrum of the Berezin transform and the diffusion distance associated to the corresponding Markov chain can be calculated explicitly via the character table of the group, and their properties reflect algebraic features. In particular, we prove that any non-trivial irreducible representation of a simple group has a strictly positive spectral gap (see Corollary \ref{cor-simplegr}).

\section{Preliminaries}\label{prel}
The mathematical model of quantum mechanics starts with a complex Hilbert space $\Hilb$. In what follows
we consider finite-dimensional Hilbert spaces only. Observables are represented by Hermitian operators whose space is denoted by $\cL(\Hilb)$.  Quantum states are provided by {\it density operators}, i.e., positive trace-one operators $\rho \in \cL(\Hilb)$. They form a subset $\cS(\Hilb) \subset \cL(\Hilb)$.
{\bf Notation:} We write $((A,B))$ for the scalar product $\tr(AB^*)= \tr(AB)$ on $\cL(\Hilb)$.

Let $\Omega$ be a set equipped with a $\sigma$-algebra $\cC$ of its subsets. By default,
we assume that $\Omega$ is a Polish topological space (i.e., it is homeomorphic to
a complete metric space possessing a countable dense subset) and $\cC$ is the Borel $\sigma$-algebra.

An $\cL(\Hilb)$-valued {\it positive operator valued measure} $W$
on $(\Omega,\cC)$, which we abbreviated to POVM,
is a countably additive map $W\colon \cC \to \cL(\Hilb)$ which takes each subset $X \in \cC$ to a positive operator $W(X) \in \cL(\Hilb)$ and which is normalized by $W(\Omega) = \id$.
According to \cite{CDS}, every $\cL(\Hilb)$-valued POVM possesses a  density with respect to some probability measure $\alpha$ on $(\Omega, \cC)$, that is having the form
\begin{equation}\label{eq-POVM-density}
dW(s) = nF(s) d\alpha(s)\;,
\end{equation}
where $n= \dim_{\C} \Hilb$ and $F: \Omega \to \cS(\Hilb)$ is a measurable function.

A POVM $W$ given by formula \eqref{eq-POVM-density} is called {\it pure} if
\begin{itemize}
\item[{(i)}] for every $s \in \Omega$  the state $F(s)$ is pure, i.e. a rank one projector;
\item[{(ii)}] the map $F:\Omega \to \cS(\Hilb)$ is one to one.
\end{itemize}
Pure POVMs, under various names, arise in several areas of mathematics including the Berezin-Toeplitz quantization , convex geometry (see \cite{GM} for the notion of an isotropic measure and \cite{AS} for the resolution of identity associated to John and L\"{o}wner ellipsoids), signal processing (see \cite{EF} for a link between tight frames and quantum measurements) and Hamiltonian group actions \cite{FM}. When $\Omega$ is a finite set, a pure POVM with a given measure $\alpha$ exists if and only if the measure $\alpha(\{s\})$ of each point $s \in \Omega$ is $\leq 1/n$, see \cite{FM} for a detailed account on the structure of the moduli spaces of pure POVMs on finite sets up to unitary conjugations.

Let us introduce the  main character of our story, the spectral gap of a POVM of the form \eqref{eq-POVM-density}.
Define a map $T: L_1(\Omega, \alpha) \to \cL(\Hilb)$ by
$$T(\phi)= \int_\Omega \phi\ dW = n\int_\Omega \phi(s) F(s) d\alpha(s)\;.$$
(here and below we work with spaces of real-valued functions).
The dual map $T^*: \cL(\Hilb) \to L_{\infty} (\Omega, \alpha)$ is given by $T^*(A)(s) = n((F(s),A))$.
Since $L_\infty \subset L_1$, we have an operator
$$\cE= \frac{1}{n}TT^*: \cL(\Hilb) \to \cL(\Hilb)\;,$$
\begin{equation}\label{eq-e} \cE(A) = n\int_{\Omega} ((F(s),A))F(s) d\alpha(s)\;.\end{equation}
Observe that $\cE$ is a unital trace-preserving completely positive map. In the terminology of \cite[Example 5.4]{Hayashi}, this is an example of an entanglement-breaking quantum channel.

Furthermore, set $$\cB=\frac{1}{n}T^*T: L_1(\Omega,\alpha) \to L_\infty(\Omega,\alpha)\;,$$ \begin{equation}\label{eq-b}\cB(\phi)(t)= n\int_\Omega \phi(s)((F(s),F(t))) d\alpha(s)\;.\end{equation}
Observe that the image of $\cB$ is finite-dimensional as $\cB$ factors through $\cL(\Hilb)$.

Write $(\phi,\psi):= \int_\Omega \phi\psi\, d\alpha$ for the scalar product on $L_2(\Omega,\alpha)$, and $\|\cdot\|$
for the associated norm.
Note that $\cB$ is defined as an operator on  $L_2(\Omega,\alpha)$ and its spectrum belongs to $[0,1]$, with $1$
being the maximal eigenvalue associated with the constant function.

Note now that positive eigenvalues of $\cE$ and $\cB$ coincide. Indeed, $T^*$ maps isomorphically
an eigenspace corresponding to a positive eigenvalue of $\cE$ to the eigenspace of $\cB$ corresponding
to the same eigenvalue. Write
$$1=\gamma_0 \geq \gamma_1 \geq \gamma_2 \geq \gamma_k
 \geq \dots \geq 0$$
for the eigenvalues of $\cB$ with multiplicites.

\begin{defin}\label{def-gap}
{\rm The non-negative number $$\gamma(W):= 1-\gamma_1\geq 0$$
is called {\it the spectral gap} of the POVM $W$.}
\end{defin}
With slight abuse of terminology, we sometimes refer to $\gamma(W)$ as the spectral gap of
operators $\cB$ and $\cE$.

Several aspects of this paper concern the positivity of such a
spectral gap, and are related to the theory of Markov chains with
state space $\Omega$. The notion of the spectral gap, while seemingly being unnoticed in the context of POVMs, naturally appears in the study of Markov chains, where it is responsible
for the rate of convergence to the stationary measure. In Section \ref{subsec-BT}, Markov chains will provide
a useful link between Berezin-Toeplitz quantization
and semiclassical random walks studied by Lebeau and Michel \cite{LM},
to get a semi-classical estimation of the spectral gap of the
Bereztin-Toeplitz POVM. In fact, the result of Lebeau and Michel
can be applied directly to get a lower bound on this spectral gap,
giving a weak version of Theorem \ref{thm-quant}.
Finally, as we have already mentioned, POVMs play a central role in the mathematical
theory of quantum measurements, and, interestingly enough, Markov chains arise in the context of repeated quantum measurements, see Section \ref{sec-noise}. Let us recall some basic notions from the theory of Markov chains \cite{Bak, Ru}.
A {\it Markov kernel}  on $\Omega$ is a map $x \mapsto \sigma_x$
sending a point $x \in \Omega$ to a probability measure $\sigma_x$ on $(\Omega,\cC)$ such that $x \mapsto \sigma_x(A)$
is a measurable function for every $A \in \cC$. With every Markov kernel $\sigma$ one associates a Markov chain, i.e., a sequence of $\Omega$-valued random variables $\zeta_k$, $k = 0,1,\dots$ defined on the same probability space, such that for every $n$ and every sequence $x_i \in \Omega$ the conditional probabilities satisfy $$\mathbb{P}(\zeta_n \;|\; \zeta_{n-1}=x_{n-1},\dots, \zeta_0 = x_0) = \sigma_{x_{n-1}}\;.$$ If $\zeta_0$ is distributed according to a probability measure $\nu_0$ on $\Omega$, then $\zeta_1$ is distributed according to $\nu_1$ given by the formula $$\nu_1(A) = \int_\Omega \sigma_x(A) d\nu_0(x)\; \forall A \in \cC.$$ If $\nu_0=\nu_1$, we say that $\nu_0$ is a stationary measure for the Markov chain.

The Markov kernel is called {\it reversible} with respect to a measure $\nu$ on $\Omega$ if
$$d\nu(x) d\sigma_x(y) =  d\sigma_y(x) d\nu(y)\;,$$ as measures on $\Omega \times \Omega$. In this case
$\nu$ is a stationary measure of the Markov chain. Given a $\nu$-reversible Markov kernel $\sigma$ with the state space $\Omega$, define {\it the Markov operator} $\cA$ on $L_1(\Omega,\nu)$ by
\begin{equation}\label{eq-cP}
\cA(\phi)(x) = \int_\Omega \phi(y)d\sigma_x (y)\;.
\end{equation}
Note that $\cA$ preserves positivity: $\cA(\phi)\geq 0$ for $\phi \geq 0$, $\cA(1)=1$, and its operator
norm is $\leq 1$. The reversibility readily yields that the Markov operator $\cA$ is self-adjoint on $L_2(\Omega, \nu)$. Denote by $1^\bot$ the orthogonal complement to the constant function $1$
on $\Omega$, i.e., the space of functions with zero mean. Then $\cA$ preserves $1^\bot$. By definition, the spectral gap $\gamma(\cA)$ is defined as
\begin{equation}\label{gap-Markov}
\gamma(\cA) = 1- \|\cA|_{1^\bot}\| = \inf_{\phi \neq 0} \frac{(\phi -\cA\phi,\phi)}{(\phi,\phi)-(\phi,1)^2}\;.
\end{equation}
With this language, the operator $\cB$ given by \eqref{eq-b} is a Markov operator with the Markov kernel
\begin{equation}\label{eq-kernel} t \mapsto n((F(s),F(t)))d\alpha(s)\;.\end{equation}
It is reversible with respect to the stationary measure $\alpha$.

%%%%%%%%%%%%%%%%%%%%%%%%%%%%%%%%%%%%%%%%%%%%%

\section{Spectral gap for quantization}\label{subsec-BT}

\subsection{Berezin transform vs. Laplace-Beltrami operator} \label{subsec-BLB}
Pure POVMs naturally appear in the context of
Berezin-Toeplitz quantization of
closed K\"{a}hler manifolds $(X,\om)$, which are
\emph{quantizable} in
the sense that the cohomology class $[\om]$ of the
K\"{a}hler symplectic form $\om\in\Om^2(X,\R)$ is integral.
Recall that this last condition is equivalent to the
existence of a holomorphic Hermitian line bundle $(L,h)$
over $X$ whose Chern connection has curvature $-2\pi i\omega$.

Let us briefly recall the construction of this quantization
(see \cite{BMS,S,LF} for prelimiaries). Let $X$ be a
quantizable closed K\"{a}hler manifold with $\dim_\C X=d$,
and let $(L,h)$ be a holomorphic Hermitian line bundle
as above. Write $L^p$ for the $p$-th tensor power of $L$,
and $h^p$ for the Hermitian metric on $L^p$ induced by $h$,
for any $p\in\N^*$
\footnote{Our convention is that the set of natural numbers $\N$ contains $0$.
We write $\N^*$ for strictly positive natural numbers.}.
Then the Hilbert space of quantum states in the space $\Hilb_p$
of global holomorphic sections of $L^p$, together with
the $L_2$-inner product induced by the Hermitian metric $h^p$
on $L^p$ and the \emph{Liouville measure} $dv_X$
associated to the canonical volume form $\omega^d/d!$.
We set $n_p=\dim_\C\cH_p$.
The quantity
$\hbar=1/p$ plays the role of the Planck constant, so that
the classical limit is given by $p \to +\infty$.
For all $p\in\N^*$ large enough, we define a pure
$\cL(\Hilb_p)$-valued POVM on $X$ through its density
\eqref{eq-POVM-density} by the formula
%\todo{more details added to the description of BT quantization
%to help with Remark \ref{weightedBT} and Remark \ref{thmquantnu}}
\begin{equation}\label{BTPOVM}
dW_{p} = n_p\,F_p\,d\alpha_p\,,
\end{equation}
where
the map $F_p: X\to \cS(\Hilb_p)$ sends a point $x \in X$
to the \emph{coherent state projector} with kernel the space
of sections vanishing at $x\in X$, and where
the measure $\alpha_p$ is given at any $x\in X$ by
\begin{equation}\label{eq-alphahbar}
d\alpha_p(x) = \frac{R_p(x)}{n_p}dv_X(x)\;,
\end{equation}
with density $R_p : X \to \R$ called the
\emph{Rawnsley function}.
From the viewpoint of complex geometry, the map $F_p$ is
given by the Kodaira map and
the Rawnsley
function is given by the value on the diagonal
of the Bergman kernel, i.e. the Schwarz kernel with respect to
$dv_X$ of the orthogonal projection
$\Pi_p:L^2(X,L^p)\rightarrow\cH_p$.
By the Kodaira embedding theorem,
for all $p\in\N^*$ large enough, the map $F_p$
is well defined and injective,
and we have $R_p(x)\neq 0$ for all $x\in X$, so that the
$\cL(\Hilb_p)$-valued measure $W_p$ defines
a pure POVM in the sense of \cref{prel}, called the
\emph{Berezin-Toeplitz POVM}.

In this context, the operator
$\cB_p:= \frac{1}{n_p} T^*_p T_p$ given by formula \eqref{eq-b}
is known as the {\it Berezin transform}.
Recall that for any $p\in\N^*$, the operator
$\cB_p$ has a finite-dimensional image, and all its
eigenvalues lie in the interval $[0,1]$. There is a finite number of positive eigenvalues with multiplicities,
while $0$ has infinite multiplicity. Write $$1=\gamma_{0,p}
\geq \gamma_{1,p} \geq \gamma_{2,p}\geq \dots \geq \gamma_{k,p} \geq \dots \geq 0$$
for the eigenvalues of $\cB_p$ with multiplicities.

Let $\Delta f= -\text{div} \nabla f$ be the (positive) Laplace-Beltrami operator associated with the Kähler metric, acting
on functions on $X$ with eigenvalues
\begin{equation}\label{evLB}
0=\lambda_0 < \lambda_1 \leq\lambda_2\leq \dots\leq\lambda_k\leq \dots\;.
\end{equation}

\begin{thm}\label{thm-quant}
For every integer $k\in\N$, we have the following
asymptotic estimate as $p\to +\infty$,
\begin{equation}\label{eq-LB}
1-\gamma_{k,p} =\frac{1}{4\pi p}\lambda_k + \bigo(p^{-2})\;.
\end{equation}
Furthermore, every sequence in $p\in\N^*$ of
$L_2(X,\alpha_p)$-normalized eigenfunctions of $\cB_p$
corresponding to the eigenvalue $\gamma_{k,p}$ contains a
subsequence converging to an eigenfunction of the Laplace-Beltrami
operator corresponding to $\lambda_k$ in the $\CC^\infty$-sense.
\end{thm}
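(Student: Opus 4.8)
The plan is to follow the strategy of Lebeau--Michel for semiclassical random walks, replacing their walk operator with the Berezin transform $\cB_p$ and feeding in the Bergman kernel asymptotics of Dai--Liu--Ma. The starting point is a refined version of the Karabegov--Schlichenmaier expansion \eqref{eq-asymp}: one wants not just the pointwise expansion $\cB_p(f) = f - \frac{1}{4\pi p}\Delta f + \bigo(p^{-2})$ but control of the remainder in suitable Sobolev or $\CC^k$ norms, uniformly on bounded sets of functions. Concretely, I would write $\cB_p = \mathrm{Id} - \frac{1}{4\pi p}\Delta + \frac{1}{p}E_p$ where $E_p$ is a family of operators that is uniformly bounded (with all derivatives) as $p \to \infty$ in the appropriate operator norms, using the off-diagonal decay of the Bergman kernel $|((F_p(s),F_p(t)))|$ and Laplace-method/stationary-phase estimates near the diagonal. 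This reduces the eigenvalue statement to a perturbation-theoretic comparison between $\cB_p$ and the heat-type operator, much as in \cite{LM07,LM}.

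The second ingredient is a two-sided comparison with the heat semigroup $e^{-t\Delta/(4\pi p)}$ (or a resolvent comparison) showing that $\cB_p$ and this operator have the same low-lying spectrum up to $\bigo(p^{-2})$. For the spectral statement I would argue as follows. First, the upper bound $1 - \gamma_{k,p} \le \frac{1}{4\pi p}\lambda_k + \bigo(p^{-2})$ follows from the min-max principle applied to $\mathrm{Id} - \cB_p$, using the $k+1$ genuine Laplace eigenfunctions $\varphi_0,\dots,\varphi_k$ as test functions and the refined expansion to estimate $((\mathrm{Id} - \cB_p)\varphi_j, \varphi_j)_{\alpha_p}$ and the cross terms; one must also check that the $L_2(\alpha_p)$ inner product is close to the Riemannian one, which again follows from $R_p/n_p \to 1$ uniformly with derivatives (Bergman kernel on the diagonal). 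The matching lower bound is the harder direction: here I would show that any eigenfunction $\psi_{k,p}$ of $\cB_p$ with eigenvalue near $1$ is, after the expansion, an approximate eigenfunction of $\Delta$ with approximate eigenvalue $4\pi p(1-\gamma_{k,p})$, so the $\gamma_{k,p}$ cannot accumulate between consecutive $\lambda_j$; combined with the upper bound and a dimension/counting argument (the number of eigenvalues of $\cB_p$ in a fixed small window $[1-\delta/p, 1]$ equals the number of $\lambda_j \le \delta$ for $p$ large) this pins down $1 - \gamma_{k,p}$ to $\frac{1}{4\pi p}\lambda_k + \bigo(p^{-2})$ index by index. The main obstacle is obtaining the remainder estimate in \eqref{eq-asymp} with enough uniformity — control of $E_p$ not merely pointwise but as a bounded operator between Sobolev spaces, including its commutators with $\Delta$ — since the crude Karabegov--Schlichenmaier statement is only an asymptotic expansion of $\cB_p(f)(x)$ for fixed $f$.

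For the convergence of eigenfunctions, I would proceed by compactness. Let $\psi_{k,p}$ be $L_2(X,\alpha_p)$-normalized eigenfunctions, $\cB_p \psi_{k,p} = \gamma_{k,p}\psi_{k,p}$. Rewriting the eigenequation as $\Delta \psi_{k,p} = 4\pi p(1 - \gamma_{k,p})\psi_{k,p} - E_p \psi_{k,p}$ and using $4\pi p(1-\gamma_{k,p}) \to \lambda_k$ together with the uniform boundedness of $E_p$, elliptic regularity (bootstrapping on $\Delta$) gives uniform bounds on $\|\psi_{k,p}\|_{H^s}$ for every $s$; Rellich then yields a $\CC^\infty$-convergent subsequence $\psi_{k,p} \to \psi_\infty$. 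Passing to the limit in the eigenequation, and using that the inner products $(\cdot,\cdot)_{\alpha_p}$ converge to the Riemannian $L_2$ product (so $\|\psi_\infty\|_{L_2} = 1$), shows $\Delta \psi_\infty = \lambda_k \psi_\infty$ with $\psi_\infty \neq 0$, i.e. $\psi_\infty$ is a genuine Laplace eigenfunction for $\lambda_k$. One subtlety to address is that the elliptic estimates must be applied with the operator $\cB_p$ whose image is only finite-dimensional; this is handled by working with the self-adjoint realization on $L_2(X,\alpha_p)$ and noting that on the relevant spectral subspace $\cB_p$ coincides with a genuine pseudodifferential-type operator up to $\bigo(p^{-\infty})$, so the regularity theory of $\Delta$ applies to $\psi_{k,p}$ directly.
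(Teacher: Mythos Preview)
Your overall strategy matches the paper's: refined Karabegov--Schlichenmaier expansion, comparison with the heat semigroup, min--max for the upper bound, approximate-eigenfunction argument plus counting for the lower bound, and Rellich compactness for the eigenfunction convergence. The structure is right.

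There is, however, a concrete gap in the step you flag as ``the main obstacle,'' and your proposed resolution does not work as written. You want to decompose $\cB_p = \mathrm{Id} - \frac{1}{4\pi p}\Delta + \frac{1}{p}E_p$ with $E_p$ \emph{uniformly bounded on Sobolev spaces}, and then bootstrap via $\Delta\psi_{k,p} = 4\pi p(1-\gamma_{k,p})\psi_{k,p} + 4\pi E_p\psi_{k,p}$. But $E_p = p(\cB_p - \mathrm{Id}) + \frac{1}{4\pi}\Delta$ is the sum of a bounded operator and the unbounded Laplacian; it cannot be bounded on any fixed $H^m$. The refined expansion only gives $\|E_p f\|_{H^m} \leq C p^{-1}\|f\|_{H^{m+4}}$: the remainder \emph{loses four derivatives}. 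Elliptic regularity gains only two, so your bootstrap reads $\|\psi\|_{H^{m+2}} \leq C + Cp^{-1}\|\psi\|_{H^{m+4}}$, which does not close (iterating pushes the uncontrolled norm ever higher, and the crude a priori bound $\|\psi\|_{H^m}\lesssim p^{m/2}$ coming from the Bergman kernel is too weak to save it).

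The paper circumvents this precisely by \emph{not} attempting to make the Karabegov--Schlichenmaier remainder a bounded operator. Instead it proves (Proposition~\ref{boundexp}) that $\|(e^{-\Delta/4\pi p} - \cB_p)f\|_{H^m} \leq C_m p^{-1}\|f\|_{H^m}$ with \emph{no loss of derivatives}, by exploiting that the Bergman kernel and the heat kernel have matching first two terms in their off-diagonal expansions. The a priori Sobolev bound on eigenfunctions (Lemma~\ref{hjrefprop}) then follows by writing $p(e^{-\Delta/4\pi p} - \cB_p)f_p = p(1-\mu_p)f_p - \Delta F(\Delta/p)f_p$ with $F(s)=4\pi(1-e^{-s/4\pi})/s$, and combining the $H^m$-bound on the left with a functional-calculus lower bound $\|F(\Delta/p)f_p\|_{H^{2m}} \geq \epsilon\|f_p\|_{H^{2m}}$ (using $\inf_{s>0}\{F(s)+\mu_p-e^{-s/4\pi}\}>0$ when $\mu_p\to 1$). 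This induction in $m$ gains two derivatives per step and closes. You correctly mention the heat-operator comparison in passing, but you do not use it where it is actually needed; that is the missing idea.
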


The proof of Theorem \ref{thm-quant}
is given in Section \ref{specsec}.
Note that in the context of \cref{prel}, \cref{thm-quant} is
equivalent to the same statement via $T^*_p$
for the operator $\cE_p:\cL(\cH_p)\to\cL(\cH_p)$ defined from
$\cB_p$ by the formula \cref{eq-e}. Let us emphasize also that the remainder
$\bigo(p^{-2})$ in \eqref{eq-LB} is not uniform in $k$.

\medskip

The Berezin transform $\cB_p$ and its associated operator $\cE_p$
have prominent cousins, the $Q_{K,p}$-operator and the $Q_p$-operator,
respectively introduced by Donaldson \cite[\S\,4]{D}
in the framework of his program of finding numerical approximation
to distinguished K\"{a}hler metrics
on complex projective manifolds.
They are defined as
\begin{equation}\label{QKdef}
\begin{split}
Q_{K,p}&= \frac{\text{Vol}(X)}{n_p} \iota_p T_p : L_1(X)  \to L_{\infty}(X)\;,\\
Q_p &= \frac{\text{Vol}(X)}{n_p} T_p \iota_p: \cL(\Hilb_p) \to \cL(\Hilb_p)\;,
\end{split}
\end{equation}
where  for any $p\in\N^*$, the map
$\iota_p:\cL(\cH_p)\fl L_\infty(X)$ has been defined
in \cite[\S\,2.2.1]{D} for all $A\in\cL(\cH_p)$ and $x\in X$
by the formula
\begin{equation}\label{eq-iota}
\iota_p(A)(x)=\sum_{j=1}^{n_p} h^{p}(As_j(x),s_j(x))\,,
\end{equation}
where $\{s_j\}_{j=1}^{n_p}$ is an orthonormal basis of $\cH_p$.
%These operators  are related by the equations
%\begin{equation}\label{iotaint}
%\iota_pQ_p = Q_{K,p}\iota_p\;\;\text{and}\;\; T_pQ_{K,p}=Q_pT_p\;.
%\end{equation}
By
definition of the coherent state projector
$F_p:X\rightarrow \cS(\Hilb_p)$ and in the language
of Section \ref{prel},
equation \eqref{eq-iota}
reads
%\todo{description of $\iota_p$ in our language}
\begin{equation}\label{iota=RT}
\iota_p(A)(x)=\frac{1}{n_p}R_p(x)T_p^*(A)(x)\,.
\end{equation}
On the other hand, by their definitions \eqref{QKdef},
the non-vanishing parts of the spectra
of $Q_{K,p}$ and $Q_p$ are finite and coincide together with
their multiplicities.
Write the eigenvalues of $Q_p$ as
\begin{equation}
\beta_{0,p}
\geq\beta_{1,p}\geq\beta_{2,p}\geq\dots\geq\beta_{k,p}\geq\dots\;,
\end{equation}
and set $$p' : = \left(\frac{n_p}{\Vol(X)}\right)^{1/d}\;.$$ For some K\"{a}hler metrics of constant
scalar curvature, Donaldson considered the $Q_p$-operator
as a finite-dimensional approximation of the heat operator and
predicted (see p.\,611 in \cite{D}) that as $p\fl+\infty$,
the spectrum of $Q_p$
approximate the spectrum of $e^{-\frac{\Delta}{4\pi p'} }$,
and an approximation to eigenfunctions of $e^{-\frac{\Delta}{4\pi p'} }$
can be extracted from the eigenvectors of $Q_p$.
The following result, which follows from Theorem \ref{thm-quant}
using the classical
asymptotics of the Rawnsley function as $p\fl+\infty$,
confirms
Donaldson's prediction for all K\"{a}hler metrics of constant
scalar curvature. A detailed proof
is given in Section \ref{specsec}.

\begin{thm}\label{cor-quant}
Assume that the K\"{a}hler metric of $X$ has constant scalar
curvature.
For every integer $k\in\N$, we have the following
asymptotic estimate as $p\to +\infty$,
\begin{equation}\label{eq-LQ}
1-\beta_{k,p} =\frac{1}{4\pi p'}\lambda_k + \bigo(p^{-2})\;.
\end{equation}
Furthermore, for every sequence in $\{A_p\}_{p\in\N^*}$ of
normalized eigenvectors of $Q_p$ in $\cL(\cH_p)$
corresponding to the eigenvalue $\beta_{k,p}$ for all $p\in\N^*$,
there is a
subsequence of
\begin{equation}\label{ipAp}
\Big{\{} \frac{\iota_pA_p}{\|\iota_pA_p\|_p}\Big{\}}_{p\in\N^*}
\end{equation}
converging to an
eigenfunction of the Laplace-Beltrami
operator corresponding to $\lambda_k$ in the $\CC^\infty$-sense,
where $\|\cdot\|_p$ is the norm on $L_2(X,\alpha_p)$.
\end{thm}

\medskip We refer to \cite{Fin12,Keller16} for a related study of the asymptotic behaviour of the spectrum
of certain geometric operators arising in Donaldson's program.

\medskip Let us introduce the following useful notion  \cite{D,Fine-lectures}.

\medskip
\noindent
\begin{defin}\label{def-balanced} {\rm Let $(X,\omega)$ be a closed K\"{a}hler manifold, and let $(L,h)$ be a Hermitian
holomorphic line bundle over $X$
whose Chern connection has curvature $-2\pi i\omega$.
Fix a positive integer $p\in\N^*$ so that the Kodaira map
$X \to H^0(X,L^p)$ is an embedding.
We say that the data $(X,L^p,h^p)$ is {\it balanced} if the corresponding
Rawnsley function $R_p:X\rightarrow\R$ is constant. }
\end{defin}

\medskip

Note that for the balanced data $(X,L^p,h^p)$ the Berezin
transform $\cB_p$ and the $Q_{K,p}$-operator coincide, as well
as $\cE_p$ and the $Q_p$-operator. In that case, the result of Theorem
\ref{thm-quant} is relevant in \cite[\S\,4.3]{D}.
We refer the reader to \cite[\S\,4.1]{D} and to \cite[\S\,1.4.1]{Fin10}
for an  interpretation of these operators in terms of complex geometry
of $(X,L^p)$. Let us finally mention that the approximation of the heat
operator by the $Q_K$-operator has been explored
by Liu and Ma in \cite{LM07}, and that the analogue of the refined
Karabegov-Schlichenmaier expansion of Proposition \ref{KS}
for the $Q_K$-operator has been shown by Ma and Marinescu in
\cite[Th.\,6.1]{MM12}. Some ingredients of their approach
are instrumental for us.
\medskip

It follows from Theorem \ref{thm-quant} that the spectral gap of the Berezin-Toeplitz POVM
equals
\begin{equation}\label{eq-gap-repeated}
\gamma(W_p)= \frac{\hbar}{4\pi}\lambda_1  + \bigo(\hbar^2)\;, \;\; \hbar = 1/p\;.
\end{equation}
In particular, this yields that the eigenvalue $1$ of $\cB_p$ is
simple (i.e., has multiplicity $1$) for all
sufficiently large $p$.

\medskip\noindent\begin{exam}\label{exam-CP1} {\rm Take the projective line  $X= \C P^1=S^2$ of area $1$.
Let $L= \bigo(1)$ be the holomorphic line bundle over $X$ dual to the tautological one.
The quantum Hilbert space $\Hilb_p$ of global holomorphic sections of $L^p$ can be identified
with the $(p+1)$-dimensional space of homogeneous polynomials of degree $p$ of 2 variables.
A representation-theoretical argument (see \cite {gap-representations, D} and Remark \ref{rem-rps} below) shows that the
eigenvalue $\gamma_1$ of the Berezin transform equals $p/(p+2)$. The K\"{a}hler metric on $X$ has constant curvature. For such metrics the first eigenvalue $\lambda_1$  of the Laplace-Beltrami operator equals $8\pi/\text{Area} = 8\pi$. We get that
$$\gamma = 1-\gamma_1 = \frac{2}{p+2} = \frac{1}{4\pi p}\lambda_1 + \bigo(p^{-2})\;,$$
as predicted by \eqref{eq-gap-repeated}.
}
\end{exam}

The upper bound in \eqref{eq-gap-repeated} immediately follows from the Karabegov-Schlichenmaier asymptotic expansion \eqref{eq-asymp} of the Berezin transform \cite{KS01}
$$\cB_p(f)= f- \frac{1}{4\pi p}\Delta f + \bigo(p^{-2})\;,$$
for every smooth function $f$ on $X$, where the remainder $\bigo(p^{-2})$ depends on $f$.
Indeed, choosing $f$ to be
the $L_2(X,\alpha_p)$-normalized first eigenfunction of $\Delta$,
we see that
$$\gamma(W_p) \leq ((\id-\cB_p)f,f)_p \leq \frac{1}{4\pi p}\lambda_1 + \bigo(p^{-2}),$$
where $(\cdot,\cdot)_p$ is the scalar product on $L_2(X,\alpha_p)$.

The prototypical example illustrating a link between the Berezin transform and the Laplace-Beltrami operator
is the flat space $\R^{2n}$, where the Berezin transform $\cB_p$ simply coincides with the heat operator $e^{-\hbar \Delta/4\pi}$ (see \cite{B}). It would be interesting to explore the following problem motivated by
a conversation with J.-M. Bismut. Denote by $\chi(t)$ the indicator function of the interval $[0,1]$.

\medskip
\noindent
\begin{problem}\label{prob} Call a non-decreasing sequence $r(p)$
in $p \in \N^*$ \textup{admissible} if
$$\left\|(\cB_p-e^{-\frac{\Delta}{4\pi p}}) \chi\left(\frac{\Delta}{r(p)}\right)\right\| = \bigo(p^{-2})\;,$$
where the norm stands for the operator norm in $L_2$.
According to Theorem \ref{thm-quant}, the constant sequence $r(p) = C$ is admissible for all $C$. Is the sequence $r(p) = p^\tau$ admissible
for $\tau >0$? What is the maximal
possible growth rate of an admissible sequence?
 \end{problem}

Let us finally make a couple of comments
on the physical intuition behind the Berezin transform. It has been noted in the introduction that the Berezin transform can be defined as the composition of the quantization and the dequantization.
It is instructive to interpret it in terms of the quantization only.
Let $\sigma$ be a classical state, i.e. a Borel probability measure on $X$, and following \cite{CP1}, define its
quantization as
$$\Theta_p(\sigma) = \int_X F(x)d\sigma(x)\;,$$
where as earlier $F(x)$ stands for the coherent state projector at $x\in X$. Let further $f \in L_2(X)$ be a classical observable.
It was noticed in \cite[(11)]{CP1}
that the expectation $((T_p(f),\Theta_p(\sigma)))$ of the value of the quantized observable $T_p(f)$
in the quantized state $\Theta_p(\sigma)$ equals the classical expectation $\int_X \cB(f)\,d\sigma$
of the Berezin transform $\cB(f)$ in the classical state $\sigma$.
Thus in the context of Berezin-Toeplitz quantization,
we get another interpretation of the blurring
of quantization measured by $\cB$.
Furthermore, in view of Theorem \ref{thm-quant}, we know that
$\cB$ is a Markov operator
with strictly positive spectral gap. Thus it has unique
stationary measure $\alpha_p$ whose density against the phase volume is given by $R_p/n_p$, as in formula
\eqref{eq-alphahbar}. Interestingly enough, this provides an interpretation of the Rawnsley function
without appealing to a specific choice of coherent states.

\subsection{Comments on the proof}\label{subsec-comments}
The proof of Theorem \ref{thm-quant} occupies the rest of this section, and we will deduce Theorem \ref{cor-quant} as a
consequence of it in Section \ref{specsec}.
Our argument has the same structure as the one in a paper by Lebeau and Michel \cite{LM}
on the Markov operator associated to the semiclassical random walk on
manifolds. The key intermediate results are as follows:
\begin{itemize}
\item[{(i)}] An apriori estimate stating that for any eigenfunction $f$ of $\cB_p$ whose eigenvalue
is sufficiently bounded away from $0$, {\bf any} Sobolev norm $\|f\|_{H^q}$ is bounded
by $C_q  \|f\|_{L_2}$. See Lemma \ref{hjrefprop} below which is a counterpart of  Lemma 5 in \cite{LM}.
\item[{(ii)}] The operators $\cA_p:= p(\id-\cB_p)$ and $\frac{\Delta}{4\pi}$ turn out to be  $\sim p^{-1}$-close as
operators from $L_2$ to $H^q$ for the Sobolev space $H^q$ with {\bf some} sufficiently large $q$, see formula \eqref{KSSob} below which is a counterpart of formula (3.28) in \cite{LM}, and which can be considered as
a refinement of the expansion \eqref{eq-asymp}  obtained in \cite{KS01}.
\end{itemize}
Combining (i) and (ii) we conclude that, roughly speaking, eigenfunctions of $\cA_p$ as in (i) are ``approximate" eigenfunctions of the Laplacian, which eventually implies that the spectra of $\cA_p$ and $\Delta$ are close to one another,
which yields the desired result (see the ending of our proof which is parallel to the one in \cite{LM}).

Proving (i) and (ii) forms the main bulk of the work. In contrast to \cite{LM}, our proof does not involve micro-local
analysis. The main ingredients we use is the expansion of the Bergman kernel due to Dai, Liu and Ma \cite{DLM06}
(see Theorem \ref{BTasy}) and a comparison between the Berezin transform and the heat operator motivated by the work of Liu and Ma on Donaldson's $Q_K$-operator \cite{LM07} (see Proposition \ref{boundexp} below).

Finally, an acknowledgment is in order. After a weaker version of Theorem \ref{thm-quant} was posted and formula \eqref{eq-gap-repeated} was stated as a question, Alix Deleporte kindly shared with us his ideas concerning the proof
of \eqref{eq-gap-repeated}. He sent us notes \cite{Deleporte} containing a number of preliminary steps in the direction of (i) and (ii) above. While the original arguments of Deleporte dealt with the case of real-analytic K\"{a}hler manifolds and line bundles and were based on the asymptotic expansion from \cite{Rouby, Deleporte}, he informed us that they also could be adjusted to the $\CC^\infty$-case.

\subsection{Preparations}\label{subsecprep}

Recall that the measure $dv_X$ associated to the canonical
volume form $\om^d/d!$ is also the Riemannian volume form of $X$.
Let $\<\cdot,\cdot\>_{L_2}$ be the usual $L_2$-scalar product
on $\cinf(X,\C)$, and let
$\|\cdot\|_{L_2}$ be the associated norm.
For all $j\in\N$, let $e_j\in\cinf(X,\C)$ be the normalized
eigenfunction associated with
the $j$-th eigenvalue of the Laplace-Beltrami
operator, so that $\|e_j\|_{L_2}=1$ and
$\Delta e_j=\lambda_j e_j$ as in \eqref{evLB} for all
$j\in\N$. Then for any $f\in\cinf(X,\C)$, we have
the following equality in $L_2$,
\begin{equation}
f=\sum_{j=0}^{+\infty}\<f,e_j\>_{L_2}e_j.
\end{equation}
For any $F:\R\fl\R$ bounded, we define the
bounded operator $F(\Delta)$
acting on $L_2(X,\C)$ by the formula
\begin{equation}\label{calculfct}
F(\Delta)f=\sum_{i=0}^{+\infty}
F(\lambda_j)\<f,e_j\>_{L_2}e_j\;.
\end{equation}
The bounded operator $e^{-t\Delta}$ thus defined for all $t>0$
is called the \emph{heat operator}. For any $m\in\N$, let
$|\cdot|_{\CC^m}$ be a $\CC^m$ norm on $\cinf(X,\C)$.
The following result is classical
and can be found for example in \cite{Kannai}, \cite[Th.\,2.29, (2.8)]{BGV04}.

\begin{prop}\label{heatexpprop}
For any $m\in\N$, there exists $C_m>0$ such that for any
$f\in\cinf(X,\C)$ and all $t>0$, we have
\begin{equation}\label{heatexpfla}
|e^{-t\Delta}f-f+t\Delta f|_{\CC^m}\leq C_m t^2
|f|_{\CC^{m+4}}\;.
\end{equation}
\end{prop}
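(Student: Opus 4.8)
The plan is to reduce everything to the integral form of Taylor's formula for the exponential, applied through the functional calculus \eqref{calculfct}. For every $\lambda\ge 0$ and $t>0$ one has $e^{-t\lambda}-1+t\lambda=\int_0^t(t-s)\,\lambda^2 e^{-s\lambda}\,ds$; expanding $f$ in the eigenbasis $(e_j)$ and summing this scalar identity termwise (legitimate since $f\in\cinf(X,\C)$, so all the series converge in $\cinf$) yields the operator identity
$$e^{-t\Delta}f-f+t\Delta f=\int_0^t(t-s)\,e^{-s\Delta}\,\Delta^2 f\,ds ,$$
where we used that $\Delta$ and $e^{-s\Delta}$ commute. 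Taking $\CC^m$-norms and using $\int_0^t(t-s)\,ds=t^2/2$, this gives
$$|e^{-t\Delta}f-f+t\Delta f|_{\CC^m}\le\tfrac{t^2}{2}\,\sup_{s\ge 0}\,\big|e^{-s\Delta}(\Delta^2 f)\big|_{\CC^m}.$$

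It therefore suffices to establish the uniform bound $\sup_{s\ge 0}|e^{-s\Delta}g|_{\CC^m}\le C_m\,|g|_{\CC^m}$ and apply it with $g=\Delta^2 f$: since $\Delta^2$ is a fourth-order differential operator with smooth coefficients, $|\Delta^2 f|_{\CC^m}\le C\,|f|_{\CC^{m+4}}$, and \eqref{heatexpfla} follows. For the uniform $\CC^m$-boundedness of the heat semigroup I would treat small and large times separately. For $s$ in a fixed compact interval, say $[0,1]$, one can argue either by the parametrix construction for the heat kernel together with the Gaussian off-diagonal bounds on $p_s(x,y)$ and its spatial derivatives, or, more hands-on for functions, from the Bochner--Weitzenböck identity $(\partial_s+\Delta)|\nabla u|^2=-2|\nabla^2 u|^2-2\,\Ric(\nabla u,\nabla u)$ for $u=e^{-s\Delta}g$ combined with the parabolic maximum principle, iterated in the order of the covariant derivative. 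For $s\ge 1$ one writes $e^{-s\Delta}=e^{-\Delta}\circ e^{-(s-1)\Delta}$ and uses that $e^{-(s-1)\Delta}$ is a $\CC^0$-contraction while $e^{-\Delta}$ maps $\CC^0$ into $\CC^m$ boundedly by smoothing (equivalently, this regime follows from the spectral gap of $\Delta$). Of course one may instead simply quote \cite[Th.~2.29]{BGV04} or \cite{Kannai}.

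The only genuinely delicate point is this last one: one wants the $\CC^m$-bound for $e^{-s\Delta}$ with \emph{no loss of derivatives}, whereas a purely spectral argument combined with Sobolev embedding would cost finitely many additional derivatives (depending on $\dim X$) and would only give \eqref{heatexpfla} with $|f|_{\CC^{m+4}}$ replaced by a higher $\CC$-norm. Obtaining the sharp count $m+4$ is exactly what the classical heat-kernel estimates supply, and since this is standard we would invoke them rather than reprove them; everything else is the one-line Taylor computation above.
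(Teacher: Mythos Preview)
The paper does not prove this proposition at all: it states it as classical and refers the reader to \cite{Kannai} and \cite[Th.~2.29, (2.8)]{BGV04}. Your proposal is therefore not competing with an argument in the paper but supplying one where the paper merely cites.

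Your outline is correct. The integral Taylor identity
\[
e^{-t\Delta}f-f+t\Delta f=\int_0^t(t-s)\,e^{-s\Delta}\Delta^2 f\,ds
\]
is valid (via the eigenfunction expansion, or simply as an identity between smooth functions of $t$ with values in $\cinf(X)$), and reduces the estimate to the uniform $\CC^m$-boundedness of the heat semigroup plus the trivial bound $|\Delta^2 f|_{\CC^m}\le C|f|_{\CC^{m+4}}$. Your split into $s\in[0,1]$ (Bochner--Weitzenb\"ock plus the parabolic maximum principle, iterated on covariant derivatives) and $s\ge 1$ (factor through $e^{-\Delta}$ and use the $\CC^0$-contraction property) is the standard way to get $\sup_{s\ge 0}|e^{-s\Delta}g|_{\CC^m}\le C_m|g|_{\CC^m}$ without derivative loss; your Bochner identity is written with the correct sign for the paper's convention $\Delta=-\mathrm{div}\,\nabla$. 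You are also right to flag that a purely spectral/Sobolev argument would cost extra derivatives and not give the sharp $m+4$; this is precisely why the heat-kernel or Bochner route is needed, and why the paper simply invokes the classical references rather than improvising.
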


For any $m\in\N^*$, let $\|\cdot\|_{H^m}$ be a Sobolev norm
of order $m$ on $\cinf(X,\C)$. Using the elliptic estimates for the
Laplace-Beltrami operator, for $m$ even we define
$\|\cdot\|_{H^m}$ by
\begin{equation}\label{ellest}
\|f\|_{H^m} := \|\Delta^{m/2}f\|_{L_2} + \|f\|_{L_2}\;.
\end{equation}
Note that the Laplacian $\Delta$ is symmetric with respect to the corresponding scalar product on $H^m$. By convention, we set
$\|f\|_{H^0}:=\|f\|_{L_2}$.

Next, turn to the Berezin transform. Recall that the Hermitian
product on $L$ and the Riemannian measure $dv_X$ induce an
$L_2$-scalar product on sections of $L^p$ for any $p\in\N^*$,
and write $L_2(X,L^p)$ for the associated Hilbert space.
The central tool for the
study of the Berezin transform
is the Schwartz kernel $\Pi_p(x,y)$ of
the orthogonal projector $\Pi_p:L_2(X,L^p) \to \Hilb_{p}$,
called the
\emph{Bergman kernel}. Recall that for fixed $x$ and $y$, this is
an element of $L^p_x \otimes \bar{L}^p_y$, where $L^p_x$ denotes the fiber of $L^p$ at $x \in X$ and the bar stands for the conjugate line bundle. Since the bundle $L$ comes with a Hermitian metric, we can measure the point-wise norm
$|\Pi_p(x,y)|$.
By Corollary 9.1.4\,(2) in \cite{LF}, we have that $|\Pi_p(x,y)| = |\langle \xi_{x,p},\xi_{y,p} \rangle|$, where $\xi_{x,p}$ is the non-normalized
\emph{coherent state} at $x\in X$ defined up to a phase factor
(see e.g. \cite{CP,LF} for the definition).
The Rawnsley function $R_p$
is given by $R_p(x) = |\xi_{x,p}|^2$, and thus satisfies
$R_p(x)=\Pi_p(x,x)$.
Since $F_p(x)$ is the projector to $\xi_{x,p}$, we have that
$$|\Pi_p(x,y)|^2 = ((F_p(x),F_p(y))) R_p(x)R_p(y)\;.$$ It follows from \eqref{eq-kernel} and
\eqref{eq-alphahbar} that
\begin{equation}\label{Bpfla}
(\cB_p f)(x) = n_p \int_X ((F_p(x),F_p(y)))f(y) d\alpha_p(y)  = \frac{1}{R_p(x)} \int_X |\Pi_p(x,y)|^2 f(y) dv_X(y)\;,
\end{equation}
so that the Schwarz kernel of $\cB_p$ with respect to $dv_X$
is given by
\begin{equation}
\label{eq-schwarzkernelberezin}
\cB_p(x,y)= \frac{|\Pi_p(x,y)|^2}{R_p(x)}\;.
\end{equation}
Let $\|\cdot\|_p$ be the norm on $L_2(X,\alpha_p)$.
From the classical asymptotic
expansion of $R_p$ as $p \to +\infty$, we get a constant $C>0$
such that
\begin{equation}\label{diagexp}
\left(\frac{1}{\Vol(X)}- Cp^{-1}\right)\|\cdot\|_{L_2}
\leq\|\cdot\|_p
\leq\left(\frac{1}{\Vol(X)}+ Cp^{-1}\right)\|\cdot\|_{L_2}\;.
\end{equation}

\subsection{Asymptotic expansion of the Berezin transform}
For a comprehensive account on the off-diagonal expansion of the
Bergman kernel as well as tools of
Berezin-Toeplitz quantization
in this context, we refer to \cite{MM07}.

We always assume that $p\in\N^*$ is as large as needed.
For any $s>0$, we use the notation $O(p^{-s})$ as
$p\fl+\infty$ in the usual sense, uniformly in $\CC^m$-norm
for all $m\in\N^*$.
The notation $O(p^{-\infty})$ means $O(p^{-s})$
for any $s>0$.

Let $\epsilon_0>0$ be smaller than the injectivity radius of $X$.
Fix a point $x_0 \in X$, and let
$Z=(Z_1,...,Z_{2d})\in\R^{2d}$ with $|Z|<\epsilon_0$ be geodesic
normal coordinates around $x_0$, where $|\cdot|$ is the Euclidean
norm of $\R^{2d}$.
In these coordinates, the canonical volume form is given by
\begin{equation}\label{defkappa}
dv_X(Z)=\kappa_{x_0}(Z)dZ\;,
\end{equation}
with $\kappa_{x_0}(0)=1$.
For any kernel $K(\cdot,\cdot)\in\cinf(X\times X,\C)$, we write
$K_{x_0}(\cdot,\cdot)$ for its image in these coordinates, and
we write $|K_x|_{\CC^m(X)}$ for the $\CC^m$-norm of the family of functions $K_x$ with respect to $x \in X$.

Let $d^X$ be the Riemannian
distance on $X$.
We will derive  Theorem \ref{thm-quant} as a consequence of the
following asymptotic  expansion as $p \to +\infty$ of the
Schwartz kernel of the Berezin transform.

\begin{thm}\label{BTasy}
For any $m\,,k\in\N$ and $\epsilon>0$, there is
$C>0$ such that for all $p\in\N^*$ and
$x,y\in X$ satisfying $d^X(x,y)>\epsilon$,
\begin{equation}\label{thetafla}
|\cB_p(x,y)|_{\CC^m}\leq Cp^{-k}\;.
\end{equation}
For any $m, k\in\N$, there is
$N\in\N$ and $C>0$ such that for any
$x_0\in X,\,|Z|,|Z'|<\epsilon_0$
and for all $p\in\N^*$, we have
%\todo{$n$ replaced by $d$, here and below}
\begin{multline}\label{BTexp}
\Big|p^{-d} B_{p,x_0}(Z,Z')
-\sum_{r=0}^{k-1} p^{-r/2}J_{r,x_0}(\sqrt{p}Z,\sqrt{p}Z')
\exp(-\pi p|Z-Z'|^2)\kappa_{x_0}^{-1}(Z')\Big|_{\CC^m(X)}\\
\leq Cp^{-\frac{k}{2}}(1+\sqrt{p}|Z|+\sqrt{p}|Z'|)^N
\exp(-\sqrt{p}|Z-Z'|/C)+O(p^{-\infty})\;,
\end{multline}
where $\{J_{r,x_0}(Z,Z')\}_{r\in\N}$ is a family of polynomials
in $Z,Z'\in\R^{2n}$ of the same parity as $r$,
depending smoothly on $x_0\in X$. Furthermore,
for any $Z,Z'\in\R^{2n}$ we have
\begin{equation}\label{|J|0}
J_{0,x_0}(Z,Z')=1\quad\text{and}\quad J_{1,x_0}(Z,Z')=0\;.
\end{equation}
\end{thm}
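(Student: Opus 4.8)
The plan is to derive Theorem~\ref{BTasy} by inserting the off-diagonal asymptotic expansion of the Bergman kernel of Dai--Liu--Ma \cite{DLM06} (see also \cite{MM07}) into the identity \eqref{eq-schwarzkernelberezin}, $\cB_p(x,y)=|\Pi_p(x,y)|^2/R_p(x)$ with $R_p(x)=|\Pi_p(x,x)|$, and carrying out the ensuing bookkeeping. Two inputs are used. First, the far off-diagonal decay: for every $m,l\in\N$ and $\epsilon>0$ there is $C>0$ with $|\Pi_p(x,y)|_{\CC^m}\leq Cp^{-l}$ whenever $d^X(x,y)\geq\epsilon$. Second, the near-diagonal expansion in geodesic normal coordinates $Z$ around a point $x_0$, computed in the unitary trivialization of $L^p$ obtained by parallel transport along radial geodesics for the Chern connection: for every $m,k\in\N$ there exist $N\in\N$, $C>0$ such that $p^{-n}\Pi_{p,x_0}(Z,Z')=\mathcal{M}_{p,x_0}(Z,Z')+\rho_{p,x_0}(Z,Z')$ in $\CC^m(X)$, where
\[
\mathcal{M}_{p,x_0}(Z,Z'):=\mathcal{G}_{p,x_0}(Z,Z')\,\mathcal{P}(\sqrt{p}Z,\sqrt{p}Z')\,\kappa_{x_0}^{-1/2}(Z)\kappa_{x_0}^{-1/2}(Z')\;,
\]
$\mathcal{P}$ is the model Bergman kernel, characterized for our purposes by $\mathcal{P}(Z,Z)=1$ and $|\mathcal{P}(\sqrt{p}Z,\sqrt{p}Z')|^2=\exp(-\pi p|Z-Z'|^2)$; $\mathcal{G}_{p,x_0}(Z,Z')=\sum_{r=0}^{k-1}p^{-r/2}b_{r,x_0}(\sqrt{p}Z,\sqrt{p}Z')$ with $b_{r,x_0}$ a polynomial of the same parity as $r$, depending smoothly on $x_0$, and $b_{0,x_0}\equiv1$; and $|\rho_{p,x_0}(Z,Z')|_{\CC^m}\leq Cp^{-k/2}(1+\sqrt{p}|Z|+\sqrt{p}|Z'|)^N\exp(-\sqrt{p}|Z-Z'|/C)+O(p^{-\infty})$. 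Granting these, the estimate \eqref{thetafla} follows at once: the classical on-diagonal expansion of $R_p$ gives $R_p\geq cp^n$ and $|R_p^{-1}|_{\CC^m}\leq C_mp^{-n}$ for $p$ large, so by the Leibniz rule $|\cB_p(x,y)|_{\CC^m}\lesssim|\Pi_p(x,y)|_{\CC^m}^2\,|R_p^{-1}|_{\CC^m}=O(p^{-\infty})$ uniformly on $\{d^X(x,y)\geq\epsilon\}$, which is in particular $\leq Cp^{-k}$.

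For \eqref{BTexp}, since the trivialization is unitary one has $|\Pi_p(x,y)|=|\Pi_{p,x_0}(Z,Z')|$; squaring gives $p^{-2n}|\Pi_{p,x_0}(Z,Z')|^2=|\mathcal{M}_{p,x_0}|^2+2\,\RE(\mathcal{M}_{p,x_0}\overline{\rho_{p,x_0}})+|\rho_{p,x_0}|^2$, and by the two properties of $\mathcal{P}$,
\[
|\mathcal{M}_{p,x_0}(Z,Z')|^2=|\mathcal{G}_{p,x_0}(Z,Z')|^2\,\exp(-\pi p|Z-Z'|^2)\,\kappa_{x_0}^{-1}(Z)\kappa_{x_0}^{-1}(Z')\;,
\]
while $|\mathcal{M}_{p,x_0}|\lesssim(1+\sqrt{p}|Z|+\sqrt{p}|Z'|)^{N_0}\exp(-\pi p|Z-Z'|^2/2)$, so the cross and square terms are $O(p^{-k/2}(1+\sqrt{p}|Z|+\sqrt{p}|Z'|)^{N'}\exp(-\sqrt{p}|Z-Z'|/C'))+O(p^{-\infty})$ in $\CC^m$. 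Here is the key point: the denominator should be expanded \emph{by the same near-diagonal expansion}, $R_p(x)=|\Pi_{p,x_0}(Z,Z)|=p^n\,|\mathcal{G}_{p,x_0}(Z,Z)|\,\kappa_{x_0}^{-1}(Z)+(\text{error of the same type})$ — using $\mathcal{P}(\sqrt{p}Z,\sqrt{p}Z)=1$ — so that $R_p(x)^{-1}=p^{-n}\kappa_{x_0}(Z)\sum_j p^{-j/2}g_{j,x_0}(\sqrt{p}Z)+\dots$ with $g_{j,x_0}$ a polynomial of parity $j$, smooth in $x_0$, and $g_{0,x_0}\equiv1$. Two things happen: every correction factor stays polynomial in the rescaled variables, and the factor $\kappa_{x_0}(Z)$ produced here cancels the $\kappa_{x_0}^{-1}(Z)$ coming from $|\Pi_p(x,y)|^2$, leaving exactly $\kappa_{x_0}^{-1}(Z')$. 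Multiplying out and using $R_p\geq cp^n$ to propagate the errors, one gets
\[
p^{-n}B_{p,x_0}(Z,Z')=\frac{|\mathcal{G}_{p,x_0}(Z,Z')|^2}{|\mathcal{G}_{p,x_0}(Z,Z)|}\,\exp(-\pi p|Z-Z'|^2)\,\kappa_{x_0}^{-1}(Z')+O\big(p^{-k/2}(1+\sqrt{p}|Z|+\sqrt{p}|Z'|)^N\exp(-\sqrt{p}|Z-Z'|/C)\big)+O(p^{-\infty})\;.
\]

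To finish, expand $|\mathcal{G}_{p,x_0}(Z,Z')|^2=\sum_{a,b<k}p^{-(a+b)/2}\,b_{a,x_0}\,\overline{b_{b,x_0}}$ (arguments $\sqrt{p}Z,\sqrt{p}Z'$) and $|\mathcal{G}_{p,x_0}(Z,Z)|^{-1}=\sum_j p^{-j/2}g_{j,x_0}(\sqrt{p}Z)$; parity is additive under these products and preserved by the inversion, so regrouping the product into powers of $p^{-1/2}$ up to order $k-1$ and pushing the tail into the error yields \eqref{BTexp} with $J_{r,x_0}$ a polynomial of parity $r$ depending smoothly on $x_0$. For the initial coefficients, $J_{0,x_0}=|b_{0,x_0}|^2\,g_{0,x_0}=1$ and the Gaussian factor is the asserted one; reading off the next term, $J_{1,x_0}(\sqrt{p}Z,\sqrt{p}Z')=2\,\RE\,b_{1,x_0}(\sqrt{p}Z,\sqrt{p}Z')-\RE\,b_{1,x_0}(\sqrt{p}Z,\sqrt{p}Z)$, which vanishes because $b_{1,x_0}\equiv0$ — the $p^{-1/2}$-coefficient of the near-diagonal Bergman kernel is zero in the K\"ahler case, a consequence of the vanishing of the first-order term of the rescaled Kodaira Laplacian (cf.\ \cite[\S\,4.1]{MM07}) — and this is also consistent with the absence of a half-integer term in the Karabegov--Schlichenmaier expansion \eqref{eq-asymp}.

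I expect the only genuine difficulty to lie in the careful propagation of the Gaussian-weighted $\CC^m$ remainders through the squaring and, above all, through the division by $R_p$ — where one must simultaneously use the intrinsic on-diagonal expansion to keep $R_p$ bounded below by $cp^n$ and the near-diagonal expansion at $x_0$ to keep the correction coefficients polynomial in the rescaled variables — together with the bookkeeping of the volume densities $\kappa_{x_0}$. The remaining steps (the Leibniz estimate for \eqref{thetafla}, the geometric-series inversion of $R_p$, and the regrouping by powers of $p^{-1/2}$) are routine once the two inputs from \cite{DLM06} are in hand.
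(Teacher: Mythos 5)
Your proposal is correct and takes essentially the same route as the paper, whose proof consists precisely of the observation that $\cB_p(x,y)=|\Pi_p(x,y)|^2/R_p(x)$ together with the far- and near-off-diagonal expansions of the Bergman kernel from \cite[Th.~4.18']{DLM06}. The squaring, the division by $R_p$ controlled by its uniform on-diagonal expansion, and the vanishing $J_{1,x_0}\equiv 0$ traced back to the vanishing of the first-order term of the rescaled Kodaira Laplacian in the K\"ahler case are exactly the bookkeeping the paper leaves implicit.
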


\medskip
This readily follows from formula \eqref{eq-schwarzkernelberezin}
expressing the Schwarz kernel of the Berezin transform via the Bergman kernel $\Pi_p$ and
the analogous result of Dai, Liu and Ma in
\cite[Th.\,4.18']{DLM06} for the Bergman kernel.

For any $x\in X$, let $B^X(x,\epsilon_0)$ be the geodesic ball
of radius $\epsilon_0>0$ around $x$, and write $B(0,\epsilon_0)
\subset\R^{2d}$
for the Euclidean ball of radius $\epsilon_0$ around $0$.
The following proposition is a refinement of the
Karabegov-Schlichenmaier expansion \cite[(1.2)]{KS01}
of the Berezin transform, where we make explicit the remainder
term.

\begin{prop}\label{KS}
For any $m\in\N$, there exists $C_m>0$ such that for any
$f\in\cinf(X,\C)$ and all $p\in\N^*$, we have
\begin{equation}\label{KSexp}
\left|\cB_pf-f+\frac{\Delta}{4\pi p} f\right|_{\CC^m}\leq
\frac{C_m}{p^2}|f|_{\CC^{m+4}}\;.
\end{equation}
\end{prop}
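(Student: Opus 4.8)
The plan is to feed the near-diagonal expansion \eqref{BTexp} of the Berezin kernel into the integral representation \eqref{Bpfla}--\eqref{eq-schwarzkernelberezin}, rescale the normal coordinate by $\sqrt p$, and match powers of $p^{-1/2}$ against the Taylor expansion of $f$. First I would reduce to a pointwise statement: fix $x_0\in X$ and prove that $\cB_p f(x_0)=f(x_0)-\frac{1}{4\pi p}\Delta f(x_0)+\bigo(p^{-2})\,|f|_{\CC^4}$, with the whole identity depending smoothly on $x_0$ and all constants uniform. The $\CC^m$ bound then follows by differentiating this identity $m$ times in $x_0$: this is legitimate because every estimate in \cref{BTasy} is stated in the uniform norm $|\cdot|_{\CC^m(X)}$ and the polynomials $J_{r,x_0}$ and the density $\kappa_{x_0}$ depend smoothly on $x_0$; the loss of four derivatives of $f$ comes from a third-order Taylor remainder, and the extra $m$ derivatives from differentiating in the base point, giving $|f|_{\CC^{m+4}}$ on the right.

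For the pointwise estimate, write $\cB_pf(x_0)=\int_X\cB_p(x_0,y)f(y)\,dv_X(y)$ via \eqref{eq-schwarzkernelberezin}. By the off-diagonal decay \eqref{thetafla}, the part of the integral with $d^X(x_0,y)\geq\epsilon_0/2$ is $O(p^{-\infty})\,|f|_{\CC^0}$, so I may restrict to the geodesic ball $|Z'|<\epsilon_0$ in normal coordinates $Z'$ around $x_0$, where $dv_X=\kappa_{x_0}(Z')\,dZ'$ by \eqref{defkappa} and $\cB_p(x_0,\exp_{x_0}(Z'))=B_{p,x_0}(0,Z')$. Inserting \eqref{BTexp} with $Z=0$, the factor $\kappa_{x_0}^{-1}(Z')$ in the main term cancels the $\kappa_{x_0}(Z')$ from $dv_X$, and after the substitution $W=\sqrt p\,Z'$ one is left with
$$\cB_pf(x_0)=\sum_{r=0}^{k-1}p^{-r/2}\int_{|W|<\sqrt p\,\epsilon_0}J_{r,x_0}(0,W)\,e^{-\pi|W|^2}\,f\big(\exp_{x_0}(W/\sqrt p)\big)\,dW+E_1+O(p^{-\infty})\,|f|_{\CC^0},$$
where $E_1$ is the contribution of the remainder in \eqref{BTexp}; the factor $\exp(-\sqrt p|Z'|/C)$ there makes $|E_1|\leq Cp^{-k/2}\,|f|_{\CC^0}$, which is $\bigo(p^{-2})$ once $k$ is chosen large. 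The Gaussian weight also lets me extend the remaining integrals from $|W|<\sqrt p\,\epsilon_0$ to all of $\R^{2n}$ at the cost of $O(p^{-\infty})\,|f|_{\CC^0}$.

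Next I would Taylor expand $f\circ\exp_{x_0}$ to third order at the origin, $f(\exp_{x_0}(W/\sqrt p))=\sum_{|\beta|\leq 3}p^{-|\beta|/2}c_\beta(x_0)\,W^\beta+\bigo(p^{-2})\,|W|^4\,|f|_{\CC^4}$, where $c_\beta$ is smooth in $x_0$ with $|c_\beta(x_0)|\leq C|f|_{\CC^{|\beta|}}$, and group the resulting double sum by the weight $(r+|\beta|)/2$ and by parity (recall $J_{r,x_0}(0,\cdot)$ has the parity of $r$). All terms with $r+|\beta|\geq 4$ are $\bigo(p^{-2})\,|f|_{\CC^3}$; the $r=1$ terms vanish since $J_{1,x_0}\equiv 0$; and every term with $r+|\beta|$ odd vanishes by parity. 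The only survivors of low order are: $r=|\beta|=0$, which gives $f(x_0)$ because $J_{0,x_0}\equiv 1$ and $\int_{\R^{2n}}e^{-\pi|W|^2}dW=1$; the pair $r=0,|\beta|=2$, which, using $\int_{\R^{2n}}W_iW_j\,e^{-\pi|W|^2}dW=\frac{\delta_{ij}}{2\pi}$ and the normal-coordinate identity $\Delta f(x_0)=-\sum_i\partial_i^2(f\circ\exp_{x_0})(0)$, contributes exactly $-\frac{1}{4\pi p}\Delta f(x_0)$; and one seemingly dangerous term of order $p^{-1}$, namely $\frac1p f(x_0)\int_{\R^{2n}}J_{2,x_0}(0,W)\,e^{-\pi|W|^2}dW$.

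The crux, and the one step I expect to be more than bookkeeping, is that this last term vanishes, i.e. $\int_{\R^{2n}}J_{2,x_0}(0,W)\,e^{-\pi|W|^2}dW=0$. This follows from the normalization of the POVM: $\cB_p$ is a Markov operator, so $\cB_p 1=1$, and running the computation above with $f\equiv 1$ (where all Taylor coefficients $c_\beta$ with $|\beta|\geq1$ vanish and there is no remainder) gives $1=1+\frac1p\int_{\R^{2n}}J_{2,x_0}(0,W)\,e^{-\pi|W|^2}dW+\bigo(p^{-2})$; since the integral is a number independent of $p$, it must be $0$. Putting the pieces together yields $\cB_pf-f+\frac{\Delta}{4\pi p}f=\bigo(p^{-2})\,|f|_{\CC^4}$ pointwise, and differentiating in $x_0$ gives the $\CC^m$ statement. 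The main obstacle is therefore purely quantitative: tracking the smooth $x_0$-dependence and the $|f|_{\CC^{m+4}}$-bounds of the Taylor coefficients $c_\beta(x_0)$, of the polynomials $J_{r,x_0}$, and of all the error terms $E_1$ and the domain-truncation tails, uniformly over $X$ and $p$ — no single estimate is difficult, but the uniformity must be handled with care.
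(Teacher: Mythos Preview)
Your proposal is correct and follows essentially the same route as the paper: localize via \eqref{thetafla}, insert the near-diagonal expansion \eqref{BTexp} with $Z=0$, rescale by $\sqrt p$, Taylor-expand $f$ to third order, kill odd weights by parity and the $J_1$-terms by \eqref{|J|0}, and eliminate the dangerous $p^{-1}$ contribution $\int J_{2,x_0}(0,W)e^{-\pi|W|^2}dW$ using $\cB_p1=1$ (exactly the paper's \eqref{KSest3}). The paper packages the $\CC^m$-uniformity into its $O_m$ notation rather than phrasing it as ``prove pointwise, then differentiate in $x_0$'', but the content is identical.
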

\begin{proof}
For any $x\in X$, write $f_x$ for the
image of $f$ restricted to $B^X(x,\epsilon_0)$ in
normal coordinates around $x$.
From \eqref{thetafla}, we know that
for any $\epsilon>0$ and $x\in X$,
\begin{equation}\label{KStheta}
\begin{split}
(\cB_pf)(x)&=\int_X \cB_p(x,y)f(y)dv_X(y)\\
&=\int_{B^X(x,\epsilon_0)}\cB_p(x,y)f(y)dv_X(y)
+O(p^{-\infty})\,|f|_{\CC^0}\\
&=\int_{B(0,\epsilon_0)}B_{p,x}(0,Z)f_x(Z)\kappa_x(Z)dZ
+O(p^{-\infty})\,|f|_{\CC^0}\;.
\end{split}
\end{equation}
For any $k\in\N^*$ and $m\in\N$, we will use the following
Taylor expansion of $f_x$ up to order $k-1$,
for all $p\in\N^*$ and $|Z|<\epsilon_0$,
\begin{equation}\label{Taylorfk}
\begin{split}
f_{x}&(Z)  =\sum_{0\leq|\alpha|\leq k-1}\frac{\partial^{|\alpha|} f_{x}}{\partial Z^\alpha}\frac{Z^\alpha}{\alpha!}
+O_m(|Z|^{k})|f|_{\CC^{m+k}}\\
& =\sum_{0\leq |\alpha|\leq k-1}p^{-\frac{|\alpha|}{2}}\frac{\partial^{|\alpha|} f_{x}}{\partial Z^\alpha}\frac{(\sqrt{p}Z)^\alpha}{\alpha!}+p^{-\frac{k}{2}}O_m(|\sqrt{p}Z|^{k})|f|_{\CC^{m+k}}\;,
\end{split}
\end{equation}
where $O_m$ means that the expansion is uniform in $x\in X$
as well as all its derivatives up to order $m\in\N$, and does not
depend on $f$.

We will compute the asymptotic expansion as $p\fl+\infty$
of \eqref{KStheta} using the Taylor expansion \eqref{Taylorfk}
of $f$ and the asymptotic expansion \eqref{BTexp}
of the Berezin transform up to order $3$.
First, using the fact that $\cB_p 1=1$ for all $p\in\N^*$,
we know that the polynomials $J_{r,x}(Z,Z')$ of the asymptotic
expansion \eqref{BTexp} of the Berezin transform satisfy
\begin{equation}\label{KSest3}
\int_{\R^{2n}}J_{r,x}(0,Z)\exp(-\pi p|Z|^2)dZ=0\;,
\end{equation}
for all $x\in X$ and $r\in\N^*$.
On another hand, recall from \eqref{|J|0} that $J_{0,x}\equiv 1$ and
$J_{1,x}\equiv 0$ for all $x\in X$.
Using the parity of
Gaussian functions, a change of variable $Z\mapsto Z/\sqrt{p}$
and the Taylor expansion \eqref{Taylorfk} for $k=4$, we get that
%\todo{$+p^{-1}\frac{\Delta}{4\pi}$ changed to $-p^{-1}\frac{\Delta}{4\pi}$}
\begin{multline}\label{KSest1}
p^d\int_{B(0,\epsilon_0)}\exp(-\pi p|Z|^2)f_x(Z)dZ\\
=f(x)
+p^{-1}\sum_{j=1}^{2n}\frac{\partial^2 f_{x}}{\partial Z^2_j}(0)
\int_{\R^{2n}}\frac{Z_j^2}{2}\exp(-\pi |Z|^2)dZ+
|f|_{\CC^{m+4}}O_m(p^{-2})\\
=f(x)-p^{-1}\frac{\Delta}{4\pi}f(x)+|f|_{\CC^{m+4}}O_m(p^{-2})\;.
\end{multline}
Recall that $J_{r,x}(0,Z)\in\C[Z]$ is a polynomial
in $Z\in\R^{2n}$ of the same parity than $r\in\N$, so that using
\eqref{Taylorfk}, \eqref{KSest3} and the parity of
Gaussian functions, we get in the same way
\begin{equation}\label{KSest2}
\begin{split}
p^d\int_{B(0,\epsilon_0)}J_{2,x}(0,\sqrt{p}Z)
&\exp(-\pi p|Z|^2)f_x(Z)dZ\\
=f(x)\int_{\R^{2n}}&J_{2,x}(0,Z)\exp(-\pi|Z|^2)dZ+O_m(p^{-1})
|f|_{\CC^{m+2}}\\
&=O_m(p^{-1})|f|_{\CC^{m+2}}\;,\\
p^d\int_{B(0,\epsilon_0)}J_{3,x}(0,\sqrt{p}Z)
&\exp(-\pi p|Z|^2)f_x(Z)dZ=O_m(p^{-1/2})|f|_{\CC^{m+1}}\;.
\end{split}
\end{equation}
Finally, again using a change of variable $Z\mapsto Z/\sqrt{p}$,
we get for any $N\in\N^*$ and $p\in\N^*$,
\begin{equation}\label{KSest4}
p^d\int_{B(0,\epsilon_0)}(1+|\sqrt{p}Z|)^N\exp(-\sqrt{p}|Z|/C)f_x(Z)dZ
=O_m(1)|f|_{\CC^m}\;.
\end{equation}
This completes the proof of \eqref{KSexp}.
\end{proof}

In view of Proposition \ref{heatexpprop} and Proposition \ref{KS},
it is natural
to compare the Berezin transform with the heat operator
by setting $t=(4\pi p)^{-1}$.
This leads to the following result, which is essentially a
refinement of \cite[Th.\,0.1]{LM07}.

\begin{prop}\label{boundexp}
For any $m\in\N$, there exists $C_m>0$ such that for any
$f\in\cinf(X,\C)$ and all $p\in\N^*$, we have
\begin{equation}\label{boundexpfla}
\left\|(e^{-\frac{\Delta}{4\pi p}}-\cB_p)f\right\|_{H^m}\leq
\frac{C_m}{p}\|f\|_{H^m}\;.
\end{equation}
\end{prop}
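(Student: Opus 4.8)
The plan is to bound the operator norm of $A_p:=e^{-\frac{\Delta}{4\pi p}}-\cB_p$ on each $H^m$, and the first step is a pointwise comparison extracted from the two expansions already at hand. Applying Proposition~\ref{heatexpprop} with $t=(4\pi p)^{-1}$ and subtracting the estimate of Proposition~\ref{KS}, the triangle inequality gives, for every $m\in\N$, a constant $C_m>0$ with
\begin{equation*}
|A_p f|_{\CC^m}\leq\frac{C_m}{p^2}\,|f|_{\CC^{m+4}}\,,\qquad f\in\cinf(X,\C),\ p\in\N^*.
\end{equation*}
Through the Sobolev embedding $H^{m+4+s}\hookrightarrow\CC^{m+4}$ for $s>\dim_\C X$ and the continuous inclusion $\CC^m\hookrightarrow H^m$ on the compact manifold $X$, this already gives $\|A_p\|_{H^{m+N}\to H^m}=\bigo(p^{-2})$ for a fixed integer $N$. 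The whole point of the Proposition is to remove this loss of $N$ derivatives at the price of one power of $p$; since the estimate above genuinely loses a fixed number of derivatives, it cannot be upgraded by interpolation with the trivial bounds alone, and kernel information from Theorem~\ref{BTasy} must enter.

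Next I would reduce to two operator estimates. For $m$ even one has $\|g\|_{H^m}=\|\Delta^{m/2}g\|_{L_2}+\|g\|_{L_2}$, and since $\Delta^{m/2}$ commutes with the heat operator,
\begin{equation*}
\Delta^{m/2}A_p f = A_p\bigl(\Delta^{m/2}f\bigr)-\bigl[\Delta^{m/2},\cB_p\bigr]f\,.
\end{equation*}
Combining this with the $L_2$-triangle inequality and $\|\Delta^{m/2}f\|_{L_2}+\|f\|_{L_2}=\|f\|_{H^m}$ gives
\begin{equation*}
\|A_p f\|_{H^m}\leq\Bigl(\|A_p\|_{L_2\to L_2}+\bigl\|[\Delta^{m/2},\cB_p]\bigr\|_{L_2\to L_2}\Bigr)\|f\|_{H^m}\,,
\end{equation*}
so it suffices to prove (a) $\|A_p\|_{L_2\to L_2}=\bigo(p^{-1})$, which is the case $m=0$ and is essentially \cite[Th.~0.1]{LM07}, and (b) $\|[\Delta^{m/2},\cB_p]\|_{L_2\to L_2}=\bigo(p^{-1})$ for every even $m$; odd $m$ then follows by interpolation between consecutive even exponents.

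For (a): by \eqref{thetafla} and the exponential smallness of the heat kernel off the diagonal, the Schwartz kernel $A_p(x,y)$ is $\bigo(p^{-\infty})$ when $d^X(x,y)>\epsilon$, while for $d^X(x,y)\leq\epsilon$ one rescales normal coordinates by $\sqrt p$ and compares Theorem~\ref{BTasy} with the classical heat-kernel expansion of $e^{-t\Delta}(x,y)$ at $t=(4\pi p)^{-1}$: the leading Gaussian profiles agree — here \eqref{|J|0}, i.e. $J_{0,x_0}\equiv 1$ and $J_{1,x_0}\equiv 0$, is exactly what forces agreement through the relevant order — so that
\begin{equation*}
|A_p(x,y)|\leq C\,p^{\dim_\C X-1}(1+\sqrt p\,d^X(x,y))^N e^{-\sqrt p\,d^X(x,y)/C}+\bigo(p^{-\infty})\,.
\end{equation*}
The integral of the right-hand side in $y$, and symmetrically in $x$, is $\bigo(p^{-1})$ since the Gaussian is concentrated in a ball of radius $\bigo(p^{-1/2})$, and Schur's test gives (a). For (b) I would write the kernel of $[\Delta^{m/2},\cB_p]$ as $(\Delta_x^{m/2}-\Delta_y^{m/2})\cB_p(x,y)$: the terms of the off-diagonal expansion of $\cB_p(x,y)$ depending on $(x,y)$ only through $d^X(x,y)$ are annihilated by $\Delta_x-\Delta_y$, and the parity of the polynomials $J_{r,x_0}$ removes further contributions, so that $(\Delta_x^{m/2}-\Delta_y^{m/2})\cB_p(x,y)$ is again $\bigo(p^{\dim_\C X-1})$ times a Gaussian; Schur's test then closes (b), hence the Proposition.

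The step I expect to be the main obstacle is the power counting in (b): applying $m$ derivatives in $x$ to the $\sqrt p$-concentrated kernel of $\cB_p$ a priori costs a factor $p^{m/2}$, and one must show that the antisymmetrization $\Delta_x^{m/2}-\Delta_y^{m/2}$, together with the parity built into the off-diagonal Bergman expansion, yields enough cancellation to bring this back down to the size $p^{\dim_\C X-1}$ of a single Gaussian bump. Making this cancellation quantitative is where the fine structure of the expansions in \cite{DLM06,MM12,LM07} — the vanishing and parity recorded in \eqref{|J|0} and the near-radial dependence of the subleading coefficients — has to be used in earnest.
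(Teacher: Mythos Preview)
Your reduction and the $m=0$ case (a) are fine and match the paper: one compares the off-diagonal expansion of the Bergman kernel (hence of $\cB_p$) with the standard heat-kernel expansion, uses $J_{0}\equiv 1$, $J_{1}\equiv 0$ to see that the leading Gaussian profiles agree, obtains a bound of the form $|S_{p,x_0}(Z,Z')|\leq Cp^{-1}(1+\sqrt p|Z|+\sqrt p|Z'|)^N e^{-\sqrt p|Z-Z'|/C}$, and concludes by Schur's test.

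The gap is in (b). Your key claim, that ``terms depending on $(x,y)$ only through $d^X(x,y)$ are annihilated by $\Delta_x-\Delta_y$'', is false on a curved manifold: for $g(d^X(x,y))$ one has $\Delta_x g=-g''-g'\,\Delta_x d^X(x,y)$ and likewise in $y$, and $\Delta_x d^X(x,y)\neq\Delta_y d^X(x,y)$ in general (these are mean curvatures of geodesic spheres centred at $y$, resp.\ $x$). Concretely, in normal coordinates around $x$ the Riemannian Laplacian differs from the Euclidean one by a first-order operator with $O(|Z'|)$ coefficients and a second-order piece with $O(|Z'|^2)$ coefficients. Applied to $p^{d}e^{-\pi p|Z'|^2}$ these curvature corrections produce terms of size $p^{d}\cdot(p|Z'|^2+p^2|Z'|^4)\,e^{-\pi p|Z'|^2}$, whose Schur integrals are $O(1)$, not $O(p^{-1})$. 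No parity argument on the $J_r$'s helps here: the loss comes from the variable-coefficient nature of $\Delta$, not from the subleading polynomials. So your Schur bound for $(\Delta_x-\Delta_y)\cB_p(x,y)$ is off by a factor of $p$, and more generally the commutator route forces you to recover $m/2$ powers of $p$ by cancellation that simply is not there on a generic K\"ahler manifold.

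The paper sidesteps this entirely: it never estimates $[\Delta^{m/2},\cB_p]$. Instead it works with the \emph{difference} kernel $S_p=e^{-\Delta/4\pi p}-\cB_p$, which already carries the $p^{-1}$, and proves \eqref{boundexpfla} by induction on $m$. The inductive step is an integration-by-parts trick on the kernel: for a differential operator $D_x$ of order $m$ in $x$, the expansion shows one can write $D_xS_p(x,y)=S_p^{(m)}(x,y)-D'_{x,y}S_p(x,y)$, where $S_p^{(m)}$ satisfies the \emph{same} $p^{-1}$-weighted Gaussian bound as $S_p$ (so Schur handles it as in the $m=0$ case), and $D'_{x,y}$ has order at most $m-1$ in $x$; integrating by parts in $y$ then throws the remaining derivatives onto $f$ and invokes the inductive hypothesis. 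This avoids ever differentiating the sharply concentrated kernel $m$ times without compensation, which is exactly where your commutator argument breaks.
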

\begin{proof}
Set $S_p:=e^{\frac{\Delta}{4\pi p}}-\cB_p$, which acts on
$L_2(X,\C)$ for all $p\in\N^*$ and admits a smooth
Schwartz kernel $S_p(\cdot,\cdot)$ with respect to $dv_X$.
Comparing \cref{BTasy} with the classical asymptotic expansion
of the heat kernel,
as given for example in \cite[Th.\,2.29]{BGV04},\cite{Kannai}, we see that
\begin{equation}\label{thetaflaSp}
S_p(x,y)=O(p^{-\infty})\;,
\end{equation}
for all $x,y\in X$ satisfying $d^X(x,y)>\epsilon_0$,
and using the formula \eqref{|J|0} for the
first two coefficients, we get for any $m\in\N$ a constant
$C>0$ and $N\in\N$ such that
\begin{multline}\label{B-eexp}
\left|S_{p,x_0}(Z,Z')\right|_{\CC^m(X)}\leq Cp^{-1}
(1+\sqrt{p}|Z|+\sqrt{p}|Z'|)^N
\exp(-\sqrt{p}|Z-Z'|/C)\\
+O(p^{-\infty})\;.
\end{multline}
Let us first show \eqref{boundexpfla} for $m=0$.
%Note that by definition of $\cB_p$ and as the heat operator is
%self-adjoint, we have $R_p(x,y)=R_p(y,x)$ for all
%$x,\,y\in X$ and $p\in\N^*$.
For any $f\in\cinf(X,\C)$ and any $\epsilon>0$, by
Cauchy-Schwarz inequality and \eqref{thetaflaSp} for $S_p$,
we get the following version of the Schur test for all $p\in\N^*$,
\begin{equation}\label{Schurtest}
\begin{split}
\|S_p f\|_{L_2}^2&\leq\int_X\left(\int_X |S_p(x,y)|\,dv_X(y)\right)
\left(\int_X |S_p(x,y)|\,|f(y)|^2\,dv_X(y)\right)dv_X(x)\\
&\leq\sup_{x\in X}\left(\int_X|S_p(x,y)|\,dv_X(y)\right)
\sup_{y\in X}\left(\int_X|S_p(x,y)|\,dv_X(x)\right)\|f\|_{L_2}^2\\
&\leq\sup_{x\in X}\left(\int_{B(x,\epsilon_0)}|S_p(x,y)|\,dv_X(y)\right)
\sup_{y\in X}\left(\int_{B(x,\epsilon_0)}|S_p(x,y)|\,dv_X(x)\right)\|f\|_{L_2}^2\\
&\quad\quad\quad\quad\quad\quad\quad\quad\quad\quad\quad\quad\quad\quad\quad\quad\quad\quad\quad\quad\quad\quad
+O(p^{-\infty})\|f\|_{L_2}^2\;.
\end{split}
\end{equation}
Then \eqref{boundexpfla} for $m=0$ follows from
\eqref{B-eexp} with $Z=0$ or $Z'=0$ respectively,
as in \eqref{KSest4}.

To deal with the case of arbitrary $m\in\N^*$, let us assume
by induction that \eqref{boundexpfla} is satisfied for $m-1$.
Considering the estimates \eqref{thetaflaSp} and \cref{B-eexp} with
corresponding $m\in\N^*$, note that for any differential
operator $D_x$ of order $m$ in
$x\in X$, there exists a differential operator $D'_{x,y}$ in
$x,y\in X$ of total order $m$ but of order at most $m-1$
in $x\in X$, such that the operator $S_p^{(m)}$
defined through its kernel for all $x,y\in X$ by
\begin{equation}
S_p^{(m)}(x,y):=D_xS_p(x,y)+D'_{x,y}S_p(x,y)
\end{equation}
also satisfies \eqref{thetaflaSp} and \eqref{B-eexp}.
Then for all $x\in X$ and
$p\in\N^*$, we get
\begin{equation}
\begin{split}
\int_X D_xS_p(x,y)f(y)dv_X(y)&=
-\int_X(D'_{x,y}S_p(x,y))f(y)dv_X(y)+(S_p^{(m)}f)(x)\\
&=\int_XD_x'S_p(x,y)(D_y''f(y))dv_X(y)+(S_p^{(m)}f)(x)\;,
\end{split}
\end{equation}
where $D'_x$ and $D''_y$ are differential operators, respectively
in $x$ and in $y$,
obtained from $D'_{x,y}$ using a partition of unity and integration
by parts in local charts, so that in particular $D'_x$
is of order $m-1$ in $x\in X$.
Then using the induction hypothesis, the inequality
\eqref{boundexpfla} for $m$
follows from the same inequality for $m-1$
replacing $f$ by any number of derivatives of $f$,
and from the estimates \eqref{B-eexp} and \eqref{Schurtest}
for $S_p^{(m)}$ in the same way than before.
\end{proof}

\subsection{Spectrum}\label{specsec}

Recall that $\|\cdot\|_p$ denotes the norm on $L_2(X,\alpha_p)$.
In this section, we consider a sequence $\{f_p\}_{p\in\N^*}$,
with $f_p\in\cinf(X,\C)$ such that
\begin{equation}\label{efBp}
\|f_p\|_p=1\;,\quad \cB_pf_p=\mu_p f_p\;,
\end{equation}
for some $\mu_p\in\Spec(\cB_p)$ for all $p\in\N^*$.
The following estimate is crucial
for the proof of Theorem \ref{thm-quant}.

\begin{lemma}\label{hjrefprop}
Assume that the sequence $\{p(1-\mu_p)\}_{p\in\N^*}$ is
bounded by some constant $L>0$. Then
for all $m\in\N$, there exists $C_{L,m}>0$
such that for all $p\in\N^*$, we have
\begin{equation}\label{hjrefined}
\|f_p\|_{H^{2m}}\leq C_{L,m}\;.
\end{equation}
\end{lemma}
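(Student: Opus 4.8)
The plan is to run a bootstrap (elliptic-regularity) argument, using Proposition \ref{boundexp} to transfer $L_2$-control of $f_p$ into $H^{2m}$-control, one Sobolev level at a time. The eigenfunction equation $\cB_pf_p=\mu_pf_p$ is equivalent, after multiplying by $p$, to $\cA_pf_p := p(\id-\cB_p)f_p = p(1-\mu_p)f_p$, and by hypothesis the right-hand side has $L_2$-norm $\leq L$ (using the equivalence of $\|\cdot\|_p$ and $\|\cdot\|_{L_2}$ from \eqref{diagexp}, which costs only a $p$-independent constant). The key point is that $\cA_p$ is, by Proposition \ref{boundexp} with $t=(4\pi p)^{-1}$, uniformly close to the operator $p(\id - e^{-\Delta/4\pi p})$; and the latter is, on each eigenspace of $\Delta$, multiplication by $p(1-e^{-\lambda_j/4\pi p}) = \frac{\lambda_j}{4\pi} + O(\lambda_j^2/p)$, so it agrees with $\frac{\Delta}{4\pi}$ up to an operator controlled on fixed Sobolev scales. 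More precisely, rescaling \eqref{boundexpfla} gives $\|(p(\id-e^{-\Delta/4\pi p}) - \cA_p)f\|_{H^m}\leq \tilde C_m\|f\|_{H^m}$ for all $p$, i.e. $\cA_p = \frac{\Delta}{4\pi} + R_p$ where $R_p:H^{m}\to H^{m}$ has $p$-independent operator bound (combining with Proposition \ref{heatexpprop} to compare $p(\id-e^{-\Delta/4\pi p})$ with $\Delta/4\pi$ on $H^m$, losing four derivatives but with a constant).

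First I would set up the induction on $m$. The base case $m=0$ is the normalization \eqref{efBp} together with \eqref{diagexp}. For the inductive step, assume $\|f_p\|_{H^{2(m-1)}}\leq C_{L,m-1}$. From the rewritten eigenvalue equation,
\begin{equation}
\frac{\Delta}{4\pi}f_p = \cA_pf_p - R_pf_p = p(1-\mu_p)f_p - R_pf_p\;,
\end{equation}
where I must be careful: Proposition \ref{boundexp} compares $\cA_p$ with the \emph{heat} operator at scale $1/(4\pi p)$, not with $\Delta/4\pi$ directly, and Proposition \ref{heatexpprop} gives $\CC^m$ (hence $H^m$) control of $p(\id-e^{-\Delta/4\pi p})f - \frac{\Delta}{4\pi}f$ by $C\, p^{-1}|f|_{\CC^{m+4}}$ — a bound in terms of \emph{more} derivatives of $f$. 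So in fact the clean statement is: for each $m$ there is $C_m$ with $\|(\frac{\Delta}{4\pi} - \cA_p)f\|_{H^m}\leq C_m\|f\|_{H^{m+4}}$ uniformly in $p$ (here I may absorb $\CC^{m+4}\hookrightarrow H^{m+4+d}$ via Sobolev embedding if I want purely Sobolev norms, adjusting the index shift accordingly; alternatively keep $\CC$-norms and convert at the end). Hence
\begin{equation}
\left\|\frac{\Delta}{4\pi}f_p\right\|_{H^{2m-4}} \leq p(1-\mu_p)\,\|f_p\|_{H^{2m-4}} + C\,\|f_p\|_{H^{2m}}\;,
\end{equation}
which is the wrong direction — it bounds a lower norm of $\Delta f_p$ by a higher norm of $f_p$. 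This is exactly the obstacle, and it is resolved the way Lebeau–Michel handle it: one does not gain all of $H^{2m}$ at once, but rather gains a \emph{fixed fractional amount} (here, essentially two derivatives modulo the four-derivative loss) per iteration, running the bootstrap enough times. Concretely, elliptic regularity $\|g\|_{H^{2m}}\leq c(\|\Delta^m g\|_{L_2}+\|g\|_{L_2})$ together with iterating the identity $\Delta f_p = 4\pi(\cA_p - R_p)f_p$ lets one express $\Delta^m f_p$ in terms of $f_p$ and lower-order Sobolev norms of $f_p$ plus error operators, each application of $\cA_p$ producing a factor $p(1-\mu_p)\leq L$ and each application of the error producing a bounded operator; the four-derivative loss in Proposition \ref{heatexpprop} only means one must iterate $\sim 2m+$const times rather than $m$ times, but since $m$ is fixed the total number of steps is finite and all constants are $p$-independent.

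The main obstacle, then, is purely bookkeeping: organizing the iteration so that at each stage the Sobolev index genuinely increases (despite the $+4$ in Proposition \ref{heatexpprop}) while all constants stay uniform in $p$ and depend only on $L$ and $m$. The clean way is to prove, by a finite induction, the intermediate claim: \emph{for every $s\in\N$ there is $C_{L,s}$ with $\|f_p\|_{H^s}\leq C_{L,s}$}, where the inductive step from $s-1$ (and all smaller indices) to $s$ uses $\|f_p\|_{H^s}\lesssim \|\Delta^{\lceil s/2\rceil}f_p\|_{L_2}+\|f_p\|_{L_2}$, then rewrites each $\Delta$ via $\Delta = 4\pi\cA_p - 4\pi R_p$, commutes derivatives past $R_p$ at the cost of lower-order terms (as in the end of the proof of Proposition \ref{boundexp}), and invokes $\cA_p f_p = p(1-\mu_p)f_p$ together with the induction hypothesis and the uniform bound $L$. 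Since \eqref{hjrefined} only asks for even indices $2m$, this suffices. No micro-local input is needed; everything rests on Propositions \ref{heatexpprop}, \ref{KS}, \ref{boundexp}, the norm equivalences \eqref{diagexp}, and standard elliptic estimates \eqref{ellest}.
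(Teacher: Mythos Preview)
There is a genuine gap in your bootstrap. You write $\cA_p = \tfrac{\Delta}{4\pi} + R_p$ and claim that ``each application of the error produc[es] a bounded operator,'' but this is false. Split $R_p = R_p^{(1)} + R_p^{(2)}$ with $R_p^{(1)} = p(e^{-\Delta/4\pi p}-\cB_p)$ and $R_p^{(2)} = p(\id - e^{-\Delta/4\pi p}) - \tfrac{\Delta}{4\pi}$. Proposition~\ref{boundexp} does give $\|R_p^{(1)}\|_{H^m\to H^m}\leq C_m$ uniformly in $p$, but $R_p^{(2)}$ acts on the $\lambda$-eigenspace of $\Delta$ as multiplication by $r_p(\lambda)=p(1-e^{-\lambda/4\pi p})-\lambda/4\pi$, and for fixed $p$ one has $r_p(\lambda)\sim -\lambda/4\pi$ as $\lambda\to\infty$. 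Thus $R_p^{(2)}$ is itself a second-order operator, of the \emph{same} order as $\Delta$. The bound from Proposition~\ref{heatexpprop} is $\|R_p^{(2)}f\|_{H^m}\leq C_m p^{-1}\|f\|_{H^{m+4}}$, so when you try to control $\|\Delta^{m}f_p\|_{L_2}$ via the identity $\Delta f_p = 4\pi p(1-\mu_p)f_p - 4\pi R_p f_p$, the term $\Delta^{m-1}R_p^{(2)}f_p$ forces either $\|f_p\|_{H^{2m}}$ (with no small constant) or $p^{-1}\|f_p\|_{H^{2m+2}}$ onto the right-hand side, and the induction does not close. Iterating more times does not help: each application of $\Delta$ gains two derivatives but each $R_p^{(2)}$ costs two, and applying $\cA_p$ to anything other than $f_p$ itself produces a factor of order $p$, not $L$.

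The paper's proof avoids this by never comparing $p(\id-e^{-\Delta/4\pi p})$ with $\Delta/4\pi$. Instead it writes $p(\id-e^{-\Delta/4\pi p})=\Delta F(\Delta/p)$ with $F(s)\sim (1-e^{-s/4\pi})/s$ bounded, and uses Proposition~\ref{boundexp} directly to get $\|F(\Delta/p)f_p\|_{H^{2m}}\leq C\|f_p\|_{H^{2m-2}}$. The crucial step you are missing is then a \emph{lower} bound on $\|F(\Delta/p)f_p\|_{H^{2m}}$: one observes that $F(\Delta/p)f_p+(\cB_p-e^{-\Delta/4\pi p})f_p = (F(\Delta/p)+\mu_p - e^{-\Delta/4\pi p})f_p$, and checks the elementary inequality $F(s)+\mu_p-e^{-s/4\pi}\geq\mu_p$ for all $s\geq 0$. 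Since this operator commutes with $\Delta$, one gets $\|F(\Delta/p)f_p\|_{H^{2m}}\geq(\mu_p - C_m p^{-1})\|f_p\|_{H^{2m}}$, and as $\mu_p\to 1$ this closes the induction cleanly.
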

\begin{proof}
Note that \eqref{hjrefined} is automatically verified for $m=0$
by \eqref{diagexp} and \eqref{efBp}.
By induction on $m\in\N$, let us assume that \eqref{hjrefined}
is satisfied for $m-1$. Let us write
\begin{equation}\label{deltFest}
\begin{split}
p(e^{-\frac{\Delta}{4\pi p}}-\cB_p)f_p&=p(1-\mu_p)f_p-
p(\id-e^{-\frac{\Delta}{4\pi p}})f_p\\
&=p(1-\mu_p)f_p-\Delta F(\Delta/p)f_p\;,
\end{split}
\end{equation}
where the bounded operator $F(\Delta/p)$ acting on $L_2(X,\C)$
is defined as in \eqref{calculfct} for the continuous function
$F:\R\fl\R$ given for any $s\in\R^*$ by
$F(s)=4\pi(1-e^{-s/4\pi})/s$.
%\todo{$e^{s/4\pi}$ changed to $e^{-s/4\pi}$}
As $|p(1-\mu_p)|<L$ for all $p\in\N^*$, by \cref{boundexp} and
formula \eqref{ellest} for $\|\cdot\|_{H^{2m}}$,
this gives a constant $C_m>0$ such that
\begin{equation}\label{sobfest}
\|F(\Delta/p)f_p\|_{H^{2m}}\leq C_m\|f_p\|_{H^{2m-2}}\;.
\end{equation}
On the other hand, note that by hypothesis, we have $\mu_p\fl 1$
as $p\fl+\infty$. Using \cref{boundexp} again, we then get
$\epsilon_m>0$ and $p_m\in\N^*$ such that for all $p>p_m$,
\begin{equation}\label{hjdeltFest}
\begin{split}
\|F(\Delta/p)f_p\|_{H^{2m}}&\geq
\|F(\Delta/p)f_p+(\cB_p-e^{-\frac{\Delta}{4\pi p}})f_p\|_{H^{2m}}
-\|(\cB_p-e^{-\frac{\Delta}{4\pi p}})f_p\|_{H^{2m}}\\
&\geq\inf_{s>0}\,\{F(s)+\mu_p-e^{-s/4\pi}\}\,\|f_p\|_{H^{2m}}
-C_m p^{-1}\|f_p\|_{H^{2m}}\\
&\geq\epsilon_m\|f_p\|_{H^{2m}}\;.
\end{split}
\end{equation}
This together with \eqref{sobfest} gives \eqref{hjrefined}.
\end{proof}

\medskip\noindent{\bf Proof of Theorem \ref{thm-quant}.}
For any $f\in\cinf(X,\C)$, by Proposition \ref{KS},
%\todo{"Propositions" changed to "Proposition"}
we get that
\begin{equation}\label{KSSob}
\left\|p(\id-\cB_p)f-\frac{\Delta}{4\pi}f\right\|_{L_2}
\leq Cp^{-1}|f|_{\CC^{4}}
\leq Cp^{-1}\|f\|_{H^{q}}\;,
\end{equation}
with $q$ even and large enough. The inequality on the right follows from Sobolev embedding theorem, and
the same is true in $L_2(X,\alpha_p)$-norm by \eqref{diagexp}.
Let now $j\in\N$ be fixed.
If $e_j\in\cinf(X,\C)$ satisfies
$\Delta e_j=\lambda_j e_j$ and $\|e_j\|_{L_2}=1$,
then by  \eqref{KSSob} we get
$C_j>0$ not depending on $p\in\N^*$ such that
\begin{equation}\label{estevdelt}
\left\|p(\id-\cB_p)e_j-\frac{\lambda_j}{4\pi}e_j\right\|
_p\leq C_j p^{-1}\;.
\end{equation}
Thus if $m_j\in\N$ is the multiplicity of $\lambda_j$ as an
eigenvalue of $\Delta$, the estimate \eqref{estevdelt} for all
eigenfunctions of $\Delta$ associated with $\lambda_j$
gives a constant $C>0$ such that
\begin{equation}\label{spec>}
\#\left(\Spec\big(p(\id-\cB_p)\big)\cap\left[\frac{\lambda_j}{4\pi}-Cp^{-1},\frac{\lambda_j}{4\pi}+Cp^{-1}\right]\right)
\geq m_j\;.
\end{equation}
%\todo{added "acting on $L_2(X,\alpha_p)$"}
This immediately follows from the variational principle for the operator $p(\id-\cB_p)-\frac{\lambda_j}{4\pi}\id$
acting on $L_2(X,\alpha_p)$.

Consider now for every $p\in\N^*$ a normalized eigenfunction
%\todo{"an eigenfunction" replace by "a normalized eigenfunction"}
$f_p\in\cinf(X,\C)$ of $\cB_p$ as in \eqref{efBp}
such that the associated sequence
$\{p(1-\mu_p)\}_{p\in\N^*}$ of eigenvalues of $p(\id-\cB_p)$
is bounded. Combining Lemma \ref{hjrefprop} with the right
inequality in \eqref{KSSob}, we get $C>0$ such that
\begin{equation}\label{specinv>}
\|p(1-\mu_p)f_p-\Delta f_p\|_{L_2}\leq Cp^{-1}\;.
\end{equation}
In particular, we get that
\begin{equation}\label{eq-spec-est}
\textup{dist}\left(p(1-\mu_p),\Spec\Delta\right)\leq Cp^{-1}\;.
\end{equation}

Finally, let us show that there exists $p_0\in\N^*$ such that
\eqref{spec>} is in fact an equality
for $p>p_0$. To this end, let $l\in\N^*$ with $l\geq m_j$
be such that for all $p\in\N^*$, there exists
an orthonormal family $f_{k,p},\,1\leq k\leq l,$
of eigenfunctions of $\cB_p$ in $L_2(X,d\alpha_p)$ with associated
eigenvalues $\mu_{k,p}\in\R,\,1\leq k\leq l,$
satisfying
\begin{equation}
p(1-\mu_{k,p})\in[\lambda_j-Cp^{-1},\lambda_j+Cp^{-1}]\;,~~
\text{for all}~~1\leq k\leq l\;.
\end{equation}
As the inclusion of the Sobolev space $H^q$ in $H^{q-1}$ is compact,
by Lemma \ref{hjrefprop} there exists a subsequence of
$\{f_{k,p}\}_{p\in\N^*}$ converging
to a function $f_k$ in $H^{q-1}$-norm,
for all $1\leq k\leq l$.
In particular, taking $q>2$,
the family $f_k,\,1\leq k\leq l$,
is orthonormal in $L_2(X,\C)$
and satisfies $\Delta f_k=\lambda_j f_k$ for all $1\leq k\leq l$
%\todo{"$k\in\N^*$" replaced by "$1\leq k\leq l$"}
by
\eqref{specinv>}. By definition of the multiplicity $m_j\in\N$
of $\lambda_j$, this forces $l=m_j$.

Let us sum up our findings. First, the equality
$$
\#\left(\Spec\big(p(\id-\cB_p)\big)\cap\left[\frac{\lambda_j}{4\pi}-Cp^{-1},\frac{\lambda_j}{4\pi}+Cp^{-1}\right]\right)
= m_j\;,
$$
where $m_j$ is the multiplicity of $\lambda_j$ as the eigenvalue of $\Delta$, together with \eqref{eq-spec-est} readily yields the first statement of the theorem:
$$1-\gamma_{k,p} =\frac{1}{4\pi p}\lambda_k + \bigo(p^{-2})\;.$$
Second, observe that we got a subsequence of $f_{k,p}$, $p \in \N^*$ converging to $f_k$ in the Sobolev $H^{q-1}$ sense, where $q$ even
can be chosen arbitrarily large. By the Sobolev embedding theorem, this yields a subsequence which
$\CC^l$-converges to $f_k$ with arbitary $l$. Iterating this argument for this subsequence we get that
there exists a sequence $p_l \to +\infty$ such that
$$|f_{k,p_l} - f_k|_{\CC^l} \leq 1/l\;,$$
which means that $f_{k,p_l}$ converges to $f_k$ in the $\CC^\infty$-sense.
This completes the proof.\qed\\

\medskip\noindent{\bf Proof of Theorem \ref{cor-quant}.}
For any $p\in\N^*$, using equation \eqref{iota=RT} for
$\iota_p$ and combining the definition \eqref{QKdef} of
the $Q_{K,p}$-operator with
formula \eqref{Bpfla} for the Berezin
transform $\cB_p= \frac{1}{n_p} T^*_p T_p$
acting on $f\in\cinf(X,\C)$, we get
\begin{equation}\label{Q_K}
(Q_{K,p} f)(x)= \frac{\text{Vol}(X)}{n_p} R_p(x) \cB_p(f)(x)
   = \frac{\text{Vol}(X)}{n_p}
\int_X |\Pi_p(x,y)|^2\,f(y)\,dv_X(y)\;.
\end{equation}
We will show that when the scalar curvature is constant,
the analogue of Theorem \ref{thm-quant} holds for this operator.
As $p'/p = 1 + \bigo(p^{-1})$ by the Riemann-Roch theorem,
this will imply Theorem \ref{cor-quant} via the morphism $\iota_p$
which relates $Q_p$ with $Q_{K,p}$, see \eqref{QKdef}.

Recall that $R_p:X\to\R$ denotes the Rawnsley function, and
that $n_p=\dim_\C\HH_p$.
By the classical asymptotic expansion of the Bergman kernel,
which can be found for example in \cite[\S\,4.1.1]{MM07}, we know
that when the scalar curvature is constant, we have
\begin{equation}
\frac{\Vol(X)}{n_p}R_p=1+O(p^{-2})\;.
\end{equation}
As this expansion holds in $\cC^m$-norm for all $m\in\N^*$
and by the definition $\cB_p$ and $Q_{K,p}$ in formulas
\eqref{Bpfla} and \eqref{Q_K} respectively, we get a constant
$C_m>0$ for any $m\in\N^*$ such that
\begin{equation}\label{QK-Bp}
\|Q_{K,p}-\cB_p\|_{H^m}\leq C_m p^{-2}.
\end{equation}
It is then easy to see that Lemma \ref{hjrefprop} holds
for any sequence $\{f_p\}_{p\in\N^*}$
with $f_p\in\cinf(X,\C)$ such that
\begin{equation}\label{efQKp}
\|f_p\|_{L_2}=1\;,\quad Q_{K,p}f_p=\mu_p f_p\;,
\end{equation}
with $\{p(1-\mu_p)\}_{p\in\N^*}$ bounded, simply using the estimate
\eqref{QK-Bp} to replace
$B_p$ by $Q_{K,p}$ in \eqref{deltFest} and \eqref{hjdeltFest}.
We can then follow the proof
of Theorem \ref{thm-quant} above to get the same result for
$Q_{K,p}$,  using the estimate \eqref{QK-Bp} to replace
$\cB_p$ by $Q_{K,p}$ in \eqref{KSSob} and \eqref{estevdelt}, and
using \eqref{diagexp} to replace
$\|\cdot\|_p$ by $\|\cdot\|_{L_2}$ in \eqref{estevdelt}.
Finally, the form \eqref{ipAp} for the normalized sequence
of eigenfunction of $Q_{K,p}$ follows from the fact that
$\iota_pQ_p = Q_{K,p}\iota_p$ by definition \eqref{QKdef}
of $Q_p$ and $Q_{K,p}$.
This completes the proof.
\qed\\

\begin{rem}
%\todo{an unofficial remark was promoted into an official one}
\rm{
\cref{thm-quant} can be extended to the case of a general
closed symplectic manifold $(X,\omega)$ of real dimension $2d$,
and $(L,h,\nabla)$ a Hermitian line
bundle with Hermitian connection $\nabla$ of curvature
$-2\pi i\om$.
In fact, one can in general consider the following
\emph{renormalized Bochner Laplacian} acting on $\cinf(X,L^p)$ for
any $p\in\N^*$, first introduced by Guillemin
and Uribe \cite{GU88},
\begin{equation}
\Delta_p:=\Delta^{L^p}-2\pi dp\;,
\end{equation}
where $\Delta^{L^p}$ stands for the usual Bochner Laplacian
on $L^p$. By \cite[Th.\,2.a]{GU88}, the spectrum of
$\Delta_p$ is contained in $I\,\cup\,(C_1p-C_2,+\infty)$
for all $p\in\N^*$, for some $C_1,\,C_2>0$ and some interval
$I\subset\R$ containing $0$.
We can then consider $\Pi_p$ as the associated spectral projection
corresponding to $I$ and set $\cH_p=\Im(\Pi_p)$.
Using the work \cite{MM08} of Ma and Marinescu on the kernel of
$\Pi_p$, we can then consider the Berezin-Toeplitz POVM
of Section \ref{subsec-BLB}. By \cite[(2.31),\,(3.2)]{LMM16},
the Berezin transform admits an asymptotic expansion
similar to \cref{BTasy}, except for the formula
\eqref{|J|0}, where we only have $J_{1,x_0}(0,Z')=0$ for all
$Z'\in\R^{2d}$ as a consequence of \cite[Lem.6.1,\,Lem.6.2]{ILMM17}
(see also \cite[(2.32)]{MM08}).
%\todo{confusing comment removed}
Then \cref{KS,boundexp}
hold, and it is
straightforward to adapt the rest of the proof of \cref{thm-quant}
in \cref{specsec}.
Note that the corresponding estimates in \cref{KS} and
\cref{boundexp} can be seen as refinements of \cite{LMM16}.
}
\end{rem}

\begin{rem}\label{weightedBT}
%\todo{remark on the weighted Berezin transform}
\rm{On the other hand, \cref{thm-quant} can be extended
to the case of \emph{weighted Berezin transforms}, introduced
by Englis in \cite{E00} in the case of pseudoconvex domains.
%\todo{citation to Englis added}
This corresponds to the case where one replaces the canonical
volume form $\om^d/d!$ by a general smooth volume form $\nu$
in the setting of
Section \ref{subsec-BLB}. In fact, let us consider the
Hilbert space $\cH_{\nu,p}$ of global holomorphic sections
of $L^p$ together with the $L_2$-inner product with respect
to the measure $d\nu$ instead of the Liouville measure $dv_X$.
Then using the trick of Ma and
Marinescu in \cite[\S\,4.1.9]{MM07}, we can define for any
$p\in\N^*$ large enough the $\cL(\cH_{\nu,p})$-valued POVM
\begin{equation}\label{Wnup}
dW_{\nu,p}=n_pF_{\nu,p}d\alpha_{\nu,p}\,,
\end{equation}
where $F_{\nu,p}:X\rightarrow\cS(\cH_{\nu,p})$ is the
map sending $x\in X$ to the
orthogonal projector with kernel the space of sections
vanishing at $x\in X$ and $\alpha_{p,\nu}$
is given by
\begin{equation}\label{eq-alphahbarnu}
d\alpha_{\nu,p}(x) = \frac{R_{\nu,p}(x)}{n_p}d\nu(x)\;,
\end{equation}
where $R_{\nu,p}:X\rightarrow\R$ is the weighted
Rawnsley function, given by the value on the diagonal
of the Schwarz kernel with respect to $\nu$
of the orthogonal projector operator
$\Pi_{\nu,p}:L^2(X,L^p,d\nu)\rightarrow\cH_{\nu,p}$.
Using \cite[\S\,4.19]{MM07} again as well
as the general version of the expansion \cref{BTasy} given
in \cite[Th.\,4.18']{DLM06}, the proof of \cref{thm-quant}
above extends verbatim to this case, to get the estimate
\begin{equation}\label{eq-LBnu}
1-\gamma_{\nu,k,p} =\frac{1}{4\pi p}\lambda_k + \bigo(p^{-2})\;
\end{equation}
as $p\rightarrow\infty$, where $\gamma_{\nu,k,p}$ is the
$k$-th eigenvalue of the Berezin transform of $W_{\nu,p}$
and $\lambda_k$ is the $k$-th eigenvalue of the Laplace-Beltrami
operator associated with the Kähler metric, for all $k\in\N$.
Note in particular that the first term of the right hand side of
equation \eqref{eq-LBnu} does not depend on the choice of the
smooth volume form $\nu$. It would be interesting to understand
the general mechanism behind this fact, in the spirit of
\cref{thm-robust}.(ii), showing that the spectral gap of
$W_{\nu,p}$ is constant up to $O(1/p^2)$ under deformations
of $\nu$.

%On the other
%hand, the proof extends with no modifications to the case
%where $L^p$ is replaced by $L^p\otimes E$ for any Hermitian
%vector bundle $E$ equipped with an Hermitian connection,
%again by the results of \cite{MM08}.
}
\end{rem}

\section{Berezin transform and Donaldson's iterations}\label{sec-don}
In \cite{D} Donaldson, as a part of his program of developing approximate
methods for detecting canonical metrics on K\"{a}hler manifolds, discovered a remarkable
class of dynamical systems on the space of all Hermitian inner products on a given complex vector space. We shall show in this section that the linearization of such a system  at a fixed point can be identified
with the quantum channel introduced in \eqref{eq-e} above and
prove that under certain natural
assumptions, it is injective and has strictly positive
spectral gap. Using earlier results by Donaldson, we will then
deduce the main result of this section, Theorem \ref{thm-expcvcor}, stating that the iterations of this system converge exponentially fast to the fixed point.

For a complex $n$-dimensional vector space $\cV$, denote by $\Prod(\cV)$ the space
of Hermitian inner products on $\cV$. Given such a $q\in\Prod(\cV)$,
let $\cH:=(\cV,q)$ be the corresponding Hilbert space,
and define a map
\begin{equation}\label{Psiqdef}
\Phi_q: \PPP (\cV^*)\longrightarrow\cL(\cH)
\end{equation}
sending a hyperplane $H\subset\cV$, naturally seen as an element of
$\PPP(\cV^*)$ via the kernel of linear forms,
to the unique orthogonal projector $\Phi_q(H)\in\cL(\cH)$
with respect to $q$ satisfying $\Ker\Phi_q(H)=H$.
%
%Every product $q \in \Prod(\cV)$ gives rise
%to an antilinear Riesz map $I_q:\cV\to\cV^*$, defined by
%the formula
%$I_q(v) = q(\cdot, v)$, for all $v\in\cV$.
%This induces in turn a product $q^* \in \Prod(\cV^*)$
%by the formula
%$q^*(\xi,\eta):=q(I_q^{-1}\eta, I_q^{-1}\xi)$,
%for all $\xi,\,\eta\in\cV^*$. We write $\cH^*:=(\cV^*,q^*)$
%for the corresponding Hilbert space.

Let $\nu$ be a Borel measure on $\PPP (\cV^*)$, so that
$|\nu|:= \nu(\PPP (\cV^*)) < \infty$.
Following Donaldson \cite[p.\,581]{D}, we say that
$q \in \Prod(\cV)$ is $\nu$-{\it balanced} if
the operator-valued measure
\begin{equation} \label{eq-Donaldson-POVM}
dW_q(z):= n\,\Phi_q(z) \frac{d\nu(z)}{|\nu|}\;,
\end{equation}
defines an $\cL(\cH)$-valued POVM on $\PPP (\cV^*)$
as in \eqref{eq-POVM-density}.
This translates into the condition
\begin{equation} \label{eq-Donaldson-POVM-1}
n\,\int_{\PPP (\cV^*)} \Phi_q(z) \frac{d\nu(z)}{|\nu|} = \id\;.
\end{equation}

\medskip
\noindent
\begin{exam}\label{exam-general}{\rm
Consider a Hilbert space $\cH=(\cV,q)$ with $\dim_\C\cH=n$, and
let $W$ be a pure $\cL(\cH)$-valued POVM, defined as in formula
\eqref{eq-POVM-density}. Let us identify the measure
$\alpha$ on $\Om$ with a measure on $\PPP(\cV^*)$
via push-forward by the associated map
\begin{equation}\label{Fdef}
F:\Om\longrightarrow\PPP(\cV^*)\;,
\end{equation}
where $\PPP(\cV^*)$, seen as the set of hyperplanes in $\cV$,
is identified with the set of rank one projectors
in $\cS(\cH)\subset\cL(\cH)$ via their kernel as above.
%gives rise
%to a $\nu$-balanced Hermitian product on $\cH^*$, w. Indeed,
%let $\Hilb$ be a complex $n$-dimensional Hilbert space with a Hermitian product $q$.
%Write $q^*$ for the dual product on $\Hilb^*$. Assume that we are given an $\cL(\Hilb)$-valued pure POVM on $\Omega$ satisfying equation \eqref{eq-POVM-density}, i.e., of the form $dW = n\Phi_q \circ F_q\,d\alpha$ for some
%$F:\Omega \to \mathbb{P}(\cH)$. Let $\sigma_W$ be the pushforward measure $F_*\alpha$ on $\mathbb{P}(\cH)$.
It is then an immediate consequence of the definitions that
$q\in\Prod(\cV)$ is $\alpha$-balanced. }
\end{exam}

\medskip\noindent\begin{exam}\label{exam-cloudK} {\rm
As a particular case of Example \ref{exam-general},
consider the Berezin-Toeplitz POVM $W_p$
on a closed quantizable K\"{a}hler
manifold $X$ associated to a Hermitian holomorphic line bundle $L$
for $p\in\N^*$ large enough, as in Section \ref{subsec-BLB}.
%Fix an orthonormal basis in the $n_p$-dimensional Hilbert space $\Hilb_p$ of global holomorphic sections
%of $L^p$, and identify the subset of pure states (i.e., rank one projectors) in $\cS(\Hilb_p)$ with $\C P^{n_p -1}$.
The associated Hilbert space is $\cH_p=(H^0(X,L^p),q_p)$,
where $H^0(X,L^p)$ is the space of holomorphic sections of $L^p$
and $q_p\in\Prod(H^0(X,L^p))$ is the $L_2$-Hermitian product
induced by the Kähler metric. In this case, the map \eqref{Fdef}
is given by the Kodaira embedding
\begin{equation}\label{Kodemb}
F_p: X \longrightarrow \PPP(H^0(X,L^p)^*)\;,
\end{equation}
and we get as a special case of the previous example that
$q_p\in\Prod(H^0(X,L^p))$
is $\alpha_p$-balanced. Then following e.g.
\cite[Prop.\,8.3]{Fine-lectures} and by formula
\eqref{eq-alphahbar} for $\alpha_p$, the data
$(X,L^p,h^p)$ is balanced in the sense of
Definition \ref{def-balanced} if and only if the product
$q_p$ is $dv_X$-balanced.

 }\end{exam}
%%%%%%%%%%%%%%%%%%%%%%%%%%%%
\medskip
\noindent
\begin{exam}\label{exam-balanced}
%\todo{example rewritten in view of Remark \ref{weightedBT} and
%Remark \ref{thmquantnu}}
{\rm Let $X$ be a complex manifold together with a holomorphic
line bundle $L$ over $X$ such that the Kodaira map $F_p$
given by \eqref{Kodemb} is an embedding for $p\in\N^*$
sufficiently large, and let $\nu$ be a smooth volume form over $X$.
Then $L^p$ over $X$ is naturally identified with the pullback
by $F_p$ of the dual of the
tautological line bundle over $\PPP(H^0(X,L^p)^*)$, and
given a Hermitian inner product $q_p\in\Prod(H^0(X,L^p))$
on $H^0(X,L^p)$, we write $h_p$ for the Hermitian metric
induced on $L^p$ by the corresponding Fubini-Study metric.
Then by e.g. \cite[p.\,581]{D},
the product $q_p\in\Prod(H^0(X,L^p))$ is $\nu$-balanced
if and only if it coincides up to constant with the $L_2$-inner
product on $H^0(X,L^p)$ induced by $h_p$ and the measure $d\nu$.
On the other hand, following \cite[\S\,4.19,\S\,5.1.4]{MM07},
the last assertion of Example \ref{exam-cloudK}
holds in the same way when one replaces $dv_X$ by $d\nu$,
so that the weighted Rawnsley function
$R_{\nu,p}:X\rightarrow\R$ of Remark \ref{weightedBT} is
constant if and only if $q_p$ is $\nu$-balanced.
This shows that if $q_p$ is $\nu$-balanced,
the induced POVM \eqref{eq-Donaldson-POVM}
coincides with the Berezin-Toeplitz POVM \eqref{Wnup}
weighted by $\nu$ of Remark \ref{weightedBT}.

Donaldson proved \cite[p.\,582]{D} (see also an extensive discussion below)
that for every $p\in\N^*$ large enough, there
always exists a unique $\nu$-balanced Hermitian inner product
$q_p\in\Prod(H^0(X,L^p))$. For $p \in N^*$ large enough, consider the symplectic form
$\omega_p$ on $X$ obtained by the pull-back under the Kodaira map $F_p$
of the Fubini-Study form on $\PPP(H^0(X,L^p)^*)$ corresponding to $q_p$.
Equivalently, $-2i\pi\om_p$ is the Chern curvature of $h_p$.
By \cite[p.\,584]{D} (see also \cite{Keller09}), the sequence
$\{\frac{1}{p}\omega_p\}_p$  converges as $p\rightarrow\infty$
to the unique Kähler form $\om_\infty$ in $c_1(L)$
solving the \emph{Calabi problem} $\om^d=c\nu$, for some
$c>0$. This illustrates the role of $\nu$-balanced
products as finite-dimensional approximations of the solution
of the Calabi problem.

% Donaldson showed (and we shall reprove this below) that for every sufficiently large $p$, there exists unique $\nu$-balanced Hermitian  inner product
%$q_p\in\Prod(H^0(X,L^p))$.  The induced Fubini-Study metric on
%the tautological line bundle over $\mathbb{P}(H^0(X,L^p)^*)$
%gives rise to a Hermitian metric $h_p$ on $L$. The latter induces
%a Kähler form $\om_p$ on $X$ such that the Chern connection of $h_p$
%equals $-2\pi i\om_p$. Equivalently, this K\"{a}hler form is defined as $p^{-1}$ times the pull-back of the Fubini-Study form on $\PPP(H^0(X,L^p)^*)$ associated to $q_p$.
%the sequences $\{h_p\}$ and  converge
%as $p \to \infty$ to a Hermitian metric $h$ on $L$ and a K\"{a}hler form $\omega$ on $X$,
%respectively, so that $-2\pi i\omega$ is the curvature of the Chern connection of $h$ and
%$\omega$ solves the Calabi problem for $\nu$: $\omega^d = \nu$ .

}
\end{exam}
%%%%%%%%%%%%%%%%%%%%%%%%%%%%%%%%%%%%%%%%%%%%%5

\medskip

Under some natural assumptions on the measure $\nu$,  the existence
of $\nu$-balanced Hermitian inner products  was established by
Bourguignon, Li and Yau \cite{BLY94}, where they use such products
to give an upper bound
for the first eigenvalue of the Laplacian
of complex manifolds embedded in the projective space.
This generalizes the seminal work
of Hersch \cite{Her70}, where he shows that the first eigenvalue
of any metric over $S^2$ is smaller than the one of the
round metric, using the notion of balanced
product in its simplest form.

Following instead Donaldson in \cite{D}, let us associate to a measure
$\nu$ on $\PPP(\cV^*)$ the dynamical
system $\TT_\nu:\Prod(\cV)\rightarrow\Prod(\cV)$ defined
for all $q\in\Prod(\cV)$ by
\begin{equation}\label{Tnudefcan}
\TT_\nu(q):=n\,\int_{\PPP (\cV^*)} q\left(\Phi_q(z)\,\cdot\,,\,\cdot\,\right)
\,\frac{d\nu(z)}{|\nu|}\,.
\end{equation}
Using condition \eqref{eq-Donaldson-POVM-1}, we then see that
$q\in\Prod(\cV)$ is $\nu$-balanced
if and only if it is a fixed point of $\TT_\nu$.
Under mild conditions on the measure $\nu$,
Donaldson proved that for every initial condition
$q_0\in\Prod(\cV)$, the iterations $\TT_\nu^r (q_0)$
converge to such a fixed point as $r \to +\infty$,
and that this fixed point is
unique up to the action of $\R_+$ on $\Prod(\cV)$ by scalar
multiplication.

The main result of
this section is the
{\it exponential convergence} of Donaldson's iteration
process to the $\nu$-balanced product, for all initial conditions.

\begin{thm}\label{thm-expcvcor}
Suppose that the measure $\nu$ on $\PPP(\cV^*)$
is supported on a complex subvariety
$Y \subset \PPP(\cV^*)$, with $\nu$ absolutely
continuous on every irreducible component
of $Y$. Assume that
\begin{itemize}
\item[{(i)}] for any projective
subspace $\Sigma$ of $\PPP(\cV^*)$, we have
\begin{equation}
\label{eq-spade-1}
\frac{\nu(\Sigma)}{\dim \Sigma +1} < \frac{|\nu|}{n}\;;
\end{equation}
\item[{(ii)}] at least one irreducible
component of $Y$ is not contained in any proper projective
subspace of $\PPP(\cV^*)$.
\end{itemize}
Then for any $q_0\in\Prod(\cV)$,
there exists a $\nu$-balanced product $q\in\Prod(\cV)$ and constants
$C>0,\,\beta\in(0,1),$
 such that for all $r\in\N$, we have
\begin{equation}\label{expcvest}
\textup{dist}(\TT_\nu^r(q_0),q)\leq C\beta^r\;.
\end{equation}
\end{thm}

\medskip\noindent

Note that if $Y$ is irreducible, assumptions $(i)$ and $(ii)$
are satisfied as soon as $Y$ is not contained in a proper
projective subspace of $\PPP(\cV^*)$. Thus Theorem \ref{thm-expcvcor}
applies in particular to the important case
of Example \ref{exam-balanced}, where $\nu$ is induced by a smooth
volume form over a complex
manifold $Y$ embedded in a projective space via Kodaira embedding.
Conversely, if the whole variety $Y$
(in contrast with its irreducible components)
lies in a proper projective subspace of $\PPP(\cV^*)$,
then there exists $u\in\cV$ such
that $\Phi_q(z)\,u=0$ for all $z\in Y$, contradicting
condition \eqref{eq-Donaldson-POVM-1}, so that there does not
exist any $\nu$-balanced Hermitian product.

The proof of Theorem \ref{thm-expcvcor} will rely on the
Propositions
\ref{Psicv}, \ref{prop-D2} and \ref{prop-D3} below.
In particular, Proposition \ref{Psicv} generalizes the result
of Donaldson in \cite[p.\,581]{D}, which essentially
states that the iterations $\TT_\nu^r(q_0)$
converge to a $\nu$-balanced product as $r\rightarrow+\infty$
for all $q_0\in\Prod(\cV)$
if either $Y$ is a complex variety which is not contained in any proper projective subspace, or $Y$
is a finite collection of points satisfying $(i)$.
Specifically, Donaldson's
assumption 2 in \cite[p.\,581]{D} is precisely assumption $(i)$
in the case $\dim Y=0$, and Donaldson's
assumption 1 in \cite[p.\,581]{D} is satisfied in the
case $Y$ is a complex variety which is not contained in any proper projective subspace, but do not imply assumption $(ii)$ in general.
The proof of Proposition \ref{Psicv}
follows closely the lines of \cite[p.\,581]{D}.

On the other hand, the role of Propositions \ref{prop-D2}
and \ref{prop-D3} in the proof of Theorem \ref{thm-expcvcor} is based
on the key observation, which is a reformulation of
\cite[p.\,609]{D}, that the linearization of $\TT_\nu$
at a fixed point $q\in\Prod(\cV)$ coincides
with the quantum channel \eqref{eq-e} of the POVM
\eqref{eq-Donaldson-POVM}
associated with $q$.

To see this, let us first choose a base point
$q_0\in\Prod(\cV)$ and identify $(\cV,q_0)$ with
$(\C^n, \langle\cdot,\cdot\rangle)$, where $\langle z,w\rangle = \sum_j z_j\bar{w}_j$.
Writing $\cL(\C^n)_+$ for the set of positive
Hermitian $n \times n$ matrices, this identifies
$G \in \cL(\C^n)_+$ with $q(\cdot,\cdot):= \<G\cdot,\cdot\>\in\Prod(\cV)$.
Next, identify $[z] \in \C P^{n-1}$ with the hyperplane
$$\{w\;:\; \< z,w\>= q(G^{-1}z,w) = 0\}\;.$$
From the definition \eqref{Psiqdef} of $\Phi_q$
we have
$$\Phi_q([z])\xi = \frac{\<G\xi,G^{-1}z \>}{\<GG^{-1}z,G^{-1}z\>} G^{-1}z
=G^{-1} \left(\frac{\<\xi,z\>}{|z|^2}z\right) \cdot \frac{|z|^2}{\<G^{-1}z,z\>}
=G^{-1}\Pi_z\xi \cdot \frac{|z|^2}{\<G^{-1}z,z\> }    \;,$$
where $\Pi_z$ denotes the orthogonal projector
with respect to $\<\cdot,\cdot\>$
to the line generated by $z \in \C^n\backslash\{0\}$.
Thus,
$$q(\Phi_q(z)\xi,\xi) = \<\Pi_z\xi,\xi\> \cdot \frac{|z|^2}{\<G^{-1}z,z\> }    \;.$$
Therefore, we can reformulate the definition \eqref{Tnudefcan}
of $\TT_\nu$ in coordinates by the formula
\begin{equation}\label{Tnudef}
\TT_\nu(G)=n\,\int_{\C P^{n-1}} \Pi_z\,\frac{|z|^2}
{\langle G^{-1} z,z\rangle }\,\frac{d\nu(z)}{|\nu|}\;.
\end{equation}

%
%let $q\in\Prod(\cV)$ be a $\nu$-balanced
%Hermitian product,
%which we identify with a fixed point $G\in\cL(\C^n)_+$ of
%$\TT_\nu$ as above, and let $W$ be the associated POVM as
%in \eqref{eq-Donaldson-POVM}. Note that the two points of view
%are linked via composition by $G$, which identifies the space
%$\cL(\C^n)=\cL(\cV,q_0)$ with
%$\cL(\cH)=\cL(\cV,q)$. Note that if we choose
%$q_0=q$ as a base point, then we have $G=\id$ and this
%identification is an equality.
%
%
%
%
%and we wish to find a $\nu$-balanced product $q\in\Prod(\cV)$.
%First choose any base point $q_0\in\Prod(\cV)$, and
%identify
%On the other hand, under the identification of
%$(\C^n)^*$ with $\C^n$ induced by $\<\cdot,\cdot\>$,
%this identifies $\PPP(\cV^*)$ with $\C P^{n-1}$,
%and the dual product $q^*$ is given by
%$q^*(v,w)=\<G^{-1}v,w\>$.
%
%
%Then one readily checks that $q\in\Prod(\cV)$ is $\nu$-balanced
%if and only if its image $G \in \cL(\C^n)_+$ via the identification
%above satisfies
%\begin{equation}\label{nubalpovm}
%R_\nu\int_{\C P^n} \Pi_z G^{-1}\, \frac{|z|^2}
%{\langle G^{-1} z,z\rangle }\,d\nu(z) = \id\;.
%\end{equation}
%This can be rewritten as $\TT_\nu (G) = G$, where
%$\TT_\nu: \cL(\C^n)_+ \to \cL(\C^n)_+ $ is defined by
%
%
%
%This dynamical system $\TT_\nu$ on $\cL(\C^n)$ was
%defined by .

Recall that the tangent space of
$\Prod(\cV)$ at any $q\in\Prod(\cV)$ is canonically
identified with the space of Hermitian operators
$\cL(\cH)$ of $\cH:=(\cV,q)$. Then if $q\in\Prod(\cV)$
is $\nu$-balanced, so that $\TT_\nu(q)=q$, the differential
$D_q\TT_\nu$ of $\TT_\nu$ at $q$ acts on $\cL(\cH)$.
\medskip
\noindent
\begin{lemma}\label{prop-D1} For any
$\nu$-balanced Hermitian product $q\in\Prod(\cV)$,
the differential of $\TT_\nu$ at $q$ satisfies $D_q\TT_\nu = \cE_q$,
where $\cE_q$ is the quantum channel \eqref{eq-e} of the
associated POVM $W_q$ defined by \eqref{eq-Donaldson-POVM}.
\end{lemma}
\begin{proof}
Let $q\in\Prod(\cV)$ be a $\nu$-balanced Hermitian product, and
let us identify $(\cV,q)$ with $(\C^n,\langle\cdot,\cdot\rangle)$
as above.
Let $G(t) \in \cL(\C^n)_+$ be a path such that $G(0)=\id$.
Abbreviating $\dot{G}:=\dot{G}(0)$ and using formula \eqref{Tnudef},
we get
$$\frac{d}{dt}\Big |_{t=0} \TT_\nu(G(t)) = n\,\int_{\C P^n} \Pi_z \frac{\langle\dot{G}z,z\rangle}{|z|^2}\,\frac{d\nu(z)}{|\nu|}\;.$$
Recall that $((\cdot,\cdot))$ denotes the natural scalar product
on $\cL(\cH)$.
Then noticing that
$\langle\dot{G}z,z\rangle/|z|^2= ((\dot{G},\Pi_z))$,
we get
$$\frac{d}{dt}\Big |_{t=0} \TT_\nu(G(t)) =
n \ \int_{\C P^n} \Pi_z ((\dot{G},\Pi_z))
\,\frac {d\nu(z)}{|\nu|}\;,$$
which is precisely formula \eqref{eq-e} for the quantum channel
associated to $W_q$ defined by \eqref{eq-Donaldson-POVM},
as we have $\Phi_q(z)=\Pi_z$ for all $z\in\C^n\backslash\{0\}$
in the identification of
$(\cV,q)$ with $(\C^n,\<\cdot,\cdot\>)$. This concludes the proof.\end{proof}

\medskip

Recall that the quantum channel $\cE_q:\cL(\cH)\to\cL(\cH)$
satisfies $\cE_q(\id)=\id$, and that its \emph{spectral gap}
is the quantity $\gamma=1-\lambda_1$, where
\begin{equation}
1=\lambda_0\geq\lambda_1\geq\lambda_2\geq\cdots\geq 0
\end{equation}
is the decreasing sequence of eigenvalues of $\cE_q$.
%We shall obtain Theorem \ref{thm-expcvcor} as a simple consequence of the
%following result.
%
%\medskip\noindent
%\begin{thm}\label{thm-GH} Let $\nu$ be a Borel measure on
%$\PPP(\cV^*)$, and suppose
%that $\nu$ is supported on a complex variety
%$Y \subset \PPP(\cV^*)$, with $\nu$ absolutely
%continuous on every irreducible component
%of $Y$.
%
%$(i)$ Assume that for any projective
%subspace $\Sigma$ of $\PPP(\cV^*)$, we have
%\begin{equation}
%\label{eq-spade}
%\frac{\nu(\Sigma)}{\dim \Sigma +1} < \frac{|\nu|}{n}\;.
%\end{equation}
%Then for any $q_0\in\Prod(\cV)$, the iterations
%$\TT_\nu^r (q_0)$ converge to a fixed point $q\in\Prod(\cV)$
%as $r\to+\infty$, unique up to the action of $\R_+$ by scalar
%multiplication. Furthermore, the associated quantum channel $\cE_q$
%as in \cref{prop-D1} has positive spectral gap.
%
%$(ii)$ Assume in addition that at least one irreducible
%component of $Y$ is not contained in any proper projective
%subspace of $\PPP(\cV^*)$.
%Then the associated quantum channel $\cE_q$ is invertible.
%\end{thm}
%
%\medskip
Then Proposition \ref{prop-D2} establishes the positivity
of the spectral gap of $\cE_q=D_q\TT_\nu$
under assumption $(i)$, showing that
Donaldson's prediction in \cite[\S\,4.1]{D} on the largest
eigenvalue of the linearization of $\TT_\nu$ at a $\nu$-balanced
product
in fact holds for general projective smooth
manifolds $Y$, as assumption $(i)$ is automatically satisfied
as soon as $Y$ is not contained in any proper projective
subspace.

Proposition \ref{prop-D3}
shows the invertibility of $\cE_q=D_q\TT_\nu$ under assumption $(ii)$.
This is a key assumption in the classical Grobman-Hartman theorem,
which we use in Theorem \ref{thm-expcvcor} to show that the iterations
of the dynamical system $\TT_\nu$ converge exponentially fast
to a fixed point. As assumption $(ii)$ is automatically satisfied
for a projective smooth manifold $Y$ not contained in any proper
projective subspace, this strengthens
Donaldson's prediction in \cite[\S\,4.1]{D} on the asymptotic
rate of convergence of the dynamical system $\TT_\nu$. Namely,
with only the positivity of the spectral gap, we expect
that the rate of convergence is exponentially fast
for almost all initial conditions, while Theorem \ref{thm-expcvcor}
shows that it actually holds for all initial conditions.

\medskip
\noindent
\begin{prop}\label{Psicv}
Assume that assumption $(i)$ of Theorem \ref{thm-expcvcor} holds.
Then for any $q_0\in\Prod(\cV)$, the iterations
$\TT_\nu^r (q_0)$ converge to a fixed point as
$r\to+\infty$, unique up to the action of $\R_+$ by scalar
multiplication.
\end{prop}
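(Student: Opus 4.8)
The plan is to reproduce, in the present notation, Donaldson's argument \cite{D}: realize $\TT_\nu$ as a descent iteration for a geodesically convex functional on the symmetric space of Hermitian products, the whole point being that assumption $(i)$ is exactly the condition under which that functional is coercive.

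Recall that $\cL(\C^n)_+$, with its standard $\GL(n,\C)$-invariant metric, is a Hadamard manifold (identified with $\GL(n,\C)/U(n)$ via $g\mapsto gg^*$): its geodesics are the curves $t\mapsto H_0^{1/2}e^{tB}H_0^{1/2}$ with $B$ Hermitian, the scaling orbits $t\mapsto e^tH_0$ being the geodesics in the direction $B=\id$. Since $\TT_\nu$ is homogeneous of degree one in $G$ by \eqref{Tnudef}, it suffices to work on the quotient $\cL(\C^n)_+/\R_+$. I would then introduce Donaldson's functional
$$\mathcal{K}_\nu(H)\;=\;R_\nu\int_{\C P^{n-1}}\log\langle Hz,z\rangle\,d\nu(z)\;-\;\log\det H\;,\qquad H\in\cL(\C^n)_+\;,$$
which, since $|\nu|R_\nu=n$, is invariant under $H\mapsto\lambda H$ and so descends to $\cL(\C^n)_+/\R_+$. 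Along a geodesic $H(t)=H_0^{1/2}e^{tB}H_0^{1/2}$ the term $\log\det H(t)$ is affine in $t$, while $\log\langle H(t)z,z\rangle=\log\langle e^{tB}w,w\rangle$ with $w=H_0^{1/2}z$ is convex in $t$ by the log-sum-exp inequality; hence $\mathcal{K}_\nu$ is geodesically convex. A direct differentiation identifies the equation $D_H\mathcal{K}_\nu=0$ with the balancing condition \eqref{nubalpovm} for $G=H^{-1}$, that is with $\TT_\nu(H^{-1})=H^{-1}$, so critical points of $\mathcal{K}_\nu$ correspond exactly to fixed points of $\TT_\nu$.

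The main step is to show that assumption $(i)$ makes $\mathcal{K}_\nu$ coercive on $\cL(\C^n)_+/\R_+$. Fix a base point $H_0$ and a traceless $B\ne 0$ with distinct eigenvalues $\beta_1>\dots>\beta_r$ ($r\ge 2$ because $B$ is traceless and nonzero), eigenspaces $V_1,\dots,V_r$, and bottom flag $G_k=V_k\oplus\cdots\oplus V_r$. A summation by parts, using $\Tr B=0$ and the fact that for each $z$ the slope of $\log\langle e^{tB}H_0^{1/2}z,H_0^{1/2}z\rangle$ as $t\to+\infty$ equals $\beta_k$ precisely when $H_0^{1/2}z\in G_k\setminus G_{k+1}$, gives the recession slope
$$\lim_{t\to+\infty}\tfrac1t\,\mathcal{K}_\nu(H(t))\;=\;R_\nu\sum_{k=2}^{r}(\beta_k-\beta_{k-1})\Big(\nu\big(\PPP(H_0^{-1/2}G_k)\big)-\tfrac{|\nu|}{n}\dim G_k\Big)\;.$$
Here $\PPP(H_0^{-1/2}G_k)$ is a projective subspace of dimension $\dim G_k-1$, so by \eqref{eq-spade} each bracketed factor is negative, and since $\beta_k-\beta_{k-1}<0$ the slope is strictly positive; thus $\mathcal{K}_\nu\to+\infty$ along every ray transverse to scaling. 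It is used here that $\nu$ is absolutely continuous on each irreducible component of its supporting variety $Y$: a proper projective subspace meets $Y$ in a subvariety of strictly smaller dimension, hence of $\nu$-measure zero, so the measures of the flag members are additive with no boundary corrections. (This is the Hilbert--Mumford/Kempf--Ness dictionary: \eqref{eq-spade} is a numerical stability condition and coercivity of $\mathcal{K}_\nu$ its analytic counterpart.) The same eigenvalue bookkeeping shows $\mathcal{K}_\nu$ is strictly convex in every direction $B\ne c\,\id$: otherwise $\nu$ would be supported on a disjoint union of proper projective subspaces $\Si_k$ with $\sum_k(\dim\Si_k+1)=n$ and $\sum_k\nu(\Si_k)=|\nu|$, forcing $\nu(\Si_k)/(\dim\Si_k+1)\ge|\nu|/n$ for some $k$ and contradicting \eqref{eq-spade}.

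With convexity, coercivity and strict convexity transverse to scaling in hand, the rest is routine. The smooth function $\mathcal{K}_\nu$ attains its minimum on the complete manifold $\cL(\C^n)_+/\R_+$ at some $G_\infty\in\cL(\C^n)_+$ determined up to scaling, which is a critical point and hence a fixed point of $\TT_\nu$; strict convexity transverse to scaling makes it unique up to the $\R_+$-action. For the dynamics, one checks — by a Jensen-type estimate, as in \cite{D} — that $\mathcal{K}_\nu$ is non-increasing along every orbit $\TT_\nu^r(G_0)$ and strictly decreasing away from fixed points; coercivity then forces the orbit to be precompact in $\cL(\C^n)_+/\R_+$, every accumulation point realizes the limiting value of $\mathcal{K}_\nu$ along the orbit and so is a fixed point, and by uniqueness all accumulation points agree, so the orbit converges. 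Lifting back to $\cL(\C^n)_+$ yields convergence of $\TT_\nu^r(G_0)$ to a fixed point, unique up to scaling. I expect the only genuinely delicate point to be the coercivity computation — identifying the recession slope of $\mathcal{K}_\nu$ and matching it term by term with \eqref{eq-spade}, together with the measure-theoretic care forced by $\nu$ being carried by a variety rather than all of $\C P^{n-1}$.
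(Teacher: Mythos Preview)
Your approach is the same as the paper's: both follow Donaldson by introducing the scale-invariant convex functional on $\cL(\C^n)_+$ (your $\mathcal{K}_\nu(H)$ is $R_\nu\Psi_\nu(H^{-1})$ in the paper's notation), proving strict convexity and coercivity from assumption~$(i)$, and then showing $\TT_\nu$ is a descent map. Your recession-slope computation via dominated convergence is in fact cleaner than the paper's lower-bound argument (which is where the paper actually uses absolute continuity, to make $\log|z_k|^2$ integrable on each component; your remark that it is needed ``so the measures of the flag members are additive'' is off --- the strata $G_k\setminus G_{k+1}$ always partition $\C^n\setminus\{0\}$, so additivity is automatic).

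There is, however, a genuine gap at the end. You claim ``it suffices to work on the quotient $\cL(\C^n)_+/\R_+$'' and then ``lifting back to $\cL(\C^n)_+$ yields convergence''. But convergence in the quotient does not by itself give convergence in $\cL(\C^n)_+$: you must also show that $\det\TT_\nu^r(G_0)$ converges, otherwise the orbit could spiral along the $\R_+$-fibre. The paper handles this by observing (via $\tr(\TT_\nu(G)G^{-1})=n$ and concavity of $\log$) that $\log\det\TT_\nu(G)\le\log\det G$, and separately (Jensen again) that the integral term $\int\log\langle G^{-1}z,z\rangle\,d\nu$ is also non-increasing; since their sum is bounded below, each is bounded, so $\det\TT_\nu^r(G_0)$ is trapped and monotone, hence convergent. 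You should insert this step --- your ``Jensen-type estimate'' in fact gives exactly these two separate monotonicities, so the fix is short, but as written the lift is unjustified.
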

\begin{proof}
Fix $q_0\in\Prod(\cV)$, and identify $(\cV,q_0)$ with
$(\C^n,\<\cdot,\cdot\>)$, where $\<\cdot,\cdot\>$ denotes
the canonical Hermitian product of $\C^n$.
Recall that $\cL(\C^n)_+$ denotes the set of positive Hermitian
$n\times n$ matrices.
Following \cite[p.\,582]{D}, for any $[z]\in\C P^{n-1}$, let
$z\in\C^n$ be a lift of norm $1$, and for any $G\in\cL(\C^n)_+$,
set
\begin{equation}\label{psizdef}
\psi_{[z]}(G):=\log\<G^{-1}z,z\>+\frac{1}{n}\log\det G.
\end{equation}
This quantity does not depend on the choice of a lift
of $[z]\in\C P^{n-1}$ of norm $1$, and the second term makes
it invariant under multiplication of $G$ by a positive scalar.
Given a Borel measure $\nu$ on $\C P^{n-1}$, we then define a
functional on $\cL(\C^n)_+$ by the formula
\begin{equation}\label{Psinudef}
\Psi_\nu(G)=\int_{\C P^{n-1}}\psi_{[z]}(G)\,d\nu([z]),
\end{equation}
for any $G\in\cL(\C^n)_+$.
%Recalling
%the formula $\Pi_{[z]}=(z_i\bar{z}_j)_{i,j=1}^n$ for the
%orthogonal
%projector on the line $[z]\in\C P^{n-1}$ generated by $z\in\C^n$
%of norm $1$, we see by
Using formula \eqref{Tnudef},
we see that $G\in\cL(\C^n)_+$
is a critical point of $\Psi_\nu$ if and only if it is
a fixed point of $\TT_\nu$. Thus to show the existence and
unicity of such a fixed point up to the action of $\R_+$,
we can restrict $\Psi_\nu$
to the space $\cL(\C^n)_+^1$ of positive Hermitian matrices of
determinant $1$, and it suffices to show that $\Psi_\nu$
is strictly convex and proper along any geodesic of $\cL(\C^n)_+^1$
for its natural Riemannian metric as a symmetric space.
In fact, any strictly convex and proper function over $\R$ has
a unique absolute minimum, which is also its unique critical point.
Now as two points can always be joined by a geodesic, we
conclude in that case
that a fixed point of $\TT_\nu$ on $\cL(\C^n)_+^1$ coincide
with a minimum of $\Psi_\nu$, which exists and is unique.

Recall that the structure of symmetric space on $\cL(\C^n)_+^1$
is given by the map
\begin{equation}
\begin{split}
\textup{SL}_n(\C)&\longrightarrow\cL(\C^n)_+^1\\
G&\longmapsto \sqrt{G^*G}\;,
\end{split}
\end{equation}
which realizes $\cL(\C^n)_+^1$ as the quotient of
the special linear group $\textup{SL}_n(\C)$ by the special unitary
group $\textup{SU}(n)$. The usual scalar product $((\cdot,\cdot))$
on the space of $n \times n$ matrices induces a Riemannian
metric on $\cL(\C^n)_+^1$ through the identification of its
tangent space at any point with the space of
traceless matrices. By general theory of symmetric spaces,
geodesics are simply the images of $1$-parameter groups of
$\textup{SL}_n(\C)$ through the above map,
so that up to the action of
$\textup{SU}(n)$ by conjugation, they are of the form
$G_t\in\cL(\C^n)_+^1$, with
\begin{equation}
G_t=\diag(e^{\lambda_1 t},e^{\lambda_2 t},\cdots,e^{\lambda_n t})\;,
\end{equation}
for all $t\in\R$,
where $\lambda_1\geq\lambda_2\geq\cdots\geq\lambda_n$ satisfy
$\sum_{j=1}^n\lambda_j=0$. Now if $\nu$ satisfies assumption $(i)$
of Theorem \ref{thm-expcvcor}, its pullback by the action of a unitary
matrix also satisfies this assumption, and thus we are reduced
to show strict convexity and properness of
\begin{equation}
t\longmapsto\Psi_\nu(G_t)=\int_{\C P^{n-1}}\log\left(\sum_{j=1}^ne^{\lambda_j t}
|z_j|^2\right)\,d\nu([z]),\quad t\in\R\;.
\end{equation}
Now convexity follows from a direct computation, with
strict convexity as long as the total mass of $\nu$ is not contained
in any projective subspace of $\C P^{n-1}$, which is a
straightforward consequence of assumption $(i)$.

Let us now show properness, i.e. that $\Psi_\nu(G_t)\to+\infty$
when $t\to\pm\infty$. By considering the geodesic going to the
opposite direction, it suffices to show it when $t\to+\infty$. Consider an irreducible component
$Z\subset Y$, and let $k\leq n$ be the largest integer such that
$Z$ is contained in the projective subspace
\begin{equation}\label{Sigmak}
\Sigma_k:=\{[0:\cdots:0:z_k:\cdot:z_n]\in\C P^{n-1}\}\subset\C P^{n-1}
\end{equation}
As $\nu$ is absolutely continuous over the smooth part of $Z$,
this means in particular that the function $\log|z_k|^2$
restricted to $Z$ is integrable with respect to $\nu$.
We thus get a constant $C_Z>0$ such that
\begin{equation}
\begin{split}
\int_Z\log\left(\sum_{j=1}^ne^{\lambda_j t}
|z_j|^2\right)\,d\nu([z])&\geq\int_Z\log\left(e^{\lambda_k t}
|z_j|^2\right)\,d\nu([z])\\
&\geq\lambda_k t\nu(Z)-C_Z\;.
\end{split}
\end{equation}
For any $k\leq n$, write $\nu_k>0$ for the total mass of
the irreducible components of $Y$ for which $k$ is the largest
integer such that they are not contained in $\Sigma_k$ as above.
We then get a constant $C_Y>0$ such that
\begin{equation}
\Psi_\nu(G_t)\geq t\sum_{j=1}^n\lambda_j\nu_j-C_Y\;.
\end{equation}
We are thus reduced to show that $\sum_{j=1}^n\lambda_j\nu_j>0$.
Notice now that assumption $(i)$ implies
\begin{equation}
\sum_{j=k}^n\nu_j<\frac{n-k}{n}\sum_{j=1}^n\nu_j,
\quad\text{for all}~~1\leq k\leq n\;.
\end{equation}
Using $\lambda_1\geq\lambda_2\geq\cdots\geq\lambda_n$ and
$\sum_{j=1}^n\lambda_j=0$, we then get
\begin{equation}
\begin{split}
\sum_{j=1}^n\lambda_j\nu_j&=\lambda_0\sum_{j=1}^n\nu_j+
\sum_{k=1}^n(\lambda_k-\lambda_{k-1})\sum_{j=k}^n\nu_j\\
&>\left(\lambda_0+\sum_{k=1}^n\frac{n-k}{n}
(\lambda_k-\lambda_{k-1})\right)\sum_{j=1}^n\nu_j\\
&>\left(\frac{1}{n}
\sum_{k=1}^n\lambda_k\right)\sum_{j=1}^n\nu_j=0\;.
\end{split}
\end{equation}
This implies properness.

Let us now show the convergence of iterations of $\TT_\nu$ to
a fixed point. We will first show that $\TT_\nu$ decreases
$\Psi_\nu$, so that iterations have an accumulation point
by properness, and we will then show that this accumulation point
is in fact a fixed point.
First note that for any $G\in\cL(\C^n)_+$, using the fact
that projectors are of trace $1$,
formula \eqref{eq-Donaldson-POVM-1}, together
with \eqref{Tnudef}, gives
$\tr\left[\TT_\nu(G)G^{-1}\right]=n$.
Using the strict concavity of the logarithm, we thus get
\begin{equation}\label{logdet}
\begin{split}
\frac{1}{n}\log\det\left(\TT_\nu(G)\right)
-\frac{1}{n}\log\det\left(G\right)&=
\frac{1}{n}\log\det\left(\TT_\nu(G)G^{-1}\right)\\
&\leq\log\left(\frac{\tr\left[\TT_\nu(G)G^{-1}\right]}{n}\right)
=0\;,
\end{split}
\end{equation}
with equality if and only if $\TT_\nu(G)G^{-1}=\id$.
Thus to show that $\Psi_\nu(\TT_\nu(G))\leq\Psi_\nu(G)$,
by definition \eqref{Psinudef} of $\Psi_\nu$,
we only need to show that $\TT_\nu$ decreases
the integral against $\nu$
of the first term of formula \eqref{psizdef}.
Again by concavity of the logarithm,
we get
\begin{equation}\label{intlog}
\begin{split}
\int_{\C P^{n-1}}\log\<\TT_\nu(G)^{-1}z,z\>\,d\nu([z])-
\int_{\C P^{n-1}}&\log\<G^{-1}z,z\>\,d\nu([z])\\
&\leq\log\left(\int_{\C P^{n-1}}\frac{\<\TT_\nu(G)^{-1}z,z\>}
{\<G^{-1}z,z\>}\,d\nu([z])
\right)\\
&\leq\log\left(\frac{1}{n}
\tr\left[\TT_\nu(G)\TT_\nu(G)^{-1}\right]\right)=0\;,
\end{split}
\end{equation}
where we used formula \eqref{Tnudef} for $\TT_\nu(G)$ together
with the fact that $\<Az,z\>=|z|^2\Tr\left[\Pi_zA\right]$ for all
$z\in\C^n\backslash\{0\}$ and $A\in\End(\C^n)$.
Equations \eqref{logdet} and \eqref{intlog}, together with the
definition of $\Psi_\nu$ given by formulas \eqref{psizdef} and
\eqref{Psinudef}, show that
$\Psi_\nu(\TT_\nu(G))\leq\Psi_\nu(G)$ for all
$G\in\cL(\C^n)_+$.

To conclude, note first that properness
over $\cL(\C^n)_+^1$ and invariance
under the action of $\R_+$ implies that $\Psi_\nu$
is bounded from below over the whole $\cL(\C^n)_+$.
Thus for any $G_0\in\cL(\C^n)_+$,
we get that the decreasing sequence
$\{\Psi_\nu(\TT_\nu^r(G_0))\}_{r\in\N}$
converges to its lower bound. As both terms in the definition
of $\Psi_\nu$ are decreasing under iterations of $\TT_\nu$
by \eqref{logdet} and \eqref{intlog}, we then deduce that
$\{\log\det(\TT_\nu^r(G_0))\}_{r\in\N}$, thus
also $\{\det(\TT_\nu^r(G_0))\}_{r\in\N}$, are bounded in $\R$,
and that
\begin{equation}\label{logdetto0}
\frac{1}{n}\log\det\left(
\TT_\nu^{r+1}(G_0)\TT_\nu^r(G_0)^{-1}\right)
\longrightarrow 0,\quad\text{as}~~r\to+\infty\;.
\end{equation}
Now from properness of $\Psi_\nu$ over $\cL(\C^n)_+^1$
and boundedness in $\R$ of the sequences
$\{\Psi_\nu(\TT_\nu^r(G_0))\}_{r\in\N}$ and
$\{\det(\TT_\nu^r(G_0))\}_{r\in\N}$,
we get that the sequence $\{\TT_\nu^r(G_0)\}_{r\in\N}$
admits an accumulation point
$G_\infty\in\cL(\C^n)_+$.
On the other hand,
by strict concavity of the logarithm, formula \eqref{logdetto0}
and the equality case in
formula \eqref{logdet} imply
\begin{equation}
\TT_\nu^{r+1}(G_0)\TT_\nu^r(G_0)^{-1}
\longrightarrow\id,\quad\text{as}~~r\to+\infty\;.
\end{equation}
We thus get that $G_\infty\in\cL(\C^n)_+$ is the unique
accumulation point, and satisfies
$\TT_\nu(G_\infty)=G_\infty$. This concludes the proof.
\end{proof}

In the following Proposition, we use the result that
a fixed point of $\TT_\nu$ exists as soon as $\nu$ satisfies
assumption $(i)$, which was proved in the previous Proposition.

\medskip
\noindent
\begin{prop}\label{prop-D2}
Assume that assumption $(i)$ holds.
Then for any $\nu$-balanced product $q\in\Prod(\cV)$,
the associated quantum
channel $\cE_q$ as in \cref{prop-D1}
has positive spectral gap.
\end{prop}
\begin{proof}
Let $q\in\Prod(\cV)$ be a $\nu$-balanced product,
and identify $(\cV,q)$ with $(\C^n,\<\cdot,\cdot\>)$,
so that $\Phi_q(z)=\Pi_z$ for all $z\in\C^n\backslash\{0\}$
in the definition \eqref{eq-Donaldson-POVM} of $W_q$,
where $\Pi_z$ is the orthogonal projector on $[z]$ with respect
to $\<\cdot,\cdot\>$.
Assume that $\nu$
satisfies assumption $(i)$ of Theorem
\ref{thm-expcvcor}, and normalize it by setting  $\alpha:= \nu/|\nu|$.
%Through the identification of Lemma \ref{prop-D1}, we assume
%without loss of
%generality that $G=\id$ and that $D_G\TT_\nu=\cE$.
For any $z\in\C^n\backslash\{0\}$, we denote by $[z]$
its class in $\C P^{n-1}$.
For any $z,\,w\in\C^n\backslash\{0\}$, we write
\begin{equation}
\cB_q([z],[w])= n\frac{|\langle z, w \rangle |^2}
{|z|^2|w|^2}
\end{equation}
for the Schwartz kernel with respect to $\alpha$ of the
Berezin transform \eqref{eq-b} on $L_2(\C P^{n-1},\nu)$ associated
with $W_q$.
%is given by
%$$(\cB\phi)(x) = \int F(x,y) \phi(y)d\alpha(y)\;.$$
Let $Y_1,\dots, Y_k$ be
the irreducible components
of $Y$. Since $(z,w) \mapsto \langle z, w\rangle$ is holomorphic in $z$ and anti-holomorphic in $w$,
for every $i,\,j\leq k$, we get that
\begin{itemize}
\item[{(a)}] either $\cB_q([z],[w]) = 0$
for all $([z],[w])\in Y_i \times Y_j\;,$
\item[{(b)}] or $\cB_q([z],[w])\neq 0$ for almost all
$([z],[w])\in Y_i \times Y_j\;.$
\end{itemize}
Consider a graph $\Gamma$ with vertices $1,\dots, k$,
where $i,j$ are connected by an edge whenever $(b)$ occurs
of $Y_i\times Y_j$.
In particular, each $i$ is connected by an edge to itself.

%we write
%$$F(x,y)\phi(y) = F^{1/2}(x,y)\cdot F^{1/2}(x,y)\phi(y)\;.$$
%Hence,
Recall that
$$\int \cB_q(x,y) d\alpha(y) = \int \cB_q(x,y) d\alpha(x)=1\;.$$
Using the Schur test as in formula \eqref{Schurtest} above,
we apply Cauchy-Schwarz inequality on the formula
\begin{equation}\label{CS1/2}
\int\cB_q(x,y)\phi(y)d\alpha(y) =
\int \cB_q(x,y)^{1/2}\, \cB_q(x,y)^{1/2}\phi(y) d\alpha(y)\;,
\end{equation}
to get for any $\phi\in L^2(\C P^{n-1},\nu)$,
\begin{equation}
\begin{split}
\|\cB_q\phi\|^2_{L_2} &= \int \left (\int \cB_q(x,y)\phi(y)
d\alpha(y)\right) ^2 d\alpha(x)\\
&\leq    \int \left ( \int \cB_q(x,y) d\alpha(y) \cdot \int \cB_q(x,y)
\phi^2(y)d\alpha(y) \right) d\alpha(x)  =\|\phi\|_{L_2}\;.
\end{split}
\end{equation}
In particular, the equality $\cB_q\phi=\phi$ can hold only
if the inequality above is an equality, and by the equality
case of Cauchy-Schwarz inequality,
%Assume on the contrary that $\phi$ is a non-constant function (almost everywhere with respect to $\alpha$)
%such that $\cB\phi=\phi$. Using a trick as in the proof of the Schur test (cf. \eqref{Schurtest} above), we write
%$$F(x,y)\phi(y) = F^{1/2}(x,y)\cdot F^{1/2}(x,y)\phi(y)\;.$$
%Hence, by using Cauchy-Schwarz and the fact that
%$$\int F(x,y) d\alpha(y) = \int F(x,y) d\alpha(x)=1\;,$$ we get that
%$$||\cB\phi||^2_{L_2} = \int \left (\int F(x,y)\phi(y) d\alpha(y)\right) ^2 d\alpha(x) $$ $$ \leq    \int \left ( \int F(x,y) d\alpha(y) \cdot \int F(x,y) \phi^2(y)d\alpha(y) \right) d\alpha(x)   =   ||\phi||_{L_2}\;.$$
%Observe that the equality holds only if
this implies that for
$\alpha$-almost all $x$, there exists $c \neq 0$ such that
$c\cB_q(x,y)^{1/2}_q= \cB(x,y)^{1/2}\phi(y)$ for $\alpha$-almost
all $X$. In terms of the graph
defined in the previous step, this yields that $\phi$ is constant on every subset of the form
$\bigcup_{j \in \text{star}(i)} Y_j,$ where $i=1,\dots ,k$.
Thus if $\phi$ is a non constant function satisfying $\cB_q\phi=\phi$,
it follows that $\Gamma$ is disconnected. Denote by $\Gamma_i$, $i=1,\dots,k$ the connected components,
and put $Z_i= \bigcup_{j \in \Gamma_i} Y_j$.

Assuming that there exists a non-constant $\phi$ satisfying
$\cB_q\phi=\phi$ as above, we will show that
assumption $(i)$ can not hold.
Recall that we work with the POVM $dW_q(x) = n \Pi_x d\alpha(x)$,
where $\Pi_x$ is the orthogonal
projector to the line $x \in \C P^{n-1}$ with respect to
$\<\cdot,\cdot\>$. With this notation,
$\cB_q(x,y)=0$ yields $\Pi_x\Pi_y = 0$.
Write $P = W_q(Z_1)$ and $P' =  W_q(Z_2 \cup \dots \cup Z_k)$.
It follows that
$P+P'=\id$ and $PP' = 0$. Thus $P$ is an orthogonal projector whose image is a proper projective subspace $\Sigma$ of $\C P^{n-1}$ of dimension $m-1$, with
$$m = \tr\,\left[P\right] =  \tr\,\left[W_q(Z_1)\right]
= n\,\alpha(Z_1)
= n\,\frac{\nu(Z_1)}{|\nu|}\;.$$
Observe also that if $Pz=0$, we get
$$\int_{Z_1} \langle \Pi_x z,z \rangle\, d\nu(x) =0\;,$$
and hence $\langle \Pi_x z,z\rangle = 0$ for $\nu$-almost all $x$.
Since $\nu$ is absolutely continuous on each irreducible component of $X$, it follows
that $x$ is orthogonal to $z$ for all $x \in Z_1$, and hence $Z_1 \subset \Sigma$. We conclude that
$$\frac{\nu(\Sigma)}{m} \geq \frac{\nu(Z_1)}{m}= \frac{|\nu|}{n}\;,$$
so that assumption $(i)$ does not hold.
\end{proof}

The proof of this last Proposition is a variation on the theme of \cite[Proposition 4.1]{BMS}.

\medskip
\noindent
\begin{prop}\label{prop-D3}
Assume that assumption $(ii)$ holds. Then for any $\nu$-balanced
product $q\in\Prod(\cV)$,
the associated quantum channel
$\cE_q$ as in \cref{prop-D1}
is invertible.
\end{prop}

\medskip\noindent

\begin{proof} Let $q\in\Prod(\cV)$ be $\nu$-balanced, and
identify $(\cV,q)$ with $(\C^n,\<\cdot,\cdot\>)$.
For any $z\in\C^n\backslash\{0\}$, we denote by $[z]$
its class in $\C P^{n-1}$.

Denote by $\tilde{Y}$ the cone of $Y$ in $\C^n$.
Assume on the contrary that an Hermitian matrix $A \neq 0$ lies in the kernel of $\cE_q$, and set
$$F_A([z],[w]):= \frac{\langle Az, w \rangle}{|z|\cdot |w|}\;.$$
Since $\cE_q= n^{-1} TT^*$ and $T^* (A) ([z]) = F_A([z],[z])$ by
the results of Section  \ref{prel},
we have $F_A([z],[z]) =0$ for all $[z] \in Y$. Noticing that the function $(z,w) \mapsto \langle Az, w \rangle$
is holomorphic in $z$ and anti-holomorphic in $w$ and that it vanishes on the diagonal of $\tilde{Y} \times \tilde{Y}$, we conclude that $F$ vanishes on $Z \times Z$ for every irreducible component $Z$ of $Y$.

Pick any irreducible component $Z$. If it fully lies in
$\text{Ker}\, A$, we have
that $Z$ is contained in a proper projective subspace. Otherwise, pick $[u] \in Z$ so that $Au \neq 0$.
We thus proved that any other $[z] \in Z$
satisfies a linear equation $\langle z, Au \rangle = 0$, meaning that $Z$
lies in a proper projective subspace. This is in contradiction
with assumption $(ii)$.
\end{proof}

%%%%%%%%%%%%%%%%%%%%%%%%%%
%%%%%%%%%%%%%%%%%%%%%%%%%%%%%%%%%%%

%%%%%%%%%%%%%%%%%%%%%%%%%%%%%%%%%%%%%%
%%%%%%%%%%%%%%%%%%%%%%%%%%%%%%%%%%%%%%%%

\medskip
Using the Propositions above, we are then ready to prove
Theorem \ref{thm-expcvcor}.

\medskip\noindent{\bf Proof of Theorem \ref{thm-expcvcor}:}
Suppose that $\nu$ satisfies the assumptions $(i)$ and $(ii)$,
%Let's simplify the notation by setting $\cL:= \cL(\C^n)_+$ and $\TT= \TT_\nu$.
and fix $q_0\in\Prod(\cV)$. By Proposition \ref{Psicv},
the iterations $\TT^r(q_0)$ converge to a fixed point $q\in\Prod(\cV)$
as $r \to +\infty$, so that we can use it to identify
$(\cV,q)$ with $(\C^n,\<\cdot,\cdot\>)$.
Identify diffeomorphically
$\cL(\C^n)_+$ with $\cL(\C^n)_+^1 \times \R_+$ via the map
$$\Theta: G \longmapsto\left(\cD(G),\det(G)\right),\;\;\text{where}\;\; \cD(G):=\frac{G}{\det(G)^{1/n}}\;.$$
Then for every $r \in \N$,
\begin{equation}
\label{eq-newcoo}
\Theta \TT^r_\nu \Theta^{-1} (G,g) =\left(\cD(\TT^r_\nu(G)), g \cdot \det\TT^r_\nu(G)\right)\;.
\end{equation}
Recall that by Lemma \ref{prop-D1}, the differential of
$\TT_\nu$ at $q$ coincides with the quantum channel $\cE_q$,
and recall that $q\in\Prod(\cV)$ is sent to the
identity $\id\in\cL(\C^n)_+$ in the identification of $(\cV,q)$
with $(\C^n,\<\cdot,\cdot\>)$.
Since $\cL(\C^n)^1_+$ is a slice of the $\R_+$-action and
$\TT_\nu$ is $\R_+$-equivariant,
the differential of $\cD \circ \TT_\nu$ equals to the restriction of
$\cE_q$ to the tangent space $T_\id\cL(\C^n)_+^1$, which
consists of all trace $0$ Hermitian matrices. Then by Propositions \ref{prop-D2} and \ref{prop-D3}, the spectrum
of this differential is contained in $(0,1)$, so that
$\cD \circ \TT_\nu$ is a
local diffeomorphism of $\cL(\C^n)_+^1$ in a neighborhood its hyperbolic fixed point $\id$, and conjugate through a local
homeomorphism to its linearization at $\id$ by the classical
Hartman-Grobman theorem. In particular, taking $\beta\in(0,1)$
as the
largest eigenvalue of $\cE$ in $(0,1)$, we get a constant $C>0$
such that
\begin{equation}\label{Tnudetexpcv}
\textup{dist}
\left(\cD (\TT_\nu^r(G_0)),\id
\right)\leq C\beta^r\;,\quad\text{for all}\; r\in\N\;,
\end{equation}
where $G_0\in\cL(\C^n)_+$ denotes the image of $q_0\in\Prod(\cV)$
in the identification of $(\cV,q)$ with $(\C^n\<\cdot,\cdot\>)$.
By \eqref{eq-newcoo}, in order to complete the proof of the exponential convergence of the orbit of $G_0$
to $\id$, we need to show that for $r$ large enough
\begin{equation}\label{detcv}
\left|\det \TT^r_\nu(G_0)-1\right|<C\beta^r\;.
\end{equation}
To this end recall that the functional $\Psi_\nu$ of the proof of \cref{Psicv}
is decreasing under iterations of $\TT_\nu$
and invariant with respect to the action
of $\R_+$ by multiplication. By \eqref{Tnudetexpcv}
and the differentiability of $\Psi_\nu$ at $\id$, there exists a constant
$C>0$ such that
\begin{equation}
0\leq\Psi_\nu(\TT^r_\nu(G_0))-\Psi_\nu(\id)\leq C\beta^r
\;.
\end{equation}
Now as both \eqref{logdet} and \eqref{intlog} are non-positive
and as $\TT^r_\nu(G_0)\to\id$ as $r\to +\infty$, recalling the
definition \eqref{psizdef}-\eqref{Psinudef} of $\Psi_\nu$
we deduce in particular that
\begin{equation}
0\leq\log\det(\TT^r_\nu(G_0))\leq C\beta^r\;.
\end{equation}
Since for $x$ close to $1$, we have $2|\log x | \geq |1-x|$,
this yields \eqref{detcv}. The proof is complete.
\qed

\medskip\noindent

\begin{rem}\label{thmquantnu}
%\todo{final remark added. I tried to be as clear as possible,
%without giving much details on the proof}
{\rm
Consider the setting of Example \ref{exam-balanced} above
for all $p\in\N^*$ large enough,
where $q_p\in\Prod(H^0(X,L^p))$ is the unique $\nu$-balanced
product and $h_p$ the induced Fubini-Study metric on $L^p$
over $X$, with Chern curvature $-2i\pi\om_p$. Recall
that in that case, the induced POVM \eqref{eq-Donaldson-POVM}
coincides with the weighted Berezin-Toeplitz POVM \eqref{Wnup}
of Remark \ref{weightedBT}.
Then using a refined version of Theorem \ref{thm-quant}
and \cref{prop-D1},
one can show that the exponential convergence rate $\beta_p\geq 0$
of Donaldson's iterations in \cref{thm-expcvcor} satisfies
as $p\rightarrow\infty$ the estimate
%\todo{estimate in $o(p^{-1})$ to keep things simple}
\begin{equation}\label{thmquantunif}
\beta_p=\frac{\lambda_1(\om_\infty)}{4\pi p}+o(p^{-1})\,,
\end{equation}
where $\lambda_1(\om_\infty)$ is the first eigenvalue of the
Laplace-Beltrami operator associated with the metric induced by
the unique Kähler form $\om_\infty$ in $c_1(L)$ solving the
Calabi problem $\om^d=c\nu$ for some $c>0$. This follows
from the estimate \eqref{eq-LBnu} on the spectral gap
of the weighted Berezin transform,
together with the uniformity on the metric in the
estimates of \cite[Th.\,4.18']{DLM06} and the fact that
the sequence $\{\frac{1}{p}\om_p\}$ converges to $\om_\infty$
as $p\rightarrow\infty$.
This complements a result of Keller
in \cite[Prop.\,4.7]{Keller09}.
%\todo{"improves" changed into "complements"}
}
\end{rem}

\section{POVMs and geometry of measures} \label{sec-bestfit}
%%%%%%%%%%%%%%%%%%%%%%%%%%%%%%%%%%%%%%%%%%%%%%%%%%%%%%%%%%%%%%%%
Assume that we are given an $\cL(\Hilb)$-valued POVM on $\Omega$ satisfying equation \eqref{eq-POVM-density}, i.e., of the form $dW = n\,F\,d\alpha$ for some $F:\Omega \to \cS(\Hilb)$. In this section we discuss spectral properties of the Berezin transform associated to $W$ in terms of the geometry of the measure
\begin{equation}\label{eq-mmm}
\sigma_W:= F_*\alpha
\end{equation}
on $\cS(\Hilb)$, focusing on its multi-scale features, and on stability of the spectral gap under perturbations of the measure.  Recall that for pure POVMs we have encountered  measure \eqref{eq-mmm} in Example \ref{exam-general}.

\medskip

Write $\cV \subset \cL(\Hilb)$ for the affine subspace consisting of all trace $1$ operators,
$\text{dist}$ for the distance on $\cV$ associated to the scalar product $((A,B))=\tr(AB)$ on $\cL(\Hilb)$.
Given a compactly supported probability measure
$\sigma$ on $\cV$, introduce the following objects:
\begin{itemize}
\item the center of mass $C(\sigma)= \int_\cV v d\sigma(v)$;
\item the mean squared distance from the origin,
$$I(\sigma)= \int_\cV \text{dist}(C,v)^2 d\sigma(v)\;;$$
\item the mean squared distance to the best fitting line
$$J(\sigma) = \inf_\ell \int_\cV \text{dist}(v,\ell)^2 d\sigma(v)\;,$$
where the infimum is taken over all affine lines $\ell \subset \cV$.
\end{itemize}
The infimum in the definition of $J$ is attained at the (not necessarily unique) {\it best fitting line}
which is known to pass through the center of mass $C$ (Pearson, 1901; see \cite[p.\,188]{Fare} for
a historical account). \footnote{The problem of finding $J$ and the corresponding minimizer $\ell$
appears in the literature under several different names including ``total least squares" and
``orthogonal regression".}

Observe that the center of mass $C(\sigma_W)$ for the measure $\sigma_W$ given by \eqref{eq-mmm} coincides with the maximally mixed state $\frac{1}{n}\id$.
%%%%%%%%%%%%%%%%%%%%%%%%%%%%%%%%%%%%%%%%%%%%%%%%%%%%%%%%%%%%%%%%%%%%%%%%%%%%55

\medskip
\noindent
\begin{thm}\label{prop-best-fit} The spectral gap $\gamma(W)$ depends
only on the push-forward measure $\sigma_W$ on $\cS(\Hilb)$:
$$\gamma(W) = 1 - n(I(\sigma_W)-J(\sigma_W))\;.$$
\end{thm}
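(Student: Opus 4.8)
The plan is to express both sides as statements about the self-adjoint operator $\cE$ on the finite-dimensional real space $\cL(\Hilb)$ of Hermitian operators equipped with the Hilbert--Schmidt product $((\cdot,\cdot))$, and then to match them term by term. Recall from \cref{prel} that $\cE(A)=n\int_\Omega ((F(s),A))F(s)\,d\alpha(s)$, that $\cE$ is self-adjoint and positive semidefinite with $((\cE(A),A))=n\int_\Omega ((F(s),A))^2\,d\alpha(s)$, and that $\cE(\id)=n\int_\Omega F(s)\,d\alpha(s)=\id$ because $W(\Omega)=\id$. Hence the line $\R\id$ is $\cE$-invariant, and so is its orthogonal complement $\cV_0:=\{u\in\cL(\Hilb):\tr u=0\}$, which is precisely the direction space of the affine subspace $\cV$. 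By the Rayleigh quotient characterization of the top eigenvalue of $\cE$ restricted to the invariant subspace $\cV_0$, the eigenvalue $\gamma_1$ entering the definition of $\gamma(W)$ satisfies
\[
1-\gamma(W)=\gamma_1=\max_{u\in\cV_0,\ \|u\|=1}((\cE(u),u))=\max_{u\in\cV_0,\ \|u\|=1}n\int_\Omega ((F(s),u))^2\,d\alpha(s).
\]

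It then remains to identify the right-hand side $n(I(\sigma_W)-J(\sigma_W))$ with this same maximum. Since $\sigma_W=F_*\alpha$ and $C:=C(\sigma_W)=\tfrac1n\id$ (as observed just before the statement), the change of variables $v=F(s)$ gives $I(\sigma_W)=\int_\Omega\|F(s)-C\|^2\,d\alpha(s)$. For a unit vector $u\in\cV_0$ put $\ell_u:=\{C+tu:t\in\R\}\subset\cV$; the orthogonal splitting of the distance to an affine line with unit direction yields $\text{dist}(v,\ell_u)^2=\|v-C\|^2-((v-C,u))^2$, so that, using $((C,u))=\tfrac1n\tr u=0$,
\[
\int_\cV\text{dist}(v,\ell_u)^2\,d\sigma_W(v)=I(\sigma_W)-\int_\Omega ((F(s),u))^2\,d\alpha(s).
\]
Every affine line in $\cV$ has direction in $\cV_0$, and the best fitting line passes through the center of mass $C$ (Pearson, recalled above; in any case this is immediate from the same splitting, which shows the mean squared distance to a line through $a$ with unit direction $u$ equals $I(\sigma_W)+\|C-a\|^2-((C-a,u))^2-\int_\Omega((F(s),u))^2\,d\alpha(s)$ and is thus minimized at $a=C$). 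Taking the infimum over $\ell$ therefore gives $J(\sigma_W)=I(\sigma_W)-\max_{u\in\cV_0,\ \|u\|=1}\int_\Omega ((F(s),u))^2\,d\alpha(s)$, hence
\[
n\bigl(I(\sigma_W)-J(\sigma_W)\bigr)=\max_{u\in\cV_0,\ \|u\|=1}n\int_\Omega ((F(s),u))^2\,d\alpha(s).
\]

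Comparing this with the first display yields $n(I(\sigma_W)-J(\sigma_W))=\gamma_1=1-\gamma(W)$, which is the assertion, and the right-hand side visibly depends only on the push-forward measure $\sigma_W$.

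The individual computations are elementary; the one point I would take care over is the interplay between the affine-subspace constraint defining $J$ and the eigenvalue problem for $\cE$ — namely, verifying that the best fitting line may be taken through $C$ with direction in $\cV_0=\{\id\}^{\perp}$, so that the optimization defining $I-J$ is literally the Rayleigh quotient optimization for $\cE$ on its invariant subspace $\cV_0$. This is the only step where the geometry (center of mass, best fitting line) and the operator theory (the invariant splitting $\R\id\oplus\cV_0$ and min--max) must be reconciled; once it is, the two displays match.
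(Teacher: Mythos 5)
Your proof is correct and follows essentially the same route as the paper: reduce $J$ to lines through the center of mass $\tfrac1n\id$, use the orthogonal splitting $\mathrm{dist}(v,\ell)^2=\|v-C\|^2-((v-C,u))^2$ to get $I-J$ as a supremum of $\int((F(s),u))^2d\alpha$ over trace-zero unit directions, and identify that with $n^{-1}\gamma_1$ via $((\cE(u),u))$ and the variational principle on the invariant subspace $\{\id\}^\perp$. The only difference is cosmetic — you verify the Pearson centering fact and the invariant-splitting step explicitly, which the paper invokes by citation and by definition.
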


\begin{proof} Let $\ell \subset \cV$ be any line passing through the center of mass $\frac{1}{n}\id$
generated by a trace zero unit vector $A \in \cL(\Hilb)$. For a point $B \in \cV$ we have
$$\text{dist}(B,\ell)^2= ((B-\frac{1}{n}\id,B-\frac{1}{n}\id))- ((B-\frac{1}{n}\id,A))^2\;.$$
Integrating over $\sigma_W$ and taking infimum over $\ell$ we get that
\begin{equation}
\label{eq-IJK-vsp}
J(\sigma_W) = I(\sigma_W) - K\;,
\end{equation}
with
\begin{equation}
\label{eq-Ksup}
K= \sup_{\substack{\tr(A)=0 \\ \tr(A^2)=1}} \int_\cV ((B,A))^2 dF_*\alpha(B)\;.
\end{equation}
The latter integral can be rewritten as
\begin{equation}
\label{eq-vsp-e}
\int_\Omega ((F(s),A))^2 d\alpha(s) = n^{-1}((\cE(A), A))\;,
\end{equation}
so by definition $K = n^{-1}\gamma_1= n^{-1}(1-\gamma(W))$. Substituting this into \eqref{eq-IJK-vsp}, we deduce
the theorem.
\end{proof}

\medskip
\begin{rem}\label{rem-eigenv}{\rm Observe that the supremum in \eqref{eq-Ksup} is attained at a unit vector $A$
generating the best fitting line. By \eqref{eq-vsp-e}, $A$ is an eigenvector of $\cE$ with the eigenvalue $\gamma_1$.
}
\end{rem}

\medskip
\begin{exam}{\rm For a pure POVM $W$, i.e. when $F$ is a one-to-one map from $\Omega$ to the set of rank-one projectors,
$$\text{dist}(C,F(s))^2
= \tr\left[\left(\frac{1}{n}\id -F(s)\right)^2\,\right]
= 1-1/n$$
for all $s \in \Omega$, and hence $I(\sigma_W)= 1-1/n$. Thus, by Theorem \ref{prop-best-fit},
\begin{equation}\label{eq-J}
J(\sigma_W)= \frac{n-2+\gamma}{n}\;.
\end{equation}

For instance, consider the (pure!)  Berezin-Toeplitz POVM $W_p$
from Example \ref{exam-cloudK}. Let us use formula \eqref{eq-J} in order to calculate $J$.
Recall that by the Riemann-Roch theorem (see \cite{Fine-lectures}, Propositions 2.25 and 4.21)
$$n_p = Vp^d + Up^{d-1}+ \bigo(p^{d-2})\;,$$
where $$V= \Vol(X) = [\omega]^d/d!,\;\;\;U = c_1(X) \cup [\omega]^{d-1}/(d-1)!\;.$$
It follows from formula \eqref{eq-gap-repeated} for $\gamma_p$ that
$$J(\sigma_{W_p})= 1  -\frac{2}{V}p^{-d} + \frac{8\pi U+ V\lambda_1}{4\pi V^2} p^{-d-1} +\bigo(p^{-d-2})\;.$$

For instance, for the dual to the tautological bundle over $\C P^1$ in Example \ref{exam-CP1}
$n=p+1$ and $\gamma= 2/(p+2)$ so by \eqref{eq-J}
$J = 1- \frac{2}{p+2}$.
}
\end{exam}

\medskip

Furthermore, we explore robustness of the gap $\gamma(W)$, as a function of the measure $\sigma_W$,
with respect to perturbations in the Wasserstein distances on the space of Borel
probability measures on $\cS(\Hilb)$. They are defined as follows. For compactly supported Borel
probability measures $\sigma_1,\sigma_2 $ on a metric space $(X,d)$ the $L_2$-Wasserstein distance is given by
$$\delta_2 \left( \sigma_1, \sigma_2 \right) := \inf \limits_{\nu} \left(\,\int \limits_{X\times X} \text{dist}\left(x_1, x_2\right)^2 \, d\nu(x_1,x_2)\right)^{1/2}\;,$$
and the $L_\infty$-Wasserstein distance by
$$\delta_\infty \left( \sigma_1, \sigma_2 \right) := \inf \limits_{\nu} \sup_{(x_1,x_2)\in \supp (\nu)} \text{dist}(x_1,x_2)\;,$$
where in both cases the infimum is taken over all Borel probability measures $\nu$ on $X \times X $ with marginals $\sigma_1$ and $\sigma_2$.

\begin{thm}
\label{thm-robust} Let $\sigma_V$ and $\sigma_W$  be measures on $\cS(\Hilb)$ associated
to POVMs $V$ and $W$ respectively.
\begin{itemize}
\item[{(i)}] $| \gamma(V) - \gamma(W)| \leq c(n) \delta_2(\sigma_V,\sigma_W)$, where $c(n)$ depends on the dimension $n=\dim \Hilb$;
\item[{(ii)}] If in addition $V$ and $W$ are pure POVMs, there exists a universal constant $c$ such that
\begin{equation}\label{eq-rob-pure}
|\gamma(V) - \gamma(W)| \leq c\delta_\infty(\sigma_V,\sigma_W)\;.
\end{equation}
\end{itemize}
\end{thm}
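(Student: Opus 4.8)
The plan is to reduce the whole statement to the variational quantity behind Theorem~\ref{prop-best-fit}. Put $K(\sigma):=\sup\{\int_{\cV}((B,A))^2\,d\sigma(B)\ :\ \tr A=0,\ \tr(A^2)=1\}$, the constant $K$ of \eqref{eq-Ksup} (the supremum is attained, the traceless unit sphere of $\cL(\Hilb)$ being compact). By Theorem~\ref{prop-best-fit} and \eqref{eq-vsp-e} one has $\gamma(W)=1-nK(\sigma_W)$, and $0\le\gamma(W)\le 1$ forces $K(\sigma_W)\le 1/n$. Hence
\[
|\gamma(V)-\gamma(W)| \;=\; n\,\bigl|K(\sigma_V)-K(\sigma_W)\bigr|\,,
\]
and it suffices to estimate $|K(\sigma_V)-K(\sigma_W)|$. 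For this I would fix a traceless $A$ with $\tr(A^2)=1$, write $g_A(B):=((B,A))^2$, take a (near-)optimal coupling $\nu$ of $\sigma_V,\sigma_W$ for the relevant Wasserstein distance, use the identity $\int g_A\,d\sigma_V-\int g_A\,d\sigma_W=-\int(a_1-a_2)(a_1+a_2)\,d\nu$ with $a_i:=((x_i,A))$ on $(x_1,x_2)\sim\nu$, and pass to the supremum over $A$ at the end (via $|\sup-\sup|\le\sup|\cdot|$).

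Part~(i) is then soft: with $|a_1-a_2|=|((x_1-x_2,A))|\le\|x_1-x_2\|_{\mathrm{HS}}$ (as $\tr(A^2)=1$) and $\int((x_i,A))^2\,d\sigma\le K(\sigma)\le 1/n$, Cauchy--Schwarz in $L_2(\nu)$ gives
\[
\Bigl|\int g_A\,d\sigma_V-\int g_A\,d\sigma_W\Bigr| \le \Bigl(\int\|x_1-x_2\|_{\mathrm{HS}}^2\,d\nu\Bigr)^{1/2}\bigl(\sqrt{K(\sigma_V)}+\sqrt{K(\sigma_W)}\bigr) \le \delta_2\bigl(\sqrt{K(\sigma_V)}+\sqrt{K(\sigma_W)}\bigr).
\]
Taking the supremum over $A$ and using $K\le 1/n$ yields $|K(\sigma_V)-K(\sigma_W)|\le 2\delta_2/\sqrt n$, i.e. $|\gamma(V)-\gamma(W)|\le 2\sqrt n\,\delta_2$; so $c(n)=2\sqrt n$ works.

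The real content is part~(ii): the identical computation with $\delta_\infty$ still leaves a constant of order $\sqrt n$, because the prefactor $n$ in $\gamma=1-nK$ is not yet compensated. To kill it one must exploit that a \emph{pure} POVM gives an \emph{isotropic} measure: $F(s)=|\psi_s\rangle\langle\psi_s|$ and $\int|\psi\rangle\langle\psi|\,d\sigma(\psi)=\tfrac1n\id$ (the center-of-mass fact recorded just before the theorem), whence for every traceless unit $A$
\[
\int\|A\psi\|^2\,d\sigma(\psi) \;=\; \tr\!\Bigl(A^2\!\int|\psi\rangle\langle\psi|\,d\sigma\Bigr) \;=\; \tfrac1n\,,
\]
where $\|A\psi\|$ is the norm of the vector $A\psi\in\Hilb$. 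I would then take $\nu$ $\delta_\infty$-optimal, choose measurable unit representatives $\psi,\psi'$ with $\langle\psi,\psi'\rangle\ge 0$ of the two rank-one projectors paired by $\nu$, and write $\psi'=\alpha\psi+\beta\phi$ with $\phi\perp\psi$ of norm $1$, $\alpha=\langle\psi,\psi'\rangle\in[0,1]$ and $|\beta|^2=1-\alpha^2=\tfrac12\|\,|\psi\rangle\langle\psi|-|\psi'\rangle\langle\psi'|\,\|_{\mathrm{HS}}^2\le\tfrac12\delta_\infty^2$ on $\supp\nu$. A one-line computation gives, with $a=\langle\psi,A\psi\rangle$, $a'=\langle\psi',A\psi'\rangle$,
\[
a'-a \;=\; \underbrace{|\beta|^2\bigl(\langle\phi,A\phi\rangle-a\bigr)}_{P} \;+\; \underbrace{2\alpha\,\RE\bigl(\beta\langle\psi,A\phi\rangle\bigr)}_{Q}\,.
\]
Here $|Q|\le 2|\beta|\,\|A\psi\|$ and, rewriting $|\beta|^2\langle\phi,A\phi\rangle=\langle\beta\phi,A(\beta\phi)\rangle$ with $\beta\phi=\psi'-\alpha\psi$, also $|P|\le 2|\beta|(\|A\psi\|+\|A\psi'\|)$; combined with $|\beta|^2\le\tfrac12\delta_\infty^2$ and the isotropy identity for both $\sigma_V$ and $\sigma_W$ this gives $\|P\|_{L_2(\nu)}^2,\ \|Q\|_{L_2(\nu)}^2=O(\delta_\infty^2/n)$, while $\|a+a'\|_{L_2(\nu)}^2\le 2\int a^2\,d\sigma_V+2\int a'^2\,d\sigma_W\le 2\int\|A\psi\|^2\,d\sigma_V+2\int\|A\psi'\|^2\,d\sigma_W=4/n$. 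Since $\int g_A\,d\sigma_V-\int g_A\,d\sigma_W=-\int(P+Q)(a+a')\,d\nu$, Cauchy--Schwarz yields $|\int g_A\,d\sigma_V-\int g_A\,d\sigma_W|=O(\delta_\infty/n)$ uniformly in $A$, hence $|K(\sigma_V)-K(\sigma_W)|=O(\delta_\infty/n)$ and $|\gamma(V)-\gamma(W)|=O(\delta_\infty)$ with a universal constant (one reads off $c=6\sqrt2$).

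The step I expect to be the real obstacle is the bound on $P$: the naive estimate $|\langle\phi,A\phi\rangle|\le\|A\|_{\mathrm{op}}$ is worthless here, since $\|A\|_{\mathrm{op}}$ need not be small and — unlike $\int\|A\psi\|^2\,d\sigma$ — is not controlled by isotropy. Reabsorbing $|\beta|^2\langle\phi,A\phi\rangle$ into $\langle\psi'-\alpha\psi,\,A(\psi'-\alpha\psi)\rangle$, so that it is dominated by $|\beta|(\|A\psi\|+\|A\psi'\|)$ and the isotropy identity applies again, is precisely what makes the $n$-dependence disappear. The remaining points — existence of a near-optimal coupling for compactly supported measures, and a measurable choice of phases along the coupling — are routine.
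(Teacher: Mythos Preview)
Your proof is correct and in spirit identical to the paper's: the whole point of part~(ii) is the isotropy identity $\int \tr(A^2P)\,d\sigma = \tfrac{1}{n}$ for a pure POVM, which you use exactly as the paper does to cancel the factor $n$ in $\gamma = 1 - nK$. The difference is only in packaging. The paper isolates the vector-level inequality you derive into a standalone operator lemma: for rank-one projectors $P,Q$ and Hermitian $A$,
\[
|\tr(A(P-Q))| \le \sqrt{2}\,\|P-Q\|_2\,\bigl(\tr(A^2(P+Q))\bigr)^{1/2},
\]
proved by choosing unit vectors $\xi,\eta$ with $\langle\xi,\eta\rangle\ge 0$ and writing $|\langle A\xi,\xi\rangle-\langle A\eta,\eta\rangle|\le |\xi-\eta|(|A\xi|+|A\eta|)$. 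This is your $P+Q$ estimate done in one line and, crucially, as a \emph{pointwise} inequality between projectors, so no measurable phase selection is needed along the coupling. Combined with the Cauchy--Schwarz bound $|\tr((s+t)A)|\le\sqrt{2}(\tr(A^2(s+t)))^{1/2}$ and the isotropy identity, the paper gets $c=4$ rather than your $6\sqrt{2}$.

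For part~(i) the paper merely remarks that $C(\sigma)$, $I(\sigma)$, $J(\sigma)$ are Lipschitz in $\delta_2$ and defers details; your argument via $K(\sigma)$ is more direct and already gives the explicit constant $c(n)=2\sqrt{n}$.
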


\medskip
\noindent Note that this result enables us to compare spectral gaps of POVMs defined
on different sets (but having values in the same Hilbert space). This idea goes back to \cite{OC} \footnote{In \cite{OC} the authors consider the $L_1$-version of this distance, and call it the Kantorovich distance.}. Let us emphasize that the estimate in (ii) is {\it dimension-free}. This is important, for instance, for comparison of spectral gaps
corresponding to different Berezin-Toeplitz quantization schemes.

Theorem \ref{thm-robust}(i) immediately follows
from the fact that $C(\sigma)$, $I(\sigma)$ and $J(\sigma)$ are Lipschitz in $\sigma$ with respect to $L_2$-Wasserstein distance. The details will appear in MSc thesis by V.~Kaminker.

For the proof of part (ii), we need the following auxiliary statement. In what follows we write $\|A\|_2$ for the Hilbert-Schmidt norm
$(\tr(AA^*))^{1/2}$.

\begin{lemma}\label{lemma-tr} Let $P,Q$ be rank $1$ orthogonal projectors.
Then for every $A \in \cL(\cH)$,
$$|\tr (A(P-Q))| \leq \sqrt{2}\|P-Q\|_2 \ (\tr (A^2(P+Q)))^{1/2}\;.$$
\end{lemma}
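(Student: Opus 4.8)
The plan is to avoid the rank-one vectors altogether and reduce everything to a Cauchy--Schwarz inequality for the Hilbert--Schmidt inner product $\langle X,Y\rangle=\tr(X^*Y)$, exploiting only that $P$ and $Q$ are idempotent. The starting point is the identity
\[
P-Q=P(P-Q)+(P-Q)Q\;,
\]
which is immediate from $P^2=P$ and $Q^2=Q$. Multiplying by $A$ and using cyclicity of the trace gives $\tr\big(A(P-Q)\big)=\tr\big((AP+QA)(P-Q)\big)$, whence, by Cauchy--Schwarz for the Hilbert--Schmidt inner product,
\[
\big|\tr(A(P-Q))\big|\le\|AP+QA\|_2\,\|P-Q\|_2\;.
\]
So the whole statement reduces to the estimate $\|AP+QA\|_2\le\sqrt{2}\,\big(\tr(A^2(P+Q))\big)^{1/2}$.

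To obtain this, I would use the triangle inequality $\|AP+QA\|_2\le\|AP\|_2+\|QA\|_2$ and evaluate each term: since $A=A^*$ (as $A\in\cL(\cH)$) and $P^2=P$, cyclicity gives $\|AP\|_2^2=\tr(PA^2P)=\tr(A^2P)$, and symmetrically $\|QA\|_2^2=\tr(A^2Q)$. Then $(\|AP\|_2+\|QA\|_2)^2\le 2\big(\|AP\|_2^2+\|QA\|_2^2\big)=2\,\tr(A^2(P+Q))$, which is exactly what is needed; combining with the previous display finishes the proof. One may also note in passing that $\tr(A^2(P+Q))\ge0$, so the right-hand side of the lemma makes sense: writing $B=(P+Q)^{1/2}$ one has $\tr(A^2(P+Q))=\tr(A(P+Q)A)=\|BA\|_2^2$.

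I do not anticipate a genuine obstacle here; the only thing to get right is this choice of decomposition of $P-Q$. The more ``geometric'' route --- writing $P=|u\rangle\langle u|$ and $Q=|v\rangle\langle v|$ with unit vectors, expanding $\tr(A(P-Q))=\langle u,A(u-v)\rangle+\langle u-v,Av\rangle$ and applying Cauchy--Schwarz in $\cH$ --- also works, but it requires first rotating the phases so that $\langle u,v\rangle\ge0$, for only then does one get $\|u-v\|\le\|P-Q\|_2$ (indeed $\|P-Q\|_2^2=2(1-|\langle u,v\rangle|^2)$ while $\|u-v\|^2=2(1-\langle u,v\rangle)$ for that choice, and $t^2\le t$ on $[0,1]$); overlooking this costs a worse, though still dimension-free, constant. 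The operator identity above bypasses this subtlety and, incidentally, works for orthogonal projectors of arbitrary rank, not only rank one.
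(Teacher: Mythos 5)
Your proof is correct, and it takes a genuinely different route from the paper. The paper works at the level of the unit vectors $\xi,\eta$ spanning the ranges of $P$ and $Q$: after rotating the phase so that $\langle\xi,\eta\rangle\geq 0$, it expands $\tr(A(P-Q))=\langle\xi-\eta,A\xi\rangle+\langle A\eta,\xi-\eta\rangle$, applies Cauchy--Schwarz in $\cH$ to get the factor $|\xi-\eta|(|A\xi|+|A\eta|)$, and then uses exactly the inequality $|\xi-\eta|=(2-2\langle\xi,\eta\rangle)^{1/2}\leq(2-2\langle\xi,\eta\rangle^{2})^{1/2}=\|P-Q\|_2$ whose necessity (and the phase subtlety behind it) you correctly flag in your closing remark. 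Your argument replaces all of this with the operator identity $P-Q=P(P-Q)+(P-Q)Q$, Hilbert--Schmidt Cauchy--Schwarz, and the elementary bounds $\|AP\|_2^2=\tr(A^2P)$, $\|QA\|_2^2=\tr(A^2Q)$, $(a+b)^2\leq 2(a^2+b^2)$; every step checks out (note that $A=A^*$, which the paper's convention $\cL(\cH)=$ Hermitian operators guarantees, is used when you compute $\|AP\|_2^2=\tr(PA^2P)$). What your version buys is that it needs only $P^2=P$, $Q^2=Q$ and self-adjointness, so it holds verbatim for orthogonal projectors of arbitrary rank and sidesteps the phase normalization entirely; what the paper's version buys is a completely hands-on computation in $\cH$ that makes visible where the constant $\sqrt2$ and the comparison $|\xi-\eta|\leq\|P-Q\|_2$ come from. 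Both yield the same dimension-free constant.
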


\begin{proof} Suppose that $P$ and $Q$ are orthogonal projectors to
unit vectors $\xi$ and $\eta$, respectively. By tuning the phase of $\xi$, we can assume that
$\langle \xi, \eta \rangle \geq 0$. We have
$$ |\tr (A(P-Q))| = |\langle A\xi,\xi\rangle - \langle A\eta,\eta\rangle| $$
$$= |\langle \xi-\eta, A\xi \rangle + \langle A\eta, \xi-\eta \rangle |\leq |\xi-\eta| (|A\xi|+|A\eta|)$$
$$ = |\xi-\eta|( \langle A^2\xi,\xi\rangle^{1/2} + \langle A^2\eta,\eta\rangle^{1/2}) \leq \sqrt{2}|\xi-\eta|( \langle A^2\xi,\xi\rangle + \langle A^2\eta,\eta\rangle)^{1/2}$$
$$ =\sqrt{2}|\xi-\eta|\left(\tr(A^2P)+\tr(A^2Q)\right)^{1/2} \;.$$
But since $0 \leq \langle \xi, \eta \rangle \leq 1$,
$$|\xi-\eta| = (2 -2\langle \xi, \eta \rangle)^{1/2}  \leq (2 -2\langle \xi, \eta \rangle^2)^{1/2}$$
$$= (\tr(P-Q)^2)^{1/2} = \|P-Q\|_2\;.$$ This completes the proof.
\end{proof}

\medskip
\noindent
{\bf Proof of Theorem \ref{thm-robust} (ii):}
Denote by $\cP$ the space of all rank 1 orthogonal projectors on $\Hilb$.
We can assume without loss of generality that pure POVMs $V$ and $W$
are defined on subsets $\Omega_1$ and $\Omega_2$ of $\cP$, respectively,
and that the maps $F_i: \Omega_i \to \cP$ are the inclusions.
Thus representation \eqref{eq-POVM-density} in this case can be simplified
as
$$dV(s) = n\,s\,d\alpha_1(s)\;, \; dW(t) = n\,t\,d\alpha_2(t)\;,$$
where $\sigma_V= \alpha_1$ and $\sigma_W=\alpha_2$ are Borel probability measures
supported in $\Omega_1$ and $\Omega_2$, respectively. Let us emphasize that here and below $s,t$ stand for rank 1 orthogonal projectors.
 Pick
any measure $\nu$ on $\cP \times \cP$ with marginals $\alpha_1$ and $\alpha_2$
and write
$$\Delta:= \max_{(s,t) \in \supp(\nu) } \|s-t\|_2\;.$$

We use the fact that the operators $\cE_1,\cE_2: \cL(\Hilb) \to \cL(\Hilb)$
given by formula \eqref{eq-e} have the same spectrum as the Berezin transform. For $A \in \cL(H)$ with $\tr (A^2) =1$ put
$$D:= |((\cE_1 A,A)) - ((\cE_2 A,A))|\;.$$ One readily rewrites
$$D = n \ \left | \int_{\Omega_1} ((F_1(s),A))^2d\alpha_1(s) - \int_{\Omega_2} ((F_2(t),A))^2d\alpha_2(t)\right | $$
$$\leq  n \ \int_{\Omega_1 \times \Omega_2} |\tr((s-t)A)|\ |\tr((s+t)A)| d\nu\;.$$
By Lemma \ref{lemma-tr}, $$|\tr((s-t)A)| \leq \sqrt{2}\|s-t\|_2\  (\tr(A^2(s+t)))^{1/2}\;.$$
By Cauchy-Schwarz, writing $$(s+t)A = (s+t)^{1/2}((s+t)^{1/2}A)\;,$$ we get
$$|\tr((s+t)A)| \leq (\tr(s+t))^{1/2}(\tr(A^2(s+t)))^{1/2} = \sqrt{2}(\tr(A^2(s+t)))^{1/2} \;.$$
It follows that
$$D \leq 2n\max_{(s,t) \in \supp(\nu)} \|s-t\|_2 \ \int \tr(A^2(s+t)) d\nu\;.$$
The integral on the right can be rewritten as
$$\tr \left( A^2 \int_{\Omega_1} s d\alpha_1(s)\right ) + \tr\left ( A^2 \int_{\Omega_2} t d\alpha_2(t)\right )= 2/n\;,$$
since  $$\int_{\Omega_1} s\,d\alpha_1(s) = \int_{\Omega_2} t\,d\alpha_2(t) = \frac{1}{n}\id$$ and $\tr(A^2)=1$.
It follow that $D \leq 4\Delta$. Choosing $\nu$ so that $\Delta$ becomes arbitrary close to
$\delta:= \delta_\infty(\alpha_1,\alpha_2)$, and taking $A$ with
\begin{equation} \label{eq-A-constraints}
\tr (A) =0\;,\ \tr(A^2)=1
\end{equation}
to be an eigenvector
of $\cE_1$ with the first eigenvalue $\gamma_1(\cE_1)$, we get that
$$|\gamma_1(\cE_1)- ((\cE_2 A,A))| \leq 4\delta\;.$$ But due to the variational characterization of the first
eigenvalue,  $\gamma_1(\cE_2) = \max ((\cE_2 A,A))$, where the maximum is taken over all $A$ satisfying  \eqref{eq-A-constraints}.
It follows that
$\gamma_1(\cE_1)- \gamma_1(\cE_2) \leq 4\delta$. By symmetry,
$\gamma_1(\cE_2)- \gamma_1(\cE_1) \leq 4\delta$, which yields the theorem with $c= 4$.

\qed

\medskip

Our next result provides a geometric characterization of the eigenfunction of the operator $\cB$ with the eigenvalue $\gamma_1$.
Let $A \in \cL(\Hilb)$ be the trace zero unit vector generating the best fitting line corresponding to $W$. In view of Theorem \ref{prop-best-fit},
$$\gamma_1 = 1-\gamma(W)= n(I-J)\;,$$ with $I= I(\sigma_W)$ and $J=J(\sigma_W)$.

\medskip
\noindent
\begin{thm}\label{thm-eigenf} The function \begin{equation}\label{eq-psi1}
\psi_1: \Omega \to \R,\; s \mapsto \frac{((F(s),A))}{\sqrt{I-J}}\end{equation}
is an eigenfunction of the operator $\cB$ with the eigenvalue $\gamma_1$. Furthermore,
$\|\psi_1\|=1$.
\end{thm}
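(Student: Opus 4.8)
The plan is to recognize $\psi_1$ as a scalar multiple of the function $T^*(A)$ and then to combine the elementary intertwining between $\cB$ and $\cE$ with Remark \ref{rem-eigenv}.

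First I would invoke Remark \ref{rem-eigenv}: the trace zero unit vector $A$ generating the best fitting line for $W$ is an eigenvector of $\cE$ with eigenvalue $\gamma_1$, that is $\cE(A)=\gamma_1 A$; moreover $\gamma_1 = n(I-J) > 0$ is implicit in the very definition of $\psi_1$. Since $T^*(A)(s) = n((F(s),A))$ by the formula for $T^*$ in Section \ref{prel}, we have $\psi_1 = \bigl(n\sqrt{I-J}\bigr)^{-1}\, T^*(A)$. Then, using $\cB = \frac1n T^*T$ and $\cE = \frac1n TT^*$, I would compute
\[
\cB(T^*A) = \tfrac1n\, T^*(TT^*A) = T^*(\cE A) = \gamma_1\, T^*A,
\]
which shows that $T^*A$, hence $\psi_1$, is an eigenfunction of $\cB$ with eigenvalue $\gamma_1$; it is genuinely nonzero because, by the adjunction $((T\phi,A)) = (\phi,T^*A)$, one has $\|T^*A\|^2 = ((TT^*A,A)) = n\gamma_1 \tr(A^2) = n\gamma_1 > 0$.

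For the normalization I would simply expand $\|\psi_1\|^2 = (I-J)^{-1}\int_\Omega ((F(s),A))^2\,d\alpha(s)$ and observe, via formula \eqref{eq-vsp-e}, that $\int_\Omega ((F(s),A))^2\,d\alpha(s) = n^{-1}((\cE(A),A)) = n^{-1}\gamma_1\tr(A^2) = \gamma_1/n$; equivalently, this integral is the supremal value $K$ of \eqref{eq-Ksup} attained at $A$, and $K = I-J$ by \eqref{eq-IJK-vsp}. Either way the integral equals $I-J$, so $\|\psi_1\|^2 = 1$.

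I do not expect any serious obstacle: the statement is essentially a repackaging of Remark \ref{rem-eigenv} and of the identity $K = I-J$ established in the proof of Theorem \ref{prop-best-fit}. The only point deserving a line of care is confirming that $\psi_1$ is a nonzero eigenfunction, which is automatic once $\gamma_1 > 0$, and this positivity is forced by the normalization $I-J > 0$ tacit in the formula \eqref{eq-psi1} defining $\psi_1$.
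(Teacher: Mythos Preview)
Your proposal is correct and follows essentially the same route as the paper's proof: invoke Remark \ref{rem-eigenv} to get $\cE A=\gamma_1 A$, use the intertwining $\cB T^*=T^*\cE$ to deduce $\cB(T^*A)=\gamma_1 T^*A$, and then check the normalization via $\|T^*A\|^2=n\gamma_1=n^2(I-J)$. The paper's argument is slightly terser but identical in substance.
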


\medskip
\noindent
In other words, up to a multiplicative constant, the first eigenfunction sends $s \in \Omega$
to the projection of the density $F(s)$ to the best fitting line.

\begin{proof}  By Remark \ref{rem-eigenv} above, the operator $A$ generating the
best fitting line is an eigenvector of the quantum channel $\cE$: $\cE A= \gamma_1 A$. Since $\cE=n^{-1}TT^*$ and $\cB= n^{-1}T^*T$, we have $\cB(T^*A)= \gamma_1 T^*A$ and $(T^*A,T^*A) = n\gamma_1$. Furthermore, $T^*A(s)= n((F(s),A))$ and $n\gamma_1 = n^2(I-J)$.
Choosing $\psi_1= T^*A/\|T^*A\|$, we get \eqref{eq-psi1}.
\end{proof}

\medskip

Next, we discuss the {\it diffusion distance} on $\Omega$ associated to the Markov operator $\cB$ (see \cite{CoL}). This distance, which originated in geometric analysis of data sets,  depends on a
positive parameter $\tau$ playing the role of the time in the corresponding random process. Take any orthonormal eigenbasis $\{\psi_k\}$ corresponding to eigenvalues $1=\gamma_0 \geq \gamma_1 \geq \gamma_2 \dots$ of $\cB$ such that
$\psi_0$ is constant. The diffusion distance $D_\tau$ is defined by
\begin{equation}\label{eq-diffusion}
D_\tau(s,t) = \big( \sum_{k \geq 1} \gamma_k ^{2\tau}(\psi_k(s) - \psi_k(t))^2 \big)^{1/2} \;\;\forall s,t \in \Omega\;.
\end{equation}
If $\gamma_1 < 1$, i.e., the spectral gap is positive, this expression decays exponentially.

Suppose now that $\gamma_2 < \gamma_1$. In this case the asymptotic behavior of $D_\tau(s,t)$ as $\tau \to \infty$  is given by
\begin{equation}
D_\tau(s,t)= \gamma_1^\tau\frac{|((F(s)-F(t),A))|}{(I-J)^{1/2}}\ (1+o(1))\;,\;\text{if}\;\; ((F(s),A))\neq ((F(t),A))\;,
\end{equation}
and $D_\tau(s,t) = \mathcal{O}(\gamma_2^\tau)$ otherwise.
The difference in these asymptotic formulas highlights  the multi-scale behaviour of the  metric space $(\Omega, D_\tau)$. In the first approximation, this space consists of the level sets of the function
$s \mapsto ((F(s),A))$ situated at the distance $\sim \gamma_1^\tau$ from one another, while each fiber
has the diameter $\lesssim \gamma_2^\tau$.  Viewing POVMs as data clouds in $\cS$ opens up a prospect of using various tools of geometric data analysis for studying POVMs. The above result on the diffusion distance associated to a POVM can be considered as a step in this direction.

\section{Case study: representations of finite groups} \label{sec-groups}
In this section we will be interested in finite POVMs associated to irreducible representations
of finite groups. We start with some preliminaries from Woldron's book \cite{W}. Let $G$ be a finite set.

\begin{defin}\label{def - tight frame} {\rm
	A finite collection $\{f_s\}_{s\in G}$ of non-zero vectors in a finite-dimensional Hilbert space $\Hilb$ is said to be a {\it tight frame} if there exists a number $A > 0$, called the {\it frame bound}, such that
	\begin{equation}\label{Bessel}
	A\| f \|^2 =\sum_{s\in G}|\langle f , f_s \rangle|^2, \forall f\in \Hilb\;.
	\end{equation}
}
	%Further,  $(f_j)_{j\in J}$ is \textbf{normalized} if $A = 1$, and \textbf{finite} if $J$ is finite.
\end{defin}
Denote by $P_s$ the orthogonal projector to $f_s$. One readily checks that for such a frame, the operators
\begin{equation}\label{eq-W-fr}
W_s:= \frac{\|f_s\|^2}{A} P_s, \; s \in G\;,
\end{equation}
form a $\cL(\Hilb)$-valued POVM on $G$.

Suppose from now on that $G$ is a finite {\it group}, and we are given its non-trivial irreducible unitary representation $\rho$
on a $d_\rho$-dimensional Hilbert space $V$. \footnote{All the representations considered below are assumed to be unitary.}  One can show \cite{W} that the vectors $\{f_s:=\frac{1}{\sqrt{d_\rho}}\rho(s)\}_{s\in G}$ form a tight frame in the operator space $\Hilb:=\text{End}(V)$
equipped with the Hermtian product $((C,D)) = \tr (CD^*)$ with the frame bound
$A={| G |}/{d_\rho ^2}$. Write $n=d_\rho^2 =\dim \Hilb$. By \eqref{eq-W-fr}, the corresponding POVM $W= \{W_s\}$, $s \in G$ is given by $W_s = n P_s \alpha_s$ with $\alpha_s = \frac{1}{| G |}$. Interestingly enough,
the spectrum of the corresponding Berezin transform can be calculated via the characters of irreducible unitary representations of $G$.

Denote by $\chi_\rho:G \to \C$, $\chi_\rho(s) := \tr (\rho(s))$ the character of the representation $\rho$.
Consider a basis in $L_2(G)$ consisting of the indicator functions of the elements of $G$.
It readily follows from the definition that the Berezin transform $\cB$ corresponding to the POVM $W$ is
given by a matrix
$$\cB_{ts} = n \,\tr(P_tP_s) \alpha_s = \frac{1}{| G |}u(st^{-1})\;,$$
where $u(s) := | \chi_\rho(s) | ^2$. The eigenvalues of this matrix and their multiplicities
are given by the following proposition, see chapter 3E of \cite{Diaconis}.
	\begin{prop} \label{lambda varphi}
		The eigenvalues of $\cB$ are given by  $$\lambda_\varphi := \frac{1}{d_\varphi| G|}\sum_{s\in G} u(s)\chi_\varphi(s)\;,$$ where $\varphi$ runs over irreducible representations of $G$, and the contribution of each $\varphi$ into the multiplicity of $\lambda_\varphi$ is $d^2_\varphi$.
	\end{prop}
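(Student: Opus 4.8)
The plan is to recognize $\cB$ as a scalar multiple of the left convolution operator by the class function $u(s)=|\chi_\rho(s)|^2$ on $L_2(G)$, and then to diagonalize it by Fourier analysis on $G$ (equivalently, via the decomposition of the regular representation into isotypic components). This is precisely the computation underlying the classical upper bound lemma for random walks on finite groups, which is why one may point to \cite[Ch.~3E]{Diaconis}.

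First I would record two elementary facts about $u$: it is a class function, and it is invariant under $s\mapsto s^{-1}$ because $\chi_\rho(s^{-1})=\overline{\chi_\rho(s)}$ for the unitary representation $\rho$; moreover $\sum_{s\in G}u(s)=|G|\,\langle\chi_\rho,\chi_\rho\rangle=|G|$ by Schur orthogonality, so $u/|G|$ is a probability measure on $G$ and $\cB$ is the transition operator of the associated random walk (consistently with $\cB 1=1$). Using $u(s)=u(s^{-1})$ together with the substitution $r=st^{-1}$ one rewrites, for $f\in L_2(G)$,
\[
(\cB f)(t)=\frac{1}{|G|}\sum_{s\in G}u(st^{-1})f(s)=\frac{1}{|G|}\sum_{r\in G}u(r)f(rt)=\frac{1}{|G|}\,(u*f)(t),
\]
where $(u*f)(t)=\sum_{s\in G}u(s)f(s^{-1}t)$. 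Thus $\cB=\tfrac1{|G|}L_u$, with $L_u$ denoting left convolution by $u$.

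Next I would invoke the Fourier isomorphism $L_2(G)\cong\bigoplus_\varphi\End(V_\varphi)$, $f\mapsto\big(\widehat f(\varphi)\big)_\varphi$ with $\widehat f(\varphi)=\sum_{s\in G}f(s)\varphi(s)$, the sum running over the irreducible unitary representations $\varphi$ of $G$ (each $\varphi$ occurs in the regular representation with multiplicity $d_\varphi$, whence the block $\End(V_\varphi)$ has dimension $d_\varphi^2$). A direct computation gives $\widehat{u*f}(\varphi)=\widehat u(\varphi)\,\widehat f(\varphi)$, so $L_u$ corresponds to left multiplication by $\widehat u(\varphi)$ on each block. Since $u$ is central, $\widehat u(\varphi)$ commutes with $\varphi(G)$ and hence, by Schur's lemma, equals the scalar matrix $c_\varphi\,\mathrm{Id}$ with $c_\varphi=\tfrac{1}{d_\varphi}\tr\widehat u(\varphi)=\tfrac{1}{d_\varphi}\sum_{s\in G}u(s)\chi_\varphi(s)$. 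Therefore $\cB$ acts on $\End(V_\varphi)$ as the scalar $\tfrac{1}{|G|}c_\varphi=\lambda_\varphi$, and collecting the blocks over all $\varphi$ yields exactly the claimed list of eigenvalues, the contribution of each $\varphi$ to the multiplicity of $\lambda_\varphi$ being $\dim\End(V_\varphi)=d_\varphi^2$.

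I expect no serious obstacle: the content is classical and the only thing requiring attention is keeping conventions consistent (the side on which one convolves, the normalization of the Fourier transform, and the multiplicity $d_\varphi$ of $\varphi$ in $L_2(G)$, which produces $d_\varphi^2$ rather than $d_\varphi$ in the multiplicity count). As an alternative that avoids the abstract isomorphism, one can exhibit eigenvectors explicitly: the matrix coefficients $s\mapsto\sqrt{d_\varphi}\,\varphi(s)_{ij}$ form an orthonormal basis of $L_2(G)$ by the orthogonality relations, and one checks directly, again via Schur's lemma, that $\cB$ sends $\varphi_{ij}$ to $\lambda_\varphi\,\varphi_{ij}$, which makes the count $d_\varphi^2$ manifest and simultaneously confirms that all eigenvalues are real.
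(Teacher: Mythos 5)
Your proof is correct and is essentially the paper's (implicit) argument: the paper gives no proof of its own but points to \cite{Diaconis}, Ch.~3E, and your identification of $\cB$ with $\tfrac{1}{|G|}$ times convolution by the central function $u=|\chi_\rho|^2$, diagonalized via the Fourier/Peter--Weyl decomposition $L_2(G)\cong\bigoplus_\varphi \End(V_\varphi)$ with Schur's lemma producing the scalar $\lambda_\varphi$ on each $d_\varphi^2$-dimensional block, is exactly that classical computation. The details you flag are all in order (the inversion-invariance $u(s)=u(s^{-1})$, the convolution convention matching the matrix $\cB_{ts}=\tfrac{1}{|G|}u(st^{-1})$, and the explicit eigenbasis $\sqrt{d_\varphi}\,\varphi_{ij}$, which is precisely the eigenbasis the paper invokes later when computing the diffusion distance).
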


\medskip
\noindent
Let us emphasize that it could happen that $\lambda_\varphi=\lambda_\psi$ for different representations $\varphi$ and $\psi$. Note also that by Lemma \ref{Useful properties}(i) below, $\lambda_\varphi = 1$ when $\varphi$ is the trivial one-dimensional representation.

\begin{rem}{\rm  We claim that the gap $\gamma(W)$ is rational. Indeed, for a unitary representation $\psi$
by complex unitary matrices denote $\psi' (s) = \overline{\psi(s)}$, where the bar stands for the complex conjugation.
Note that $u$ is the character
of the (in general, reducible) representation $\theta:= \rho \otimes \rho'$. By the Schur orthogonality relations, $$\frac{1}{| G|}\sum_{s\in G} u(s)\chi_\varphi(s)$$ equals the multiplicity of $\varphi$ in the decomposition
of $\theta$ into irreducible representations, and hence is an integer. The claim follows from Proposition \ref{lambda varphi}.
}
\end{rem}

The main result of this section is the following algebraic criterion of the positivity
of the spectral gap of $W$. Following chapter $12$ in \cite{Isaacs} we define the {\it vanishing-off subgroup} $\cV(\rho)$ to be the smallest subgroup of $G$ such that $\chi_\rho$ vanishes on $G \setminus \cV(\rho)$:
	$$\cV(\rho) = <s\in G \mid \chi_\rho(s)\not=0>.$$ Since the character $\chi_\rho$ is conjugation invariant, $\cV(\rho)$ is normal.\\
	
	\begin{thm}\label{thmgap0}
		The following are equivalent:
		\begin{itemize}
			\item[{(i)}] $\cV(\rho)\not=G$;
			 			\item[{(ii)}] $\gamma(W) = 0$, i.e., there exists a non-trivial irreducible
unitary representation $\varphi$ of $G$ with $\lambda_\varphi =1$.
		\end{itemize}
	\end{thm}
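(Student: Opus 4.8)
The plan is to determine exactly when a non-trivial irreducible representation $\varphi$ produces the eigenvalue $1$ of $\cB$, and then to translate that condition into the statement $\cV(\rho)\neq G$ via an inflation argument.

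First I would reduce condition (ii) to a purely character-theoretic one. The trivial representation $\mathbf 1$ always satisfies $\lambda_{\mathbf 1}=\frac{1}{|G|}\sum_{s\in G}u(s)=\frac{1}{|G|}\sum_{s\in G}|\chi_\rho(s)|^2=1$ by irreducibility of $\rho$, and by Proposition \ref{lambda varphi} it contributes exactly $d_{\mathbf 1}^2=1$ to the multiplicity of the eigenvalue $1$. Since $\gamma(W)=1-\gamma_1$ and the spectrum of $\cB$ lies in $[0,1]$ with top eigenvalue $1$, we get $\gamma(W)=0$ if and only if $1$ has multiplicity at least $2$ as an eigenvalue of $\cB$, which by Proposition \ref{lambda varphi} is equivalent to the existence of a \emph{non-trivial} irreducible $\varphi$ with $\lambda_\varphi=1$.

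The heart of the matter is the estimate, valid for every irreducible $\varphi$,
$$d_\varphi|G|\,\lambda_\varphi=\sum_{s\in G}u(s)\chi_\varphi(s)=\sum_{s\in G}u(s)\,\RE\chi_\varphi(s)\leq\sum_{s\in G}u(s)|\chi_\varphi(s)|\leq d_\varphi\sum_{s\in G}u(s)=d_\varphi|G|\;,$$
which uses $u\geq 0$, the reality of $\lambda_\varphi$, the bound $|\chi_\varphi(s)|\leq d_\varphi$, and $\sum_{s\in G}u(s)=|G|$. Hence $\lambda_\varphi\leq 1$ for all $\varphi$. Since the left-hand side is a non-negative real, equality forces, for every $s$ with $u(s)>0$ (equivalently $\chi_\rho(s)\neq 0$), both $\RE\chi_\varphi(s)=|\chi_\varphi(s)|$ and $|\chi_\varphi(s)|=d_\varphi$, hence $\chi_\varphi(s)=d_\varphi$, hence $\varphi(s)=\id$. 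As $\ker\varphi$ is a subgroup of $G$ and the set $\{s:\chi_\rho(s)\neq 0\}$ generates $\cV(\rho)$, this means precisely that $\cV(\rho)\subseteq\ker\varphi$. Thus $\lambda_\varphi=1$ if and only if $\cV(\rho)\subseteq\ker\varphi$.

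It then remains to close the loop. If $\cV(\rho)\neq G$, then --- since $\cV(\rho)$ is normal --- the quotient $G/\cV(\rho)$ is a non-trivial finite group, which admits a non-trivial irreducible representation; inflating it along the quotient map $G\to G/\cV(\rho)$ gives a non-trivial irreducible $\varphi$ of $G$ with $\cV(\rho)\subseteq\ker\varphi$, so $\lambda_\varphi=1$ and $\gamma(W)=0$ by the previous steps. Conversely, if a non-trivial irreducible $\varphi$ satisfies $\lambda_\varphi=1$, then $\cV(\rho)\subseteq\ker\varphi$, while $\ker\varphi\subsetneq G$ (otherwise $\varphi$ would be a direct sum of copies of the trivial representation, contradicting irreducibility together with non-triviality), hence $\cV(\rho)\neq G$. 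I expect the only genuinely delicate point to be the equality analysis in the displayed chain --- in particular, using that $d_\varphi|G|\lambda_\varphi$ is a non-negative real to force termwise equality, together with the elementary fact that a character value equal to $d_\varphi$ forces the corresponding operator to be $\id$; the remaining ingredients are standard representation theory.
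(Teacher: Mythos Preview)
Your proof is correct and follows essentially the same approach as the paper: both directions hinge on the identity $\lambda_\varphi=1\Leftrightarrow \chi_\varphi(s)=d_\varphi$ for all $s$ with $\chi_\rho(s)\neq 0$, which is equivalent to $\cV(\rho)\subseteq\ker\varphi$, together with the inflation of a non-trivial irreducible of $G/\cV(\rho)$. The paper reaches the equality analysis by rewriting $1=1-\frac{1}{d_\varphi|G|}\sum_s u(s)(d_\varphi-\chi_\varphi(s))$ and taking real parts, while you package the same computation as a chain of inequalities yielding $\lambda_\varphi\le 1$ with equality forcing $\chi_\varphi(s)=d_\varphi$; these are the same argument in slightly different dress.
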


\medskip
In the next lemma, we collect some standard facts from the representation theory (see e.g. Chapter 2 of \cite{Diaconis}) which will be used in the proof of Theorem \ref{thmgap0}. We write $\text{Irrep}$ for the
set of all unitary irreducible representations of $G$ up to an isomorphism.

	\begin{lemma}\label{Useful properties}
			$(i)\ \frac{1}{| G |}\sum_{s\in G} | \chi_\rho(s) |^2 = 1,~\forall\,\rho \in \textup{Irrep}$;\\
			\\
$~~~~\quad\quad\quad\quad\quad(ii)\ \sum_{\varphi\in \textup{Irrep}} d_\varphi^2 = | G |.$
	\end{lemma}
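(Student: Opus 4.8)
The plan is to deduce both identities from the Schur orthogonality relations for matrix coefficients of irreducible complex representations; both are classical facts, so the work lies in recalling the correct averaging argument rather than in surmounting a genuine difficulty, and one could alternatively simply cite Chapter~2 of \cite{Diaconis}.

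For part $(i)$, I would begin by noting that since $\rho$ is unitary we have $\chi_\rho(s^{-1}) = \overline{\chi_\rho(s)}$ for all $s \in G$, so that $|\chi_\rho(s)|^2 = \chi_\rho(s)\chi_\rho(s^{-1})$. Then I would use Schur's lemma in its averaged form: for any linear operator $T$ on the representation space $V$, the operator $\widehat{T} := \frac{1}{|G|}\sum_{s \in G}\rho(s)\,T\,\rho(s)^{-1}$ intertwines $\rho$ with itself, hence by irreducibility equals $\frac{\tr(T)}{d_\rho}\,\mathrm{Id}_V$, the scalar being fixed by taking traces. Letting $T$ run over a basis of matrix units and taking the trace of the resulting identities gives the first orthogonality relation $\frac{1}{|G|}\sum_{s \in G}\chi_\rho(s)\chi_\rho(s^{-1}) = 1$, which is precisely $(i)$.

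For part $(ii)$, I would pass to the regular representation of $G$ on $L_2(G)$, whose character is $\chi_{\mathrm{reg}}(e) = |G|$ and $\chi_{\mathrm{reg}}(s) = 0$ for $s \neq e$. Using $(i)$ together with the orthogonality of characters of inequivalent irreducibles, the multiplicity of $\varphi$ in $L_2(G)$ is $\frac{1}{|G|}\sum_{s \in G}\chi_{\mathrm{reg}}(s)\overline{\chi_\varphi(s)} = d_\varphi$, so $L_2(G) \cong \bigoplus_{\varphi \in \mathrm{Irrep}}\varphi^{\oplus d_\varphi}$; comparing dimensions yields $|G| = \sum_\varphi d_\varphi^2$. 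Equivalently, one evaluates the character identity $\chi_{\mathrm{reg}} = \sum_\varphi d_\varphi \chi_\varphi$ at $s = e$. There is no real obstacle here; the only point worth flagging is that both parts use $\mathrm{End}_G(V) = \C\,\mathrm{Id}_V$ for irreducible $V$, i.e.\ Schur's lemma over $\C$, which is why we work throughout with complex representations, and all sums being finite, no analytic subtleties enter.
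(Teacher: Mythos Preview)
Your argument is correct and entirely standard; in fact the paper does not give a proof at all, merely citing Chapter~2 of \cite{Diaconis}, which you also mention as an alternative. So your proposal goes beyond what the paper does while remaining fully consistent with it.
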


\medskip\noindent{\bf Proof of Theorem \ref{thmgap0}:}

We begin by proving (ii) $\Rightarrow$ (i): Assume that there exists a non-trivial irreducible representation $\varphi$ with $\lambda_\varphi = 1$. By using Lemma \ref{Useful properties} and the explicit formula for the eigenvalues from Proposition \ref{lambda varphi} we see that
		\begin{equation*}
		\begin{split}
		1 & \underset{\ref{lambda varphi}}{=} \frac{1}{d_\varphi| G|} \sum_{s\in G} | \chi_\rho(s)|^2\chi_\varphi(s) = \\
		& = \frac{1}{| G|} \sum_{s\in G} | \chi_\rho(s)|^2 - \frac{1}{d_\varphi| G|} \sum_{s\in G} | \chi_\rho(s)|^2 (d_\varphi - \chi_\varphi(s)) \underset{\ref{Useful properties}}{=} \\
		& = 1 - \frac{1}{d_\varphi| G|} \sum_{s\in G} | \chi_\rho(s)|^2 (d_\varphi - \chi_\varphi(s))
		\end{split}
		\end{equation*}
		By taking the real part of both sides we get
		$$\sum_{s\in G} | \chi_\rho(s) |^2\,\text{Re}(d_\varphi - \chi_\varphi(s)) = 0$$
		 Note that since  $\varphi(s)$ is unitary, all its eigenvalues are of the form $e^{i\theta}$ so $| \chi_\varphi (s) | = | \tr(\varphi(s)) |\leq d_\varphi$ and $\chi_\varphi(s)=d_\varphi$ iff $\varphi(s)$ is the identity. Hence $\text{Re}(d_\varphi - \chi_\varphi(s))$ must be non-negative. Since $| \chi_\rho(s) |^2$ is also non-negative, $\chi_\varphi(s)=d_\varphi$ for every $s \in G$ with $\chi_\rho(s) \neq 0$.
As we have seen above $\chi_\varphi(s)=d_\varphi$ if and only if $\varphi(s)=\id$. It follows that the vanishing off subgroup $\cV(\rho)$ is contained in the normal subgroup
$$\text{Ker}(\varphi) := \{s\mid\varphi(s)=\id\}\;.$$ Since $\varphi$ is irreducible and non-trivial, the latter subgroup
$\neq G$, and hence $\cV(\rho) \neq G$, as required.

\medskip

Next, we prove (i) $\Rightarrow$ (ii): Assume $\cV(\rho)\not=G$. Consider the quotient $H:=G/\cV(\rho)$, which is a non-trivial group, and let $\pi: G \to H$ be the natural projection. Take any non-trivial irreducible representation $\psi$ of $H$. Then $\varphi:= \psi \circ \pi$
is an irreducible representation of $G$. We claim that $\lambda_\varphi=1$. Indeed, for $s \in \cV(\rho)$ we have $\varphi(s)=\id$ and hence $\chi_\varphi(s) = d_\varphi$, and  for $s \notin \cV(\rho)$ holds $\chi_\rho(s) = 0$. It follows that

		\begin{equation*}
		\begin{split}
		\lambda_{\varphi} = \sum_{s\in \cV(\rho)} \frac{1}{d_{\varphi} | G |} | \chi_\rho(s) |^2 d_{\varphi} = \frac{1}{| G |} \sum_{s\in G} | \chi_\rho(s) |^2 \underset{\ref{Useful properties}}{=} 1\;.
		\end{split}
		\end{equation*}
This proves the claim and hence completes the proof of the theorem.

\qed

\medskip
%%%%%%%%%%%%%%%%%%%%%%%%%%%%%%%%%%%%%%%%%%%%%%%%

\begin{cor}\label{cor-simplegr}
		If $G$ is a simple group, then the gap of $W$ is positive.
	\end{cor}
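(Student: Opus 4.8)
The plan is to derive this from Theorem~\ref{thmgap0}, which reduces the positivity of $\gamma(W)$ to the assertion $\cV(\rho) = G$, combined with the normality of the vanishing-off subgroup and a short dimension count. Recall that $W$ here is the POVM attached to a \emph{non-trivial} irreducible unitary representation $\rho$ of $G$, as fixed in Section~\ref{sec-groups}.

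First I would use normality: as already noted, $\cV(\rho)$ is a normal subgroup of $G$, since its generating set $\{s \in G : \chi_\rho(s) \neq 0\}$ is invariant under conjugation, $\chi_\rho$ being a class function. Because $G$ is simple, this leaves exactly two possibilities, $\cV(\rho) = \{e\}$ or $\cV(\rho) = G$.

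Next I would eliminate the first possibility. If $\cV(\rho) = \{e\}$, then $\chi_\rho$ vanishes on $G \setminus \{e\}$, so Lemma~\ref{Useful properties}(i) gives $1 = \frac{1}{|G|}\sum_{s \in G}|\chi_\rho(s)|^2 = \frac{|\chi_\rho(e)|^2}{|G|} = \frac{d_\rho^2}{|G|}$, hence $d_\rho^2 = |G|$. But $\rho$ is non-trivial, so it is not isomorphic to the trivial one-dimensional representation; these being two distinct elements of $\textup{Irrep}$, Lemma~\ref{Useful properties}(ii) forces $1 + d_\rho^2 \leq \sum_{\varphi \in \textup{Irrep}} d_\varphi^2 = |G| = d_\rho^2$, which is absurd. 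Therefore $\cV(\rho) = G$, and Theorem~\ref{thmgap0} yields $\gamma(W) \neq 0$; since the spectral gap is always non-negative, we conclude $\gamma(W) > 0$.

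I do not anticipate a genuine obstacle, as essentially all the content is packaged in Theorem~\ref{thmgap0}. The only point requiring a little care is that "simple group" must be read in the standard sense of a non-trivial group with no proper non-trivial normal subgroups, so that the alternative $\cV(\rho) = \{e\}$ is genuinely different from $\cV(\rho) = G$, and that $\rho$ is the non-trivial representation fixed in the setup; both conventions are already in force in this section.
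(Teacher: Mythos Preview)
Your proof is correct and follows essentially the same approach as the paper: both use normality of $\cV(\rho)$ together with simplicity of $G$ to reduce to the case $\cV(\rho)=\{e\}$, and then derive the contradiction $d_\rho^2=|G|\geq 1+d_\rho^2$ from the two parts of Lemma~\ref{Useful properties}. The only cosmetic difference is that the paper phrases it as a proof by contradiction on the gap being zero, whereas you directly show $\cV(\rho)=G$ before invoking Theorem~\ref{thmgap0}.
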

\begin{proof} Indeed, otherwise by Theorem \ref{thmgap0} and the simplicity of $G$,
$\cV(\rho) = \{\id\}$, which means that $\chi_\rho(s) =0$ for every $s \neq \id$. Then the first
statement of Lemma \ref{Useful properties} yields $|G| = d_\rho^2$, while the second statement
guarantees that $|G| \geq 1+ d_\rho^2$, since $\rho$ is a non-trivial representation. We get a contradiction.
\end{proof}
%%%%%%%%%%%%%%%%%%%%%%%%%%%%%%%%%%%%%

\medskip

Let us  point out that there exist non-simple groups $G$ admitting an irreducible representation $\rho$
with $\cV(\rho)=G$. Indeed, consider the irreducible representation $\rho:\mathbb{Z}_m\to U(\mathbb{C}),\ \rho(s)=e^{2\pi i s/m}$ of  the abelian cyclic group $\mathbb{Z}_m$. Observe that $\cV(\rho) = \Z_m$, while $\Z_m$ is simple
if and only if $m$ is prime.

\medskip

Let us describe the diffusion distance $D_\tau$ (see \eqref{eq-diffusion}) corresponding to the POVM $W$ associated
to a finite group $G$ and a non-trivial irreducible representation $\rho$. Recall \cite{Diaconis} that for an irreducible representation $\varphi: G \to \mathbb{U}(n)$, the orthonormal basis of eigenfunctions corresponding to the eigenvalue $\lambda_\varphi$ presented in Proposition \ref{lambda varphi} is given by the matrix coefficients of $\varphi$ multiplied by $\sqrt{d_\varphi}$.  Assume that the gap of $G$ is strictly positve, and denote by $\beta_1 > \dots > \beta_k$ all {\it pair-wise distinct} eigenvalues of $\cB$ lying in the open interval $(0,1)$. Denote
$$R_j := \{\varphi \in \textup{Irrep}\mid \lambda_{\varphi} = \beta_j\}\;.$$
Then \eqref{eq-diffusion} yields the following expression for the diffusion distance:
\begin{equation}\label{eq-diffusion-1}
D_\tau(s,t) = \left ( \sum_{j=1^k} \beta_j^{2\tau} \sum_{\varphi \in R_j} d_\varphi \|\varphi(s)-\varphi(t)\|_2^2 \right )^{1/2} \;,
\end{equation}
where $\|\;\|_2$ stands for the Hilbert-Schmidt norm $\|C\|_2= (\tr (CC^*) )^{1/2}$.
Note that this expression can be rewritten in terms of the character $\chi_\varphi$ since
$$\|\varphi(s)-\varphi(t)\|_2^2  = 2(d_\varphi- \text{Re}\;\chi_\varphi(st^{-1}))\;.$$

Define a normal subgroup
$\Gamma_j := \bigcap_{\varphi \in R_j}  \text{Ker} (\varphi)\;, j= 1,\dots, k$
and a normal series $K_0 \supset K_1 \supset \dots ...$ with $K_0 = G$, $K_{k+1} =\{1\}$ and
$$K_m : = \bigcap_{j=1}^m \Gamma_j\;, m=1,\dots k\;.$$ It follows from
\eqref{eq-diffusion-1} that for $\tau \to +\infty$
\begin{equation}\label{eq-diffusion-2}
D_\tau(s,t) \sim \beta_{p+1}^\tau \;\;\text{for}\;\; st^{-1} \in K_p \setminus K_{p+1}\;.
\end{equation}
In fact we have a sequence of nested partitions $\Delta_p$ of $G$ formed by the cosets of $K_p$.
For every pair of distinct points $s,t \in G$ choose maximal $p$ so that $s$ and $t$ lie in the same element
of $\Delta_p$. Then asymptotical formula \eqref{eq-diffusion-2} holds, which manifests the multi-scale nature
of the diffusion distance.

Let us illustrate this in the case when $G=S_4$ is the symmetric group, and $\rho$ a $3$-dimensional
irreducible representation. The direct calculation with the character table of $S_4$ shows that
the first non-trivial eigenvalue $1/2$ corresponds to the unique $2$-dimensional irreducible representation
whose kernel coincides with the normal subgroup $K$ of order $4$ of $S_4$ called the Klein four-group. Thus $D_\tau(s,t) \sim (1/2)^\tau$ if $s,t$ belong to different cosets of $K$ in $S_4$,
and one can calculate that $D_\tau(s,t) \sim (1/3)^\tau$ if $s,t$ are distinct and belong to the same coset.

\begin{rem}\label{rem-rps}{\rm
A modification of the construction presented in this section is related
to Berezin-Toeplitz quantization. The modification goes in two directions.
First, we deal with unitary representations $\rho$ of compact Lie groups $G$ instead of finite groups, and second,
our POVMs are related to the $G$--orbits in a representation space $\Hilb$
as opposed to the image of $\rho$ in the endomorphisms of $\Hilb$. Let us very briefly
illustrate this in the following simplest case. Consider the irreducible unitary representation $\rho_j$ of the group $G=SU(2)$
in an $n= 2j+1$-dimensional Hilbert space $\Hilb$, $j \in \frac{1}{2}\N$.  Fix a maximal torus $K=S^1 \subset G$, and let $w \in \Hilb$
be the maximal weight vector of $K$, that is $\rho_j(t)w= e^{4\pi i j t}w$ for all $t \in K$. Consider an  $\cL(\Hilb)$-valued POVM
$W$ on $\Omega=G/K=\C P^1$ of the form $dW ([g]) = nP_{[g]}d\alpha([g])$, where $[g]$ stands for the class of $g \in G$ in $\Omega$, $\alpha$ is the $G$-invariant measure on $\Omega$ and $P_{[g]}$ is the rank one projector to $gw$. Note that $W$ is
nothing else but the Berezin-Toeplitz POVM $W_{p}$ from Example \ref{exam-CP1} with $p=2j$. We refer to \cite[Chapter 7]{CR} for the representation theoretic approach to coherent states and quantization. By using theory
of Gelfand pairs (cf. \cite[Chapter 3.F]{Diaconis}) one can check that the eigenvalues of the Berezin transform
are of the form $\lambda_\varphi = (u,\chi_\varphi)_{L_2}$ , where $\varphi$ runs over all irreducible unitary representations of $G$,
$\chi_\varphi$ stands for the character of $\varphi$ and $u(g)=n|\langle \rho(g) w, w\rangle|^2$. The multiplicity of $\lambda_\varphi$ equals $d_\varphi$, where $d_\varphi$ is the dimension of $\varphi$. In order to calculate $\lambda_\varphi$, recall that
\begin{equation}\label{eq-square}
\rho_j \otimes \rho_j = \bigoplus_{k=0}^{2j} \rho_k\;.
\end{equation}
Writing $v$ for the vector of weight $-j$ of $\rho_j$,
we have
$$u(g)=n \langle (\rho_j \otimes \rho_j) (g)\xi,\xi\rangle,\;\,\text{where}\;\; \xi = w\otimes v\;.$$
In order to complete this calculation, one has to decompose $\xi$ in the sense of \eqref{eq-square}.
This can be done with the help of explicit expressions for the Clebsch-Gordan coefficients, and it eventually
yields eigenvalues of the Berezin transform, including $\gamma_1 = j/(j+1)$ (cf. Example \ref{exam-CP1}),
in agreement with calculations by Zhang \cite{gap-representations} and Donaldson
 \cite[p.\,613]{D}. The details
will appear in MSc thesis by D. Shmoish.}
%\todo{add citations to the MSc thesis?-NOT YET}
\end{rem}

\section{Two concepts of quantum noise}\label{sec-noise}
In the present section we provide two different (and essentially tautological)
interpretations of the spectral gap in the context of quantum noise. In quantum measurement theory, there are two concepts of quantum noise: the increment of variance for unbiased approximate measurements as formalized by the noise operator, see below, and a non-unitary evolution of a quantum system described by a quantum channel (a.k.a. a quantum operation, see, e.g. \cite[Chapter 8]{NC}). Such a non-unitary evolution can be caused, for instance, by the quantum state reduction in the process of repeated quantum measurements. Interestingly enough, for pure POVMs, the spectral gap $\gamma(W)$ brings together these two seemingly remote concepts: it measures the minimal magnitude of noise production in the context of the noise operator, and it equals the spectral gap of the Markov chain modeling repeated quantum measurements.

Given an observable $A \in \cL(\Hilb)$, write $A= \sum \lambda_iP_i$ for its spectral decomposition, where $P_i$'s are pair-wise distinct orthogonal projectors. According to the statistical postulate of quantum mechanics, in a state $\rho$ the observable $A$ attains value $\lambda_i$ with probability $((P_i,\rho))$.
It follows that the expectation of $A$ in $\rho$ equals $\mathbb{E}(A,\rho)= ((A,\rho))$ and the variance
is given by $\mathbb{V}ar(A,\rho) = ((A^2,\rho)) - \mathbb{E}(A,\rho)^2$. In quantum measurement theory \cite{Busch-2}, a POVM $W$  represents a measuring device coupled with the system, while $\Omega$ is interpreted as the space of device readings. When the system is in a  state $\rho \in \cS(\Hilb)$, the probability of finding the device in a subset $X \in \cC$ equals $\mu_{\rho}(X):= ((W(X),\rho))$.
An experimentalist performs a measurement whose outcome, at every state $\rho$, is distributed in $\Omega$ according to the measure $\mu_{\rho}$. Given a function $\phi \in L_2(\Omega,\alpha)$ (experimentalist's choice), this procedure yields {\it an unbiased approximate measurement} of the quantum observable $A:=T(\phi)$. The expectation
of $A$ in every state $\rho$ equals $((A,\rho))$ and thus coincides with the one of the measurement procedure
given by $\int_\Omega \phi d\mu_\rho$ (hence {\it unbiased}), in spite of the fact that actual probability distributions determined by the observable $A$ (see above) and the random variable $(\phi,\mu_{\rho})$  could be quite different (hence {\it approximate}). In particular, in general, the variance increases under an unbiased approximate measurement:
\begin{equation}\label{eq-noise}
\mathbb{V}ar(\phi,\mu_{\rho}) = \mathbb{V}ar(A,\rho) + ((\Delta_W(\phi),\rho))\;,
\end{equation}
where
$
\Delta_W(\phi):= T(\phi^2)-T(\phi)^2
$
is {\it the noise operator}. This operator, which is known to be positive, measures the increment of the variance. We wish to explore the relative magnitude of this increment for the ``maximally mixed" state
$\theta_0= \frac{1}{n}\id$. To this end introduce {\it the minimal noise} of the POVM $W$ as $$\cN_{min}(W):= \inf_{\phi} \frac{((\Delta_W(\phi),\theta_0))}{\mathbb{V}ar(\phi,\mu_{\theta_0}) }\;,$$ where the infimum is taken over all non-constant functions $\phi \in L_2(\Omega,\alpha)$. It turns out that the minimal noise coincides with the spectral gap:
\begin{equation}\label{eq-Nmingamma}
\cN_{min}(W) = \gamma(W)\;.
\end{equation}
Indeed, since $\tr(T(\phi^2))= n(\phi,\phi)$, we readily get that
$$((\Delta_W(\phi),\theta_0))= ((\id- \cB)\phi,\phi)\;,$$
where $\cB= n^{-1}T^*T$ is the Markov operator given by \eqref{eq-b},
while $$\mathbb{V}ar(\phi,\mu_{\theta_0})= (\phi,\phi) - (\phi,1)^2\;.$$
Formula \eqref{eq-Nmingamma} follows from the variational principle.

\medskip

Suppose now that $\Omega \subset \cS(\Hilb)$ is a finite set consisting of rank one projectors $\{P_1,\dots, P_N\}$ and that $W$ is a pure POVM of the form $W(P_i) := n\alpha_i P_i$, where $\alpha$ is a probability measure on $\Omega$. Given a system in the original state $\rho$, the result of the measurement equals $P_j$ with probability $p= n\alpha_j((P_j,\rho))$. Recall the quantum state reduction (a.k.a. the wave function collapse) axiom for so called {\it L\"{u}ders} repeated quantum measurements: if the result of the measurement equals $P_j$, the system moves from the original state $\rho$ to the new (reduced) state $$\rho' = \frac{1}{p} W(P_j)^{1/2}\rho W(P_j)^{1/2}= P_j\;.$$
It follows that if the original state $\rho$ is chosen from $\Omega$, the repeated quantum measurements are described by the Markov chain
with transition probabilities $n\alpha_j((P_i,P_j))$. The corresponding Markov operator equals $\cB$,
and {\it the spectral gap of the Markov chain coincides with the spectral gap $\gamma(W)$} of the POVM $W$.
Furthermore, given an original state $\rho \in \Omega$, the expected value of the reduced state equals $\cE(\rho)$.
It follows that if $\gamma(W)>0$,  $\cE^k(\rho)$, $k \to \infty$ converge to the maximally mixed quantum state $\frac{1}{n}\id$ at the exponential rate $\sim (1-\gamma(W))^k$. In other words, {\it for pure POVMs the spectral gap controls the convergence rate to the maximally mixed state under repeated quantum measurements}.

\medskip
\noindent
{\bf Acknowledgment.} We are grateful to D.~Aharonov,  S.~Artstein-Avidan, J.-M. Bismut, L.~Charles, G.~Kalai, G.~Kindler, M.~Krivelevich, S.~Nonnenmacher, Y.~Ostrover, A.~Reznikov, A.~Veselov and S.~Weinberger for useful discussions on various aspects of this work. We thank O.~Oreshkov for bringing our attention to the paper \cite{OC} and illuminating comments, as well as I.~Polterovich for very helpful remarks. We thank A.~Deleporte for sharing with us his ideas on the proof of Theorem \ref{thm-quant}, and J.~Fine for providing us highly useful insights and references on Donaldson's program. Finally, we thank the anonymous referee for numerous useful comments.

\medskip

\begin{tabular}{l}
School of Mathematical Sciences\\
Tel Aviv University\\
Israel\\
\end{tabular}
\end{document}